\renewcommand\@oddfoot{
	\hfil
	\rlap{%
		\vtop{%
			\vskip10mm
			\colorbox[rgb]{0.99,0.78,0.07}
			{\@tempdima\evensidemargin
				\advance\@tempdima1in
				\advance\@tempdima\hoffset
				\hb@xt@\@tempdima{%
					\textcolor{darkgray}{\normalsize\sffamily
						\bfseries\quad
						\expandafter\textsolittle\expandafter{
							arXiv.org}}%
					\strut\hss}}}}
}
\theoremstyle{plain}
\tikzset{Tip/.tip={Triangle[angle=30:5pt]}}
\title{Efficient Simulation of Combinational Circuits in MapReduce}
\titlerunning{Efficient Simulation of Combinational Circuits in MapReduce}
\title{Efficient Implementation of Combinational Circuits in MapReduce}
\titlerunning{Efficient Implementation of Combinational Circuits in MapReduce}
\title{Efficient Circuit Simulation in MapReduce}
\titlerunning{Efficient Circuit Simulation in MapReduce}
\author{Fabian Frei}{Department of Computer Science, ETH Zürich, Universitätstrasse 6, CH-8006 Zürich, Switzerland}{fabian.frei@inf.ethz.ch}{}{}
\author{Koichi Wada}{Department of Applied Informatics, Hosei University, 3-7-2 Kajino, 184-8584 Tokyo, Japan}{wada@hosei.ac.jp}{}{Research done in part during a supported visit at ETH Zürich and partly supported by JSPS KAKENHI No. 17K00019 and by the Japan Science and Technology Agency (JST) SICORP (Grant\#JPMJSC1806).}
\g@addto@macro\bfseries{\boldmath}
\newcommand{\INq}{\mathcal{X}_q}
\newcommand{\partialassignmentq}{\assignment_q}
\newcommand{\level}{\textnormal{level}}
\newcommand{\depth}{\textnormal{depth}}
\newcommand{\val}{\textnormal{val}}
\newcommand{\size}{\textnormal{size}}
\newcommand{\up}{\textnormal{up}}
\newcommand{\down}{\textnormal{down}}
\newcommand{\PBP}{\textnormal{PBP}}
\newcommand{\localy}{y^{\textnormal{local}}}
\newcommand{\assignment}{\alpha}
\newcommand{\formeralpha}{\gamma}
\newcommand{\id}{\textnormal{id}}
\newcommand{\To}{\textnormal{to}}
\newcommand{\From}{\textnormal{from}}
\newcommand{\In}{\textnormal{in}}
\newcommand{\Out}{\textnormal{out}}
\newcommand{\subprogramNumber}{\#S} %
\newcommand{\subcircuitDepth}{s}
\renewcommand{\epsilon}{\varepsilon}
\newcommand{\SPACE}{\mathcal{SP}\hspace{-.21em}\mathcal{A}\hspace{.08em}\mathcal{C}\hspace{-.02em}\mathcal{E}}
\newcommand{\MRC}{\mathcal{M}\hspace{-.02em}\mathcal{R}\hspace{.12em}\mathcal{C}}
\newcommand{\DMRC}{\mathcal{D}\hspace{-.05em}\mathcal{M}\hspace{-.02em}\mathcal{R}\hspace{.1em}\mathcal{C}}
\newcommand{\NC}{\mathcal{N}\hspace{-.07em}\mathcal{C}}
\newcommand{\AC}{\mathcal{A}\hspace{.1em}\mathcal{C}}
\newcommand{\EREW}{\mathcal{E}\hspace{-.02em}\mathcal{R}\hspace{.07em}\mathcal{E}\hspace{-.02em}\mathcal{W}}
\newcommand{\CREW}{\mathcal{C}\hspace{-.00em}\mathcal{R}\hspace{.07em}\mathcal{E}\hspace{-.02em}\mathcal{W}}
\newcommand{\CRCW}{\mathcal{C}\hspace{-.00em}\mathcal{R}\hspace{.07em}\mathcal{C}\hspace{-.02em}\mathcal{W}}
\newcommand{\basis}{\mathcal{B}}
\newcommand{\mem}{m}
\newcommand{\dum}{\textnormal{dummy}}
\newcommand{\psum}{\textnormal{p-sum}}
\newcommand{\last}{\textnormal{last}}
\newcommand{\N}{\mathbb{N}}
\newcommand{\range}[1]{[#1]}
\newcommand{\BranchingProgram}{B}
\newcommand{\processor}{\textrm{P}\!}
\newcommand{\register}{\textrm{R}}
\keywords{MapReduce, Circuit Complexity, Parallel Algorithms, Nick's Class \texorpdfstring{$\NC$}{NC}}
\authorrunning{F.\,Frei, K.\,Wada}
\begin{document}
	
	\maketitle
	
	\begin{abstract}
		The MapReduce framework has firmly established itself as one of the most widely used parallel computing platforms for processing big data on tera- and peta-byte scale. 
		Approaching it from a theoretical standpoint has proved to be notoriously difficult, however. 
		In continuation of Goodrich et al.'s early efforts, explicitly espousing the goal of putting the MapReduce framework on footing equal to that of long-established models such as the PRAM, 
		we investigate the obvious complexity question of how the computational power of MapReduce algorithms compares to that of combinational Boolean circuits commonly used for parallel computations. 
		Relying on the standard MapReduce model introduced by Karloff et al.\ 
		a decade ago, we develop an intricate simulation technique to show 
		that any problem in $\mathcal{NC}$ (i.e., a problem solved by a logspace-uniform family of Boolean circuits of polynomial size and a depth polylogarithmic in the input size) 
		can be solved by a MapReduce computation in $O(T(n)/\log n)$ rounds, 
		where $n$ is the input size and $T(n)$ is the depth of the witnessing circuit family. 
		Thus, we are able to closely relate the standard, uniform $\NC$ hierarchy modeling parallel computations to the deterministic MapReduce hierarchy $\DMRC$ by proving that $\NC^{i+1}\subseteq\DMRC^i$ for all $i\in\N$. 
		Besides the theoretical significance, this result has important applied aspects as well. In particular, we show for all problems in $\NC^1$---many practically relevant ones such as integer multiplication and division, the parity function, and recognizing balanced strings of parentheses being among these---how to solve them in a constant number of deterministic MapReduce rounds. 
	\end{abstract}

	\section{Introduction}
	Despite the overwhelming success of the MapReduce framework in the big data industry and the great attention it has garnered ever since its inception over a decade ago, theoretical results about it have remained scarce in the literature. 
	In particular, it is very natural to ask how powerful exactly MapReduce computations are in comparison to the traditional models of parallel computations based on circuits, a question that has practical implications as well. 
	The answers have proved to be very elusive, however. 
	In this paper, we show how MapReduce programs can efficiently simulate circuits used for parallel computations, thus tying these two worlds together more tightly. 
	
	In this section we first provide an introduction to the concept of MapReduce, then present the related work, and finally describe our contribution. 
	In Section~\ref{Sec2}, we will formally define the traditional models of parallel computing and the MapReduce model. In Section~\ref{Sec3}, we then derive our main results. 
	Section~\ref{Sec4} concludes the paper with a short summary and a discussion of our findings, outlining opportunities for future research. 
	
	\subsection{Background and Motivation}
	In recent years the amount of data available and demanding analysis has grown at an astonishing rate. The amount of memory in commercially available servers has also grown at a remarkable pace in the past decade; it is now exceeding tera- and even peta-bytes.  Despite the considerable advances in the availability of computational power, traditional approaches remain insufficient to cope with such huge amounts of data. A new form of parallel computing has become necessary to deal with these enormous quantities of available data. 
	The MapReduce framework has been attracting great interest due to its suitability for processing massive data-sets. This framework was originally developed by Google~\cite{DG}, but an open source implementation called Hadoop has recently been developed and is currently used by over a hundred companies, including Yahoo!, Facebook, Adobe, and IBM~\cite{W}. 
	
	MapReduce differs substantially from previous models of parallel computation in that it combines aspects of both parallel and sequential computation. Informally, a MapReduce computation can be described as follows.
	
	The input is a set of \emph{key-value pairs} $\langle k;v \rangle{}$. 
	In a first step, the {\em map step}, each of these key-value pairs is separately and independently transformed into an entire set of key-value pairs by a {\em map function} $\mu$. 
	In the next step, the {\em shuffle step}, we collect all key-value pairs from the sets that have been produced in the previous step, group them by their keys, and merge each group $\{\langle k;v_1 \rangle{}, \langle k;v_2 \rangle{}, \ldots\}$ of pairs containing the same key into a single key-value pair $\langle k;\{v_1,v_2,\ldots\} \rangle{}$ consisting of said key and a list of the associated values. 
	In a third step, the {\em reduce step}, a {\em reduce function} $\rho$ transforms the list of values in each key-value pair $\langle k;\{v_1,v_2,\ldots\} \rangle{}$ into a new list $\{v_1',v_2',\dots\}$. Again, this is done separately and independently for each pair. The final output consists of the pairs $\{\langle k;v_1' \rangle{}, \langle k;v_2' \rangle{}, \ldots\}$ for each key $k$. The different instances that implement the reduce function for the different groups of pairs are called reducers. Analogously, mappers are instances of the map function. 
	
	The three steps described above constitute one {\em round} of the MapReduce computation and transform the input set into a new set of key-value pairs. 
	A complete MapReduce computation consists of any given number of rounds and acts just as the composition of the single rounds. The shuffle step works the same way every time; the map and reduce functions, however, may change from round to round. 
	A MapReduce computation with $R$ rounds is therefore completely described by a list $\mu_1,\rho_1,\mu_2,\rho_2,\ldots,\mu_R,\rho_R$ of map and reduce functions. 
	In both the map step and the reduce step, the input pairs can be processed in parallel since the map and reduce functions act independently on the pairs and groups of pairs, respectively. 
	These steps therefore capture the parallel aspect of a MapReduce computation, whereas the shuffle step enforces a partial sequentiality since the shuffled pairs can be output only once the previous map step is completed in its entirety. 
	
	The MapReduce paradigm has been introduced in \cite{DG} in the context of algorithm design and analysis. A treatment as a formal computational model, however, was missing in the beginning. 
	Later on, a number of models have emerged to deal more rigorously with algorithmic issues \cite{FMSSS,GSZ,KSV,P,PPRSU}. %
	In this paper, our interest lies in studying the MapReduce framework from a standpoint of parallel algorithmic power by comparing it to standard models of parallel computation such as Boolean circuits and parallel random access machines (PRAMs). A PRAM can be classified by how far simultaneous access by processors to its memory is restricted; it can be CRCW, EREW, CREW, or ERCW, where R, W, C, and E stand for Read, Write, Concurrent, and Exclusive, respectively \cite{CLR}. 
	If concurrent writing is allowed, we need to further specify how parallel writes by multiple processors to a single memory cell are handled. 
	The most natural choice is arguably that every memory cell contains after each time step the total of all numbers assigned to it by different processors during that step. In fact, all constructions in this paper work with this treatment of simultaneous writes; we thus generally assume this model. If the context warrants it, we speak of a Sum-CRCW 
	to make this assumption explicit. 
	
	\subsection{Related Work}
	We briefly present and discuss the following known results on the comparative power of the MapReduce framework and PRAM models.  
	
	\begin{enumerate}
		\item A $T$-time EREW-PRAM algorithm can be simulated by an $O(T)$-round MapReduce algorithm, where each reducer uses memory of constant size and an aggregate memory proportional to the amount of shared memory required by the PRAM algorithm~\cite{GSZ,KSV}.
		\item A $P$-processor, $M$-memory, $T$-time EREW-PRAM algorithm can be simulated by an $O(T)$-round, $(P+M)$-key MUD algorithm with a communication complexity of $O(\log(P+M))$ bits per key, where a MUD (massive, unordered, distributed) algorithm is a data-streaming MapReduce algorithm in the following sense: The reducers do not receive the entire list of values associated with a given key at once, but rather as a stream to be processed in one pass, using only a small working memory determining the communication complexity~\cite{FMSSS}.
		\item When using MapReduce computations to simulate a CRCW-PRAM instead, again with $P$ processors and $M$ memory, we incur an $O(\log_{m}(P+M))$ slowdown compared to the simulations above, where $m$ is an upper bound on each reducer's input and output~\cite{GSZ}. 
	\end{enumerate}
	
	These results imply that any problem solved by a PRAM with a polynomial number of processors and in polylogarithmic %
	time $T$ can be simulated by a MapReduce computation with 
	an amount of memory equal to the number of PRAM processors, and in a number of rounds equal to the computation time of even the powerful CRCW-PRAM.
	Since the class of problems solved by CRCW-PRAMs in time $T\in O(\log^{i}n)$ is equal to the class of problems solved by 
	families of polynomial-sized combinational circuits consisting of gates with unbounded fan-in and fan-out and time $T\in O(\log^{i}n)$ (often denoted $\AC^i$)~\cite{AB}, 
	these circuits can be simulated in a MapReduce computation with a number of rounds equal to the time required by these circuits.
	
	Since the publication of the seminal paper by Karloff et al.~\cite{KSV}, extensive effort has been spent on developing efficient algorithms in MapReduce-like frameworks~\cite{CKLYBNO,FEGK,KMV,KR,SASV}. Only few relationships between the theoretical MapReduce model~\cite{KSV} and classical complexity classes have been established, however; 
	for example, any problem in $\SPACE(o(\log n))$ 
	can be solved by a MapReduce computation with a constant number of rounds~\cite{FKLRT}.
	
	Recently, Roughgarden et al.~\cite[Theorem 6.1]{RVW} described a short and simple way of simulating $\NC^1$ circuits with a certain class of models of parallel computation. 
	The constraints of these models, namely the number of machines and the memory restrictions, are exactly tailored to allow for this general simulation method, however. In particular, it crucially relies on the fact that all models of this class are more powerful than the MapReduce model in that they all grant us a number of machines that is polynomial in the input size; this makes it possible to just dedicate one machine to each of the circuit gates. Such a simple simulation is impossible with MapReduce computations since the standard model due to Karloff only allows for a sublinear number of machines with sublinear memory. 
	
	\subsection{Contribution}
	We prove that
	$\NC^{i+1} \subseteq \DMRC^i$ for all $i\in\{0,1,2,\dots\}$, where $\DMRC^{i}$ is the set of problems solvable by a deterministic MapReduce computation in $O(\log^in)$  rounds. %
	In the case of $\NC^1 \subseteq \DMRC^0$, which already opens up a plethora of applications on its own, the result holds for every possible choice of $\epsilon$, that is, for $0<\epsilon\le 1/2$. The higher levels of the hierarchy require an entirely different proof method, which yields the result for 
	$0<\epsilon<1/2$. 
	
	This is a substantial improvement over the previous results that only imply, as outlined above, the far weaker claim $\AC^i \subseteq \MRC^i$. The case $i=1$ is of particular practical interest since $\NC^{1}\setminus\AC^{0}$ contains plenty of relevant problems such as integer multiplication and division, the parity function, and the recognition of the Dyck languages $D_n$, which contain all balanced strings of $n$ different types of parentheses; see~\cite{AB}. Our results show how to solve all of these problems with a deterministic MapReduce program in a constant number of rounds. 
	
	\section{Preliminaries}\label{Sec2}
	We denote by $\N=\{0,1,2,\ldots\}$ the natural numbers including zero and let $\N_+=\N \setminus \{0\}$. Moreover, 
	we let $\range{i}=\{0,1,\ldots,i-1\}$ denote the $i$ first natural numbers for any $i\in \N_+$. 
	
	\subsection{Models of Parallel Computation}
	In this section, we define the common complexity classes capturing the power of parallel computation; most prominently the $\NC$ hierarchy. %
	
	A finite set $\basis=\{f_0,\ldots,f_{|\basis|-1}\}$ of Boolean functions 
	$f_i: \{0,1\}^{n_i} \rightarrow \{0,1 \}$ with $n_i\in\N$ for every $i\in[|\basis|]$ is called a {\em basis}. 
	For every $n,m \in \N_+$, a {\em (Boolean) circuit} $C$ over the basis $\basis$ with $n$ inputs
	and $m$ outputs is a directed acyclic graph that contains $n$ \emph{sources} (nodes with no incoming edges),
	called the {\em input nodes}, and $m$ \emph{sinks} (nodes with no outgoing edges). 
	The \emph{fan-in} of a node is the number of incoming edges, the \emph{fan-out} is the number of outgoing edges. 
	Nodes that are neither sources nor sinks are called \emph{gates}. Each gate is labeled with a function $f_i\in \basis$ and has fan-in $n_i$. It computes $f_i$ on the input given by the incoming edges and outputs the result (either 0 or 1) to the outgoing edges. 
	A basis $\basis$ is said to be {\em complete} if for every Boolean function $f$, we can construct a circuit of the described form that computes $f$ over the basis $\basis$. In the following, we use the complete basis $\basis=\{ \vee, \wedge, \neg \}$. %
	
	The {\em size} of a circuit $C$, denoted by $\size(C)$, is the total number of edges it contains. 
	The {\em level} of a node $v$ in a circuit $C$, denoted $\level(v)$, is defined recursively: The level of a sink is $0$, and the level of a node $v$ with nonzero fan-out is one greater than the maximum of the levels of the outgoing neighbors of $v$. 
	The {\em depth} of $C$, denoted $\depth(C)$, is the maximum level across all nodes in $C$. 
	
	A function $f : \{0,1\}^{*} \rightarrow \{0,1\}^{*}$ is {\em implicitly logspace computable} if the two mappings $(x,i)\mapsto \chi_{i\le |f(x)|}$, where $\chi$ denotes the characteristic function, and $(x, i) \mapsto (f(x))_i$ are computable using logarithmic space.
	A circuit family $\{C_n\}_{n=0}^\infty$ is {\em logspace-uniform} if there is an implicitly logspace computable function mapping $1^n$ to the description of the circuit $C_n$. 
	It is known that the class of languages that have logspace-uniform circuits of polynomial size equals $\mathcal{P}$ \cite[Thm. 6.15]{AB}.
	
	For any $i\in\N$, the complexity class $\NC^i$ contains a language $L$ exactly if there is a constant $c$ and a logspace-uniform family of circuits $\{C_n\}_{n=0}^\infty$ recognizing $L$ such that $C_n$ has size $O(n^c)$, depth $O(\log^in)$, and all nodes have fan-in at most 2. 
	The union is Nick's class $\NC=\bigcup_{i=0}^\infty\NC^i$. 
	We mention that there is an analogous definition of classes $\text{Nonuniform-}\NC^i$ that do not require logspace uniformity from the circuits; they constitute a different hierarchy. %
	
	The complexity classes $\AC^i$ and $\AC=\bigcup_{i=0}^\infty\AC^i$ are defined exactly as $\NC^i$ and $\NC$, except that the restriction of the maximal fan-in to at most 2 is omitted. Nevertheless, the restriction on the circuit size imply that the fan-in of a node is bounded by a polynomial in $n$. The OR gates and AND gates in such a circuit can therefore be replaced by trees of gates of fan-in at most 2 with a depth in $O(\log n)$. It follows that $\AC^i \subseteq \NC^{i+1}$ for all $i\in\N$ and thus $\NC=\AC$. (Analogously, we see why Nick's class can also be defined, as it often is, by upper-bounding the fan-in by an arbitrary constant greater than 2.) The inclusion $\NC^i\subseteq\AC^i$ for every $i\in\N$ is immediate from the definition. 
	The first two inclusions of the resulting chain are known to be strict---namely, we have $\NC^0 \subsetneq \AC^0 \subsetneq \NC^1$; see \cite{AB}. 
	
	Finally, we summarize the known results on how the classes of languages recognized by different PRAMs fit into the two hierarchies of $\NC$ and $\AC$. 
	Let $\EREW^i$, $\CREW^i$ and $\CRCW^i$ denote the sets of problems of size $n$ computed by EREW-PRAMs, CREW-PRAM, and CRCW-PRAMs, respectively, with a polynomial number of processors in $O(\log^in)$ time. 
	For every $i\in\N$, we have 
	$\NC^i \subseteq \EREW^i \subseteq \CREW^i \subseteq \CRCW^i = \AC^i \subseteq \NC^{i+1}$; see \cite{AB}.
	
	\subsection{The MapReduce Model}
	In this section we describe the standard MapReduce model as proposed Karloff et al.~\cite{KSV}. 
	It defines the notions of {\em map functions} and {\em reduce functions}, which are summarized under the term {\em primitives}. Roughly speaking, a MapReduce computing system executes primitives, interleaved with so-called \emph{shuffle} operations. The basic data unit in these computations is an ordered pair $\langle key;value\rangle$, called \emph{key-value pair}. 
	In general, keys and values are just binary strings, allowing us to encode all the usual entities. 
	
	A map function is a (possibly randomized) function that takes as input a single key-value pair and outputs a finite multiset of new key-value pairs. 
	A reduce function (again, possibly randomized) takes instead an entire set of key-value pairs $\{\langle k;v_{k,1}\rangle, \langle k;v_{k,2}\rangle , \ldots\}$, where all the keys are identical, and outputs a single key-value pair $\langle k;v'\rangle$ with that same key. 
	
	A MapReduce program is nothing else than a sequence 
	$\mu_1, \rho_1, \mu_2, \rho_2, \ldots, \mu_R, \rho_R$ of map functions $\mu_r$ and
	reduce functions $\rho_r$. The input of this program is a multiset $U_0$ of key-value pairs. 
	For each $r\in \{1,\ldots,R\}$, a map step, a shuffle step and a reduce step are successively executed as follows:
	
	\begin{enumerate}
		\item {\bf Map step:} Each pair $\langle k;v\rangle$ in $U_{r-1}$ is given as input to an arbitrary instance of the map function $\mu_r$, which then produces a finite sequence  %
		of pairs. The multiset of all produced pairs is denoted by $U'_r$. 
		\item {\bf Shuffle step:} For each key $k$, let $V_{k,r}$ be the multiset of all values $v_i$ such that $\langle k,v_i\rangle{}$. The MapReduce system automatically constructs the multiset $V_{k,r}$ from $U'_r$ in the background. 
		\item {\bf Reduce step:} For each key $k$, a reducer (i.e.,  an instance calculating the reduce function $\rho_r$)  receives $k$ and the elements of $V_{k,r}$ in arbitrary order. We usually write such an input as a set of key-value pairs that all have key $k$. The reducer calculates from $V_{k,r}$ another multiset $V_{k,r}'$, which is output in the form of one key-value pair $\langle k,v'\rangle$ for  each $v'$ in $V_{k,r}'$.
	\end{enumerate}
	
	Fix any $\epsilon$ with $0<\epsilon\le 1/2$ and denote the size of the MapReduce program's input by $N$. For every $i\in\N$, a problem is in $\MRC^{i}$ if and only if if there is a MapReduce program $\mu_1, \rho_1, \mu_2, \rho_2, \ldots, \mu_R, \rho_R$ satisfying the following properties: 
	\begin{enumerate}
		\item It outputs a correct answer to the problem with probability at least $3/4$. 
		\item The number of rounds of the MapReduce program, $R$, is in $O(\log^{i}N)$. 
		\item The potentially randomized primitives (i.e., all map and reduce functions) are computable by a RAM with $O(\log N)$-bit words using $O(N^{1-\epsilon})$ space and time polynomial in $N$.  
		\item 
		The pairs produced by the map functions can be stored in $O(N^{2(1-\epsilon)})$ space. 
	\end{enumerate}
	
	A MapReduce program satisfying these conditions is called an $\MRC^{i}$-algorithm. 
	Note that due to the last condition it is impossible to even store the input unless $2(1-\epsilon) \geq 1$, which explains the restriction to $0<\epsilon\le 1/2$. 
	As with $\NC$, we define the union class $\MRC=\bigcup_{i=0}^\infty\MRC^i$. 
	Requiring all primitives to be deterministic yields the analogous hierarchy of $\DMRC=\bigcup_{i=0}^\infty\DMRC^i$. 
	Note that we obviously have $\DMRC^i\subseteq\MRC^i$ for all $i\in\N$. 
	We will often refer to the single rounds of such MapReduce algorithms as $\MRC$-rounds and $\DMRC$-rounds, respectively. 
	
	\section{Simulating Parallel Computations by MapReduce}\label{Sec3}%
	We are now going to prove our two main results $\NC^1\subseteq\DMRC^0$ for $0<\epsilon \leq 1/2$ and $\NC^{i+1}\subseteq\DMRC^i$ for all $i\in\N_+$ and $0<\epsilon<1/2$ in Sections~\ref{Sec31} and \ref{Sec32}, respectively. 
	In both cases, we will be making use of the technical tool derived in Section~\ref{technicaltools} and obtain the results by showing how to use MapReduce computations for two different, delicate simulations. 
	For the inclusion $\NC^1\subseteq\DMRC^0$, we simulate width-bounded branching programs that are equivalent to the respective circuits by Barrington's classical theorem~\cite{B}, whereas for the higher levels of the hierarchy, we directly simulate the combinational circuits themselves.
	
	\subsection{A Technical Tool}\label{technicaltools}
	Goodrich et al.~\cite{GSZ} parametrize MapReduce algorithms, on the one hand, by the memory limit $m$ for the input/output buffer of the reducers and, on the other hand, by the {\em communication complexity} $K_r$ of round $r$, that is, the total size of inputs and outputs for all mappers and reducers in round $r$. We state a useful result from~\cite{GSZ}.
	
	\begin{theorem}\label{CRCW-GSZ}
		Any CRCW-PRAM algorithm using $M$ total memory, $P$ processors and $T$ time can be simulated in $O(T\log_{\mem}P)$
		deterministic MapReduce-rounds with communication complexity $K_r\in O((M+P)\log_{\mem}(M+P))$.%
	\end{theorem}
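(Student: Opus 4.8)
The plan is to simulate the PRAM one time step after another, maintaining throughout a collection of key-value pairs that encodes the entire PRAM configuration: one pair $\langle j; c_j\rangle$ for each of the $M$ memory cells (keyed by the cell index $j$, with $c_j$ its current content) and one pair $\langle i; s_i\rangle$ for each of the $P$ processors (keyed by the processor index $i$, with $s_i$ its current local state, including its program counter). A single PRAM time step is emulated by three consecutive \emph{super-phases} --- a read phase, a compute phase, and a write phase --- each of which is either purely local or a routing/aggregation task that the MapReduce sorting and prefix-sum routines of~\cite{GSZ} carry out in $O(\log_{\mem}(M+P))$ rounds. Since in the regime of interest the memory $M$ is polynomially bounded in the number $P$ of processors, we have $\log_{\mem}(M+P)=O(\log_{\mem}P)$; hence one PRAM step costs $O(\log_{\mem}P)$ MapReduce rounds, and $T$ steps cost $O(T\log_{\mem}P)$ rounds in total.

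For the \emph{read phase}, each processor pair emits, by one application of the map function, a read-request pair keyed by the address it is about to read and tagged with its own index, while the cell pairs remain keyed by their addresses. The difficulty here --- and this is where the standard MapReduce model genuinely departs from the trivial ``one reducer per cell'' simulations available in more generous models --- is that a single popular cell may be requested by as many as $P$ processors, far exceeding the buffer bound $\mem$ of any one reducer, so we cannot gather all requests for a cell in one place. Instead I would sort the combined multiset of cell pairs and request pairs by address, breaking ties so that the cell pair precedes its requests; after this the requests for each cell form a contiguous block, and a segmented broadcast of fan-out $\mem$ and depth $O(\log_{\mem}(M+P))$ --- a prefix-type computation on the sorted sequence, realizable by the known MapReduce routines --- pushes each cell's content forward across its block. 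This settles concurrent reads.

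The \emph{compute phase} is purely local: after the read phase every processor pair carries both its own state $s_i$ and the value it just read, so a single application of the reduce function lets each processor compute its next state and, when it writes this step, the address--value token $\langle a; w\rangle$ it deposits. The \emph{write phase} is dual to the read phase: sort the write tokens together with the cell pairs by address and run a segmented prefix-sum of fan-in $\mem$ and depth $O(\log_{\mem}(M+P))$ that, under the Sum-CRCW convention, adds up all values destined for a given cell and folds the total into that cell's content (other concurrent-write conventions are obtained by substituting the combining operation). Each super-phase therefore runs in $O(\log_{\mem}(M+P))$ rounds while keeping only $O(M+P)$ key-value pairs alive at any level of the trees, and tracking the number and size of the pairs through the sorting and prefix-sum primitives yields the per-round communication complexity $K_r\in O((M+P)\log_{\mem}(M+P))$ claimed in the statement.

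The main obstacle is precisely this balanced emulation of concurrent memory access under the reducer buffer bound $\mem$: one must realize the broadcast and aggregation trees as legitimate map and reduce functions that never expose more than $O(\mem)$ items to a single reducer, keep the number of in-flight tokens per tree level linear in $M+P$, and correctly stitch together the $O(\log_{\mem}(M+P))$ levels. Everything else --- encoding PRAM words and transition functions, composing the per-step simulations into one MapReduce program, and extracting the final output --- is routine bookkeeping, after which the round and communication bounds follow by multiplying the per-step costs by $T$.
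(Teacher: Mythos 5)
The paper does not actually prove this theorem: it is imported verbatim from Goodrich et al.~\cite{GSZ} (``We state a useful result from~\cite{GSZ}''), so there is no in-paper argument to measure yours against. Your reconstruction follows the same overall strategy as the cited source---simulate the PRAM one step at a time, keep one key-value pair per memory cell and per processor, and resolve concurrent access with $\mem$-ary broadcast and combining trees---and the read/compute/write decomposition, the use of the Sum-CRCW combining convention in the write phase, and the per-round communication accounting are all in order.

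There is, however, one concrete gap relative to the theorem \emph{as stated}. You resolve contention by sorting all $M+P$ tokens by address in every simulated step; that sort itself costs $O(\log_{\mem}(M+P))$ rounds, so your argument yields $O(T\log_{\mem}(M+P))$ rounds, and you recover the claimed $O(T\log_{\mem}P)$ only by assuming that $M$ is polynomially bounded in $P$---a hypothesis that does not appear in the theorem. The device in~\cite{GSZ} that removes this dependence on $M$ is the ``invisible funnel'': a conceptual $\mem$-ary tree is rooted at each memory cell with its $P$ leaves indexed by processor number, so each processor can compute its position in the tree \emph{locally} (no per-step sorting), only nonempty tree nodes are materialized, and each tree has depth $O(\log_{\mem}P)$ because a cell receives at most $P$ requests regardless of how large $M$ is. You should either adopt that routing scheme or state the extra hypothesis explicitly. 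For the use this paper makes of the theorem (Corollary~\ref{CRCW-Kar}, where $M$ and $P$ are polynomial in $N$ and the base of the logarithm is $N^{1-\epsilon}$), both logarithms are $O(1)$, so nothing downstream is affected either way.
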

	
	We denote by $N$ the size of the smallest circuit representation of the CRCW-PRAM algorithm (i.e., its number of edges) plus the size of its input. 
	Taking into account our requirements $\mem\in O(N^{1-\epsilon})$ and $K_r\in O(N^{2(1-\epsilon)})$, we obtain the following a technical tool, which will prove to be useful in our endeavor.  
	
	\begin{corollary}\label{CRCW-Kar}
		Any CRCW-PRAM algorithm using $M$ total memory, $P$ processors and $T$ time
		can be simulated in $O(T\log_{N^{1-\epsilon}}P)$ 
		$\DMRC$-rounds if $(M+P)\log_{N^{1-\epsilon}}(M+P)\in O(N^{2(1-\epsilon)})$.
	\end{corollary}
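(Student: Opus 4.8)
The plan is to derive the corollary from Theorem~\ref{CRCW-GSZ} by a single judicious choice of the memory parameter $\mem$. The $\MRC$ model permits each reducer's input/output buffer to have size $O(N^{1-\epsilon})$, so I would instantiate Theorem~\ref{CRCW-GSZ} with $\mem=\lfloor N^{1-\epsilon}\rfloor$; this value exceeds $1$ for all sufficiently large $N$, which is the only interesting case, so $\log_{\mem}$ is well defined. Substituting $\mem=\Theta(N^{1-\epsilon})$ into the round bound $O(T\log_{\mem}P)$ and observing that changing the logarithm's base from $N^{1-\epsilon}$ to $\mem$ alters it by only a multiplicative constant, so that $\log_{\mem}P=\Theta(\log_{N^{1-\epsilon}}P)$, yields the claimed round count $O(T\log_{N^{1-\epsilon}}P)$.

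Next I would verify the two resource constraints imposed by the $\MRC$ definition. For the communication complexity, Theorem~\ref{CRCW-GSZ} guarantees $K_r\in O((M+P)\log_{\mem}(M+P))$, which after the substitution becomes $O((M+P)\log_{N^{1-\epsilon}}(M+P))$; the hypothesis of the corollary states precisely that this quantity lies in $O(N^{2(1-\epsilon)})$, so the total size of the pairs produced by the map functions respects the bound required of an $\MRC$-algorithm. For the per-primitive budget, I would argue that each map and reduce function arising in the simulation of Theorem~\ref{CRCW-GSZ} handles only a bounded number of key-value pairs, each of size $O(\mem)=O(N^{1-\epsilon})$, and performs on them only simple operations (such as forming and reading prefix sums and routing values between the simulated PRAM's memory cells and processors); it is therefore computable by a RAM with $O(\log N)$-bit words in $O(N^{1-\epsilon})$ space and in time polynomial in $N$, while determinism is inherited directly from the construction. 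Together these checks certify that the simulation constitutes a $\DMRC$ computation with the stated number of rounds.

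I expect the only genuinely delicate point to be the last one: confirming that the primitives underlying Theorem~\ref{CRCW-GSZ} really meet the $\DMRC$ word-size, space, and time bounds once their inputs are capped at size $\mem=\Theta(N^{1-\epsilon})$, rather than merely being ``small'' in the informal sense used in~\cite{GSZ}. This amounts to a careful but routine inspection of the map and reduce functions employed there; the remaining bookkeeping is immediate.
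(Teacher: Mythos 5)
Your proposal is correct and follows exactly the route the paper takes: the corollary is obtained from Theorem~\ref{CRCW-GSZ} by setting $\mem=\Theta(N^{1-\epsilon})$, reading off the round bound, and matching the communication-complexity guarantee $K_r\in O((M+P)\log_{\mem}(M+P))$ against the $O(N^{2(1-\epsilon)})$ storage requirement of the $\MRC$ model, which is precisely the hypothesis of the corollary. The paper states this in a single sentence; your additional check that the individual primitives respect the per-reducer space and time budgets is a sensible elaboration of the same argument, not a different approach.
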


	\subsection{Simulating \texorpdfstring{$\NC^1$}{NC1}}\label{Sec31}
	It is known that $\text{Nonuniform-}\NC^1$ is equal to the class of languages recognized by nonuniform width-bounded branching programs. A careful inspection of the proof due to Barrington \cite{B}---crucially relying on the non-solvability of the permutation group on 5 elements---reveals that it naturally translates to the uniform analogue: Our uniform class $\NC^1$ is identical with the class of languages recognized by uniform width-bounded branching programs. In order to prove $\NC^1 \subseteq \DMRC^0$, it therefore suffices to show how to simulate such branching programs by appropriate MapReduce computations with a constant number of rounds. 
	
	We first define width-bounded branching programs. Let $n,w\in\N_+$. The input to the program is an assignment $\alpha$ to $n$ Boolean variables $\mathcal{X}=\{x_0,\ldots,x_{n-1}\}$. An {\em instruction} or {\em line} of the program is a triple $(x_i,f,g)$, where $i$ is the \emph{index} of an input variable $x_i\in\mathcal{X}$ and $f$ and $g$ are endomorphisms of $\range{w}$. An instruction $(x_i,f,g)$ \emph{evaluates} to $f$ if $\assignment(x_i)=1$ and to $g$ if $\assignment(x_i)=0$. 
	A {\em width-$w$ branching program} 
	of length $t$ is a sequence of instructions $(x_{i_j},f_j,g_j)$ for $j\in\range{t}$. 
	We also refer to the $t$ instructions as the lines of the program. 
	Given an assignment $\assignment$ to $\mathcal{X}$%
	, a branching program $\BranchingProgram$ yields a function $\BranchingProgram(\assignment)$ that is the composition of the functions to which the instructions evaluate. 
	
	To recognize a language $L \subseteq \{0,1\}^*$, we need a family $(\BranchingProgram_n)_{n=0}^\infty$ of width-$w$ branching programs with $\BranchingProgram_n$ taking $n$ Boolean inputs. 
	We say that $L$ is recognized by $\BranchingProgram_n$ if there is, for each $n\in\N$, a set $F_n$ of endomorphisms of $\range{w}$ such that for all $\assignment \in \{0,1\}^n$, $\assignment \in L$ if and only if $\BranchingProgram_n(\assignment) \in F_n$.
	If $f_i$ and $g_i$ are automorphisms, that is, permutations of $\range{w}$ for all $i\in\range{t}$, then $\BranchingProgram_n$ is called a {\em width-$w$ permutation branching program}, or $w$-\PBP{} for short. 
	
	\begin{theorem}\label{NCeqPBP} \cite{B}
		If $L \in \NC^1$, then $L$ is recognized by a logspace-uniform $5$-\PBP{} family.
	\end{theorem}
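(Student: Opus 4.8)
The plan is to reconstruct Barrington's classical argument and then verify, point by point, that every step preserves logspace-uniformity. Fix once and for all two $5$-cycles $\sigma_1,\sigma_2$ of $\range{5}$ whose commutator $\sigma_1\sigma_2\sigma_1^{-1}\sigma_2^{-1}$ is again a $5$-cycle; such a pair exists precisely because the permutation group on five elements is non-solvable (its commutator subgroup contains $5$-cycles), and a concrete pair can be written down explicitly. Call a width-$5$ permutation branching program $\BranchingProgram$ a \emph{$\sigma$-computation} of a Boolean function $h$, for a $5$-cycle $\sigma$ of $\range{5}$, if $\BranchingProgram(\assignment)=\sigma$ whenever $h(\assignment)=1$ and $\BranchingProgram(\assignment)=\id$ whenever $h(\assignment)=0$. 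The goal is to show, by structural induction on the circuit, that the function computed at any node of level $d$ admits, for every prescribed target $5$-cycle, a $\sigma$-computation of length at most $4^{d}$.

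Three closure properties carry the induction. \emph{Relabeling:} if $\BranchingProgram$ is a $\sigma$-computation of $h$ and $\tau=\pi\sigma\pi^{-1}$ is a conjugate, then replacing each instruction $(x_i,f,g)$ by $(x_i,\pi f\pi^{-1},\pi g\pi^{-1})$ yields a $\tau$-computation of $h$ of the same length; so any $5$-cycle may serve as the target. \emph{Negation:} folding a left-multiplication by $\sigma^{-1}$ into the last instruction turns a $\sigma$-computation of $h$ into a $\sigma^{-1}$-computation of $\neg h$ without increasing the length. \emph{Conjunction:} writing $\BranchingProgram^{-1}$ for the program obtained by reversing the instruction order and inverting each $f,g$ — which is a $\sigma^{-1}$-computation of $h$ whenever $\BranchingProgram$ is a $\sigma$-computation of $h$ — the concatenation $\BranchingProgram_1\,\BranchingProgram_2\,\BranchingProgram_1^{-1}\,\BranchingProgram_2^{-1}$ of a $\sigma_1$-computation of $h_1$ and a $\sigma_2$-computation of $h_2$ evaluates to $\sigma_1\sigma_2\sigma_1^{-1}\sigma_2^{-1}$ when $h_1=h_2=1$ and to $\id$ otherwise, hence is a $[\sigma_1,\sigma_2]$-computation of $h_1\wedge h_2$ whose length is at most four times the combined child lengths.

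Now take a logspace-uniform fan-in-$2$ circuit family $\{C_n\}$ over $\{\vee,\wedge,\neg\}$ of depth $O(\log n)$ recognizing $L$. The base case of the induction is an input node $x_i$, handled by the single instruction $(x_i,\sigma,\id)$; a $\neg$-gate is handled by the negation closure at no extra cost; an $\wedge$-gate by the conjunction construction followed by a relabeling to the desired target; an $\vee$-gate via $\neg(\neg\cdot\wedge\neg\cdot)$, using negation and conjunction. The length bound $4^{d}$ at level $d$ follows inductively. Applying this to the output gate, whose level is at most $\depth(C_n)=O(\log n)$, with target $\sigma_1$, produces a $5$-\PBP{} $\BranchingProgram_n$ of length $4^{O(\log n)}=n^{O(1)}$ satisfying $\assignment\in L$ if and only if $\BranchingProgram_n(\assignment)=\sigma_1$, so we may take $F_n=\{\sigma_1\}$.

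The point that needs real care — and the main obstacle — is logspace-uniformity, i.e.\ showing that the map $1^n\mapsto\BranchingProgram_n$ satisfies the implicit-logspace-computability conditions. The idea is that the index $j$ of an instruction of $\BranchingProgram_n$ encodes a root-to-node path in the recursion tree: at each conjunction step the four sub-blocks $\BranchingProgram_1,\BranchingProgram_2,\BranchingProgram_1^{-1},\BranchingProgram_2^{-1}$ are selected by one of four choices, and the $\neg$ and $\vee$ steps contribute deterministic bookkeeping. From $1^n$ and $j$ one walks down $C_n$ — retrieved gate-by-gate through the assumed logspace machine for the circuit family — maintaining only the current gate, a counter, and the current target permutation, which is a constant-size object obtained by composing the conjugations and inversions prescribed by the path prefix; all of this fits in $O(\log n)$ space, and at the leaf the single instruction is read off. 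Computing $|\BranchingProgram_n|$ and testing membership in $F_n=\{\sigma_1\}$ are likewise logspace. The delicate verification is exactly that these conjugating permutations compose in a way recomputable from the path prefix alone and never grow; the algebraic content is otherwise verbatim Barrington's, and the choice of $\sigma_1,\sigma_2$ being a fixed universal constant is what makes the translation to the uniform setting go through.
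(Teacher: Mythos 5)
Your proposal is correct and follows the same route the paper relies on: the paper states this theorem purely as a citation to Barrington and merely remarks that the construction ``naturally translates'' to the logspace-uniform setting, and your reconstruction is precisely that standard commutator argument ($\sigma$-computations, conjugation, negation by folding in $\sigma^{-1}$, conjunction via $\BranchingProgram_1\BranchingProgram_2\BranchingProgram_1^{-1}\BranchingProgram_2^{-1}$) together with the uniformity check the paper leaves implicit. The one point you gloss over is that navigating the recursion tree by instruction index in logspace requires first unfolding the circuit DAG into a formula and padding every level-$d$ subprogram to length exactly $4^{d}$ (with identity instructions), so that the four block boundaries at each conjunction step are computable by base-$4$ arithmetic on the index alone rather than by recursively determining subprogram lengths; this is a standard fix and does not affect the soundness of the argument.
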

	
	Due to Theorem~\ref{NCeqPBP} it is sufficient for our purposes to simulate the $w$-\PBP{}s with constant $w$ instead of the circuit families provided by the definition of $\NC^1$. 
	In order to do this, we need to encode the given $w$-\PBP{} and the possible assignments in the right form, namely we express them as sets of key-value pairs.
	A $w$-\PBP{} of length $t$ can be described as the set $\{\langle\, p;\, (x_{i_p}, f_p, g_p)\,\rangle \mid p\in\range{t}\}$, where we call $p$ the \emph{line number} of line $(x_{i_p}, f_p, g_p)$. Similarly, an assignment $\assignment\colon \mathcal{X}\to \{0,1\}, x_i\mapsto v_i$ to the input variables $\mathcal{X}=\{x_0,x_1, \ldots, x_{n-1}\}$ is described by the set of key-value pairs  
	$\{\langle i; (x_i, v_i)\rangle \mid i\in\range{n} \}$, letting the mappers divide the information by the indices of the input variables. 
	Let $N_\textnormal{O}$ and $N_\textnormal{I}$ be the total size of the encodings of the $w$-\PBP{} and the input assignment $\assignment$, respectively. Let $N=N_\textnormal{O}+N_\textnormal{I}$ and
	let $d=\lceil N_\textnormal{O}^{1-\epsilon} \rceil$ and $\ell=\lceil N_\textnormal{O}^{\epsilon} \rceil$.
	We denote by $\div$ the integer division. 
	For every $q\in\range{t\div{}d}$, let $w$-\PBP\!$_q$ be the $q$th of the subprogram blocks of $w$-\PBP{} of length $d$, that is $\{\langle p; (x_{i_p}, f_p, g_p)\rangle\mid qd \leq p \leq (q+1)d-1\}$. 
	For ease of readability, we assume from now on without loss of generality that $d\ell=t$, so that $w$-$\PBP{}$ can be partitioned into exactly $\ell$ such subprograms. 
	
	For every $q\in \range{\ell}$, we denote by $\INq$ the subset of variables from $\mathcal{X}$ appearing in the instructions of subprogram $w$-\PBP\!$_q$. An assignment $\partialassignmentq\colon\INq\to\{0,1\}$ to these variables is represented as a set of key-value pairs in the following way. Recall that the subprogram $w$-\PBP\!$_q$ is a list of lines, each of which requires the assignment of a value, either 0 or 1, for exactly one variable. 
	Let $x_{q,j}$ be the $j$th variable to which a value is assigned in $w$-\PBP\!$_q$, let $p_{q,j}$ denote the %
	number of the line in which this assignment occurs for the first time in $w$-\PBP\!$_q$, and let $v_{q,j}$ denote the value that is assigned to $x_{q,j}$ in this line. Now, we represent $\partialassignmentq$ by $\{ \langle q; (p_{q,j}, x_{q,j}, v_{q,j})\rangle\mid j\in\range{|\INq|} \}$. Note that despite the dependence of $\INq$ on $q$, we always have $|\INq|\leq d$.
	Having seen how to express $w$-\PBP{}, $\assignment$, and both $w$-\PBP\!$_q$ and $\partialassignmentq$ for all $q\in\range{\ell}$ as a set of key-value pairs, we are ready to state and prove the following lemma. 
	
	\begin{lemma}\label{PBPsim}
		Let $L$ be a $w$-\PBP{}-recognized language. %
		If, for every $q\in\range{\ell}$, the representations of $w$-\PBP{} and $\partialassignmentq$ are given, then we can decide in a $2$-round $\DMRC$-computation whether $\alpha\in L$ or not.
	\end{lemma}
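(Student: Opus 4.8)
The plan is to exploit the two-level split of $w$-\PBP{} into the $\ell$ subprograms $w$-\PBP\!$_0$ through $w$-\PBP\!$_{\ell-1}$, each of length $d$, using the first round to evaluate the subprograms independently and the second round to glue their outcomes together. The key observation is that, once the variables of $w$-\PBP\!$_q$ have been assigned values by $\partialassignmentq$, this subprogram evaluates to a single endomorphism $\pi_q$ of $\range{w}$, and that by associativity of composition $\BranchingProgram(\assignment)$ is exactly the composition of $\pi_0,\dots,\pi_{\ell-1}$ taken in program order. Deciding $\assignment\in L$ then amounts to checking $\BranchingProgram(\assignment)\in F_n$, which a reducer can do because $F_n$ is a fixed, constant-size set of permutations of $\range{w}$ (as $w\in O(1)$) recoverable from the logspace-uniform description of the family.

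\textbf{Round 1.} The map function would leave every assignment pair $\langle q;(p_{q,j},x_{q,j},v_{q,j})\rangle$ unchanged, since its key is already the block index $q$, and rewrite every line $\langle p;(x_{i_p},f_p,g_p)\rangle$ of $w$-\PBP{} to $\langle p\div d;(p,x_{i_p},f_p,g_p)\rangle$, so that the new key is the index of the block containing that line (note that only this map function needs to know $d$, which is derivable from $N_\textnormal{O}$). After the shuffle, the reducer for key $q$ holds precisely the $d$ lines of $w$-\PBP\!$_q$ together with the at most $d$ entries of $\partialassignmentq$ --- that is $O(d)=O(N_\textnormal{O}^{1-\epsilon})\subseteq O(N^{1-\epsilon})$ pairs, each of constant size. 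This reducer builds a lookup table sending every variable occurring in the block to its assigned value, scans the lines in increasing order of $p$, selecting $f_p$ if $\assignment(x_{i_p})=1$ and $g_p$ otherwise and composing as it goes, and finally emits the single pair $\langle 0;(q,\pi_q)\rangle$; this takes time polynomial in its input. The map step produces only $O(t)=O(N_\textnormal{O})\subseteq O(N)$ pairs, well inside the $O(N^{2(1-\epsilon)})$ budget since $2(1-\epsilon)\ge1$.

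\textbf{Round 2.} The map function routes every pair to the key $0$, so a single reducer receives all $\ell$ endomorphisms $\pi_0,\dots,\pi_{\ell-1}$; it composes them in program order and outputs $1$ if and only if the result lies in $F_n$. Correctness follows from the associativity remark above. The one place where the space bound is genuinely tight is this last reducer: it stores $O(\ell)=O(N_\textnormal{O}^{\epsilon})$ pairs, and $N_\textnormal{O}^{\epsilon}\le N^{1-\epsilon}$ holds precisely because $\epsilon\le1/2$ --- exactly the regime in which the lemma is stated, and the reason the corresponding inclusion $\NC^1\subseteq\DMRC^0$ goes through for every admissible $\epsilon$. The computation is deterministic, uses two rounds and $O(\log N)$-bit words, and its primitives run in polynomial time, so it is a $\DMRC^0$-algorithm.

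I expect the main difficulty to lie in this balancing act rather than in any isolated argument: the block length $d$ and the block count $\ell$ must be chosen so that a reducer with $O(N^{1-\epsilon})$ space can simultaneously accommodate a whole subprogram plus its assignment in the first round ($O(d)$ pairs) and the complete list of subprogram outputs in the second round ($\ell$ endomorphisms), and one must be careful that the map functions --- which each see only a single key-value pair --- can still be supplied with the parameters they need, namely $d$ and the acceptance set $F_n$, both of which follow from $N_\textnormal{O}$ and the uniform description. Once the split $d=\lceil N_\textnormal{O}^{1-\epsilon}\rceil$, $\ell=\lceil N_\textnormal{O}^{\epsilon}\rceil$ is fixed, checking all four $\MRC$ resource conditions is routine, but that verification is where the actual content of the lemma resides.
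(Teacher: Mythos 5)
Your proposal is correct and follows essentially the same route as the paper: partition $w$-\PBP{} into $\ell$ subprograms of length $d$, evaluate each in its own reducer in round one, and compose the results in a single reducer in round two, with the same space accounting ($O(d)$ per first-round reducer, $O(\ell)\subseteq O(N^{1-\epsilon})$ for the final one). The only cosmetic differences are that the paper's first-round reducer tabulates the action of $w$-\PBP\!$_q$ on all $w!$ permutations rather than storing the single composed permutation $\pi_q$ (both are $O(1)$ and equivalent for permutation programs), and that, to respect the model's requirement that a reducer emit pairs under its own key, the re-keying to $0$ should be done by $\mu_2$ alone rather than already in the first reducer's output --- which your round-two map in fact handles.
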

	
	\begin{proof}
		As already described above, let $w$-\PBP{} be represented by the set $\{\langle p; (x_{i_p}, f_p, g_p)\rangle \mid p\in\range{t}\}$ and, for every $q\in\range{\ell}$, the assignment $\partialassignmentq$ by $\{\langle q, (p_{q,j},x_{q,j}, v_{q,j})\rangle \mid j\in\range{|\INq|}\}$. Note that there are $\ell$ subprograms of length at most $d$ and $\ell$ partial assignments that each assign values to at most one variable per line of the corresponding partial program, thus the total size of the input is in $O(d\ell)\subseteq O( N_\textrm{O})\subseteq O(N)$. 
		
		We define the first map function $\mu_1$ by
		\begin{align*}
			&\mu_1(\langle p; (x_{i_p}, f_p, g_p)\rangle) &&= 
			\ \{\langle\,\ p\div{}d\,;\,(p, x_{i_p}, f_p, g_p)\,\rangle\},\textnormal{ for each }p\in\range{t}\textnormal{ and}\\
			&\mu_1(\langle q; (p_{q,j},x_{q,j}, v_{q,j})\rangle)&&= 
			\ \{\langle\,p_{q,j}\div{}d\,;\, (p_{q,j},x_{q,j}, v_{q,j})\,\rangle\}\textnormal{ for each }q\in\range{\ell}, j\in\range{k+1}.
		\end{align*}
		For any $q\in\range{\ell}$, there is one subprogram $w$-\PBP\!$_q$ and an associated assignment set $\partialassignmentq$.  
		We use the map function $\mu_1$ to find the value assignment for each variable appearing in $w$-\PBP\!$_q$ and store it in a key-value pair. This pair has the key $q$ and is thereby designated to be processed by reducer$_q$, which can calculate $\rho_1$, having all pairs with key $q$ available. This function simulates, for each permutation $\pi$ of $\range{w}$, the subprogram $w$-\PBP\!$_q$ on this permutation with the received assignment and stores the resulting permutation $\pi'$. This yields a table $T_q$ of size $w!\in O(1)$, describing the action of $w$-\PBP\!$_q$ for the given assignment on all $w!$ permutations. (We mention in passing that for the first reducer$_0$ it would be sufficient to compute and store only the permutation that results from applying $w$-\PBP{}$_0$ on the given assignment to the identity as the initial permutation, thus saving the time and memory necessary for the rest of the first table.) The output of $\rho_1$ on the $q$th reducer is $\langle q; T_q\rangle$.
		
		The map function $\mu_2$ of the second round is simple, it maps $\langle q; T_q\rangle$ to $\langle 0; (q,T_q)\rangle$, thus delivering all pairs $(i,T_i)$ to a single instance of the reduce function $\rho_2$. This first reducer has therefore all tables $T_0,\ldots,T_{\ell-1}$ at its disposal and knows which one is which.
		Using $T_q$ as a look-up table for the permutation performed by $w$-\PBP\!$_q$, reducer$_0$ can now compute, starting from the identity permutation $\textnormal{id}$, the permutation $\pi=T_{\ell-1}\circ \dots \circ T_2\circ T_1\circ T_0(\textnormal{id})$, and the input is accepted if and only if $\pi \in F_n$, where $F_n$ is the set of accepted permutations that is given to us alongside the  program $w$-\PBP{}.
	\end{proof}
	
	In the following four lemmas, we show that $\partialassignmentq$ can be computed in a constant number of rounds from $w$-\PBP{} and $\assignment$ for every $q\in\range{\ell}$. 
	The challenge lies in designing an interface between the different reducers to bridge the gap between the $\ell$ program blocks $w$-\PBP{}$_q$ and the given assignments, initially cut into $\ell$ block based solely on the indices of the input variables, without exceeding the memory limits.
	We begin with a brief overview of the four steps. 	
	
	\begin{enumerate}
		\item For each $x_i$, where $i\in\range{n}$, we compute the number of subprograms 
		in which $x_i$ appears, and denote this number by $\subprogramNumber(x_i)$. 
		Note that $\subprogramNumber(x_i)\leq \ell$ and that $\subprogramNumber(x_i)$ is the number of all those reducers for which the value assignment of $x_i$ is generally required to compute the resulting permutations in the corresponding subprograms. 
		\item We compute the prefix sums of $\subprogramNumber(x_i)$. For $i\in\range{n}$, let $y_i = \sum_{j=0}^{i} \subprogramNumber(x_j)$. Note that 
		$y_i$ is the number of assignment triples $(p_{q,j},x_{q,j},v_{q,j})$ with $0<j\le i$ needed to compute the action of the first $i$ subprograms 
		and that $y_{n-1}=\sum^{\ell-1}_{q=0}|\partialassignmentq|$. 
		\item Based on the prefix sums, we will compute a \emph{separation} of the input variables into $\ell$ contiguous blocks such that, for each $q\in\range{\ell}$, it is feasible for reducer$_q$ to produce from the $q$th block the input value assignments that it needs to contribute for the next step.  
		This is nontrivial since the number of input assignments must not exceed $O(d)$ due to the memory limitation of reducer$_q$. %
		A \emph{separation} of the input variables $\{x_0,\ldots,x_{n-1}\}$ is a list of $\ell-1$ {\em split values} $\sigma_1,\ldots,\sigma_{\ell-1}$ such that we have $\ell$ ordered, contiguous blocks  $\{x_0,\ldots,x_{\sigma_1}\}, \{x_{\sigma_1+1},\ldots,x_{\sigma_2}\},\ldots, \{x_{\sigma_{\ell-1}+1},\ldots,x_{n-1}\}$. For notational convenience, we let $\sigma_0=-1$ and $\sigma_\ell=n-1$. 
		Let $\sigma_q =\max\{j\in[n]\mid y_j\leq qd\}$ for $q\in\{1,\ldots,\ell-1\}$.
		Using these split values, %
		each reducer$_q$ can provide all value assignments needed for the computation of all subprograms in the next step without violating the memory limitations.
		
		\item We compute $\partialassignmentq$ for $q\in\range{\ell}$ by using $w$-\PBP{}, the input assignment $\assignment$, and the split values.  %
	\end{enumerate}
	
	\begin{lemma}\label{computing-SharpP}
		Calculating $\subprogramNumber(x_i)$ is in $\DMRC^0$. That is, for each $i\in\range{n}$, $\subprogramNumber(x_i)$ is computable from $w$-\PBP{} in a constant number of $\DMRC$-rounds.
	\end{lemma}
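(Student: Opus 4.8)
The plan is to compute all the values $\subprogramNumber(x_i)$ in just two $\DMRC$-rounds, using only the representation $\{\langle p; (x_{i_p}, f_p, g_p)\rangle \mid p\in\range{t}\}$ of $w$-$\PBP$; the assignment $\assignment$ is irrelevant, since $\subprogramNumber(x_i)$ depends only on the program's structure. Since $\subprogramNumber(x_i)$ counts the \emph{distinct} blocks $w$-$\PBP_q$ in which $x_i$ occurs, the obvious one-round attempt --- route every line mentioning $x_i$ to a single reducer with key $i$ and have it count the distinct block indices $p\div d$ among the arriving lines --- is doomed: a single variable can appear in up to $t$ lines, blowing past the per-reducer memory bound $O(N^{1-\epsilon})$. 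So the first round has to deduplicate \emph{within} each block before the second round aggregates \emph{across} blocks.

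In the first round I would let $\mu_1$ map each line pair $\langle p; (x_{i_p}, f_p, g_p)\rangle$ to $\langle (x_{i_p}, p\div d); 0\rangle$, tagging the line by the pair (variable index, block index) and discarding everything else into a dummy value. The shuffle step then gathers, for each pair $(i,q)$ with $x_i$ occurring in $w$-$\PBP_q$, at most $d=\lceil N_\textnormal{O}^{1-\epsilon}\rceil$ such dummy values --- at most $d$ because $w$-$\PBP_q$ has only $d$ lines --- so each reducer's input stays within $O(N^{1-\epsilon})$. The reducer $\rho_1$ ignores the values and outputs the single pair $\langle (i,q); 0\rangle$, thereby recording the bare fact ``$x_i$ appears in block $q$''.

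In the second round I would let $\mu_2$ strip the block index, mapping $\langle (i,q); 0\rangle$ to $\langle i; 0\rangle$. Now the reducer with key $i$ receives exactly one dummy value per block containing $x_i$, i.e.\ $\subprogramNumber(x_i)\le\ell=\lceil N_\textnormal{O}^{\epsilon}\rceil$ of them, which fits in $O(N^{1-\epsilon})$ memory because $\epsilon\le 1/2$. The reducer $\rho_2$ simply counts and outputs $\langle i; \subprogramNumber(x_i)\rangle$. A variable appearing in no line produces no output pair at all; one may either assume without loss of generality that every input variable is read, or read off $\subprogramNumber(x_i)=0$ from the absence of such a pair.

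The rest is bookkeeping to confirm the $\DMRC^0$ constraints. Each of $\mu_1,\mu_2$ emits one pair of size $O(\log n+\log\ell)=O(\log N)$ per input pair, and there are at most $t=O(N_\textnormal{O}/\log N)$ input pairs (distinct line keys force $\Omega(\log N)$ bits each, so $N_\textnormal{O}=\Omega(t\log N)$), so the multiset of pairs produced in either round occupies total space $O(N_\textnormal{O})\subseteq O(N)\subseteq O(N^{2(1-\epsilon)})$ for every $\epsilon\le 1/2$; in particular the boundary case $\epsilon=1/2$ causes no trouble. All four primitives are plain relabelling and counting routines, computable on a RAM with $O(\log N)$-bit words in polynomial time and $O(N^{1-\epsilon})$ space. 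Hence this is a $2$-round $\DMRC$-computation and $\subprogramNumber\in\DMRC^0$. The only genuinely load-bearing point is the memory accounting that forces the two-round split; everything else is routine.
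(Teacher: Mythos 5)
Your proof is correct and rests on the same essential observation as the paper's: a single variable may occur in up to $t$ lines, so one cannot route all of its occurrences to a single reducer; one must first collapse the occurrences to one token per (variable, block) pair and only then count tokens per variable. The difference is the mechanism for the counting stage. The paper has each block-reducer emit one token per distinct variable appearing in its subprogram and then realizes the summation as a constant-time Sum-CRCW-PRAM computation ($O(d\ell)$ processors each adding $1$ to the register of their variable), which it converts to $O(1)$ $\DMRC$-rounds via its technical tool, Corollary~\ref{CRCW-Kar}. You instead implement both stages as two explicit MapReduce rounds, keying first by $(x_{i_p}, p\div d)$ and then by the variable index $i$. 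Your route is more elementary and self-contained, gives an exact round count of $2$, and your load accounting ($\le d$ values per reducer in round one, $\le\ell\le N^{1-\epsilon}$ in round two, total shuffle volume $O(N_\textnormal{O})\subseteq O(N^{2(1-\epsilon)})$) is sound, including at the boundary case $\epsilon=1/2$. What the paper's phrasing buys is uniformity: the same Sum-CRCW-PRAM tool is reused for the prefix sums, the level computation, and the sorting later in the paper, so casting this lemma in that language keeps the whole development on one footing; your version trades that for a shorter, direct construction.
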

	
	\begin{proof}
		For each $q\in\range{\ell}$, the subprogram $w$-\PBP\!$_q$ is stored in reducer$_q$. The output of reducer$_q$---which will be the input to compute $\subprogramNumber(x_i)$---is $\langle q;(q,1)\rangle, \ldots, \langle q;(q,k_q)\rangle$, with the variables $x_{q,1},\ldots,x_{q,k_q}$ appearing in the subprogram $w$-\PBP\!$_q$ and $k_q \in O(d)$. The total number of inputs used  to compute $\subprogramNumber(x_i)$ is therefore at most $d\ell\in O(N)$. We use a Sum-CRCW-PRAM, whose concurrent writes to a single memory register are resolved by summing up all values being written to the same register simultaneously, see \cite{GSZ}. 
		We use at most $d\ell$ processors, $\processor_{q,1}, \ldots, \processor_{q,k_q}$ for each $q\in\range{\ell}$, and registers $\register_0, \ldots, \register_{n-1}$ and let all processors $\processor_{q,j}$ add $1$ to $\register_j$ concurrently. Thus we see that the computing $\subprogramNumber(x_i)$ is possible in constant time on a Sum-CRCW-PRAM and therefore, by Corollary~\ref{CRCW-Kar}, in $\DMRC^0$. 
	\end{proof}

	\begin{lemma}\label{prefix-MRC}
		Computing the prefix-sums of $\subprogramNumber(x_i)$ is in $\DMRC^0$. 
	\end{lemma}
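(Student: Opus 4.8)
The plan is to mirror the proof of Lemma~\ref{computing-SharpP}: I would describe a constant‑time Sum‑CRCW‑PRAM routine that outputs all prefix sums $y_i=\sum_{j=0}^i\subprogramNumber(x_j)$ using only polynomially many processors and memory cells, and then hand it to Corollary~\ref{CRCW-Kar}. The data coming out of the previous lemma is the list $\subprogramNumber(x_0),\dots,\subprogramNumber(x_{n-1})$, where every entry is at most $\ell$ and hence fits into $O(\log N)$ bits.

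The first step is the observation that a \emph{single} parallel step already suffices on a Sum‑CRCW‑PRAM. I would allocate registers $\register_0,\dots,\register_{n-1}$ and, for every pair $(i,j)$ with $0\le j\le i\le n-1$, a processor $\processor_{i,j}$ that writes $\subprogramNumber(x_j)$ into $\register_i$. Since simultaneous writes to a register are summed, $\register_i$ afterwards holds exactly $y_i$. This uses $O(n^2)$ processors, $O(n)$ memory, and $O(1)$ time.

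It then remains to check the side condition of Corollary~\ref{CRCW-Kar}, namely $(M+P)\log_{N^{1-\epsilon}}(M+P)\in O(N^{2(1-\epsilon)})$ for $M,P\in O(n^2)$. The point I would exploit is that $n^2\le N$. The $5$-\PBP{} supplied by Theorem~\ref{NCeqPBP} stems from a fan‑in‑$2$ circuit, which we may assume, after conjoining its output with a balanced tree for $\bigwedge_i(x_i\vee\neg x_i)$ (this alters neither the computed function nor membership in $\NC^1$), to contain a balanced subcircuit of depth at least $\log_2 n$; Barrington's translation expands this subcircuit alone into at least $4^{\log_2 n}=n^2$ instructions, so the program length $t$, and therefore $N\ge N_\textnormal{O}\ge t$, is at least $n^2$. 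Hence $M+P\in O(n^2)\subseteq O(N)$, the logarithm $\log_{N^{1-\epsilon}}(M+P)$ is $O(1)$, the product lies in $O(N)\subseteq O(N^{2(1-\epsilon)})$ because $\epsilon\le 1/2$, and Corollary~\ref{CRCW-Kar} yields a simulation in $O(T\log_{N^{1-\epsilon}}P)=O(1)$ $\DMRC$‑rounds, placing the task in $\DMRC^0$.

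The only genuine obstacle is the round count. The classical parallel‑prefix circuit has depth $\Theta(\log n)$ and would certify only membership in $\DMRC^1$; collapsing the computation to one PRAM step is precisely what the summing convention for concurrent writes buys us. Its cost — a quadratic number of processors — is affordable only because the branching program being simulated is itself of size $\Omega(n^2)$, so $n^2$ stays within the admissible communication volume $O(N^{2(1-\epsilon)})$; verifying this quantitative fit is the part of the argument that needs care.
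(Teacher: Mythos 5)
Your proposal is essentially correct in its computational content but takes a genuinely different route from the paper. The paper does not go through Corollary~\ref{CRCW-Kar} here at all: it gives a direct, hand-crafted three-round $\DMRC$ computation (each reducer computes the local prefix-sums of its block of $d$ consecutive values $\subprogramNumber(x_i)$; a single reducer then computes the prefix-sums of the $\ell$ block totals; a final round adds the appropriate offset $z_q$ back into each block). That blocked scheme uses only $O(d)=O(N^{1-\epsilon})$ memory per reducer and $O(N)$ total communication with no assumption on the relationship between $n$ and $N$, so it works for every $0<\epsilon\le 1/2$ out of the box. Your route --- a one-step Sum-CRCW-PRAM with a processor $\processor_{i,j}$ for every pair $j\le i$, followed by the generic simulation --- is conceptually simpler but pays with $\Theta(n^2)$ processors, and the side condition of Corollary~\ref{CRCW-Kar} then genuinely fails for the lemma as stated unless $n^2\in O(N^{2(1-\epsilon)})$: for a short program (say $t\in O(n)$) one has $N\in\Theta(n)$ and $n^2\in\Theta(N^2)\not\subseteq O(N^{2(1-\epsilon)})$ for any $\epsilon>0$. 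You correctly identify this as the delicate point and repair it by forcing $t\ge n^2$ upstream. That repair is legitimate for the purposes of Theorem~\ref{NCMRC0}, since the \PBP{} family is under our control, but note two things. First, it proves a weaker statement than the lemma as written (it establishes the claim only for suitably padded programs, not for an arbitrary given $w$-\PBP{}). Second, deriving $t\ge 4^{\log_2 n}$ from the internals of Barrington's recursion is more fragile than necessary; padding the program itself with identity instructions $(x_0,\id,\id)$ up to length $\max\{t,n^2\}$ achieves the same bound while visibly preserving logspace uniformity and the recognized language. Alternatively, you could replace your quadratic scheme by the paper's own Lemma~\ref{prefix_sumonCRCW} ($O(q\log q)$ processors in constant time), which shrinks the required padding to $t\ge n\log n$ but still does not eliminate it. In short: the approach is workable and the quantitative bookkeeping is the right thing to worry about, but the paper's three-round blocked prefix-sum avoids the issue entirely and is the cleaner argument at this point in the development.
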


	\begin{proof}
		The input is given as $\langle i; (\subprogramNumber(x_i), i) \rangle$ for $i\in\range{n}$.
		We compute the prefix-sums $y_i$ of $\subprogramNumber(x_i)$ for all $i\in\range{n}$ in three rounds that can be summarized as follows:
		\begin{enumerate}
			\item Each reducer$_q$, for $q\in\range{\ell}$, determines its local prefix-sums; that is, it computes the $d$ prefix-sums $\localy_{dq}, \ldots, \localy_{d(q+1)-1} $ of the $d$ numbers $\subprogramNumber(x_{dq}), \ldots, \subprogramNumber(x_{d(q+1)-1})$. 
			\item A single reducer computes the prefix-sums $z_{0}, z_{1}, \ldots z_{\ell-1}$ of $\localy_{d-1}, \localy_{2d-1}, \ldots \localy_{\ell d-1}$, which are known from the first round. For every $q\in\range{\ell-1}$, we send $z_q$ to reducer$_{q+1}$.%
			\item Each reducer$_{q+1}$ with $q\in\range{\ell-1}$ computes $y_{d(q+1)+j} = \localy_{d(q+1)+j} + z_q$ for each $j\in\range{d}$. 
		\end{enumerate}
		
		We now describe the three rounds in more detail at the level of key-value pairs. 
		\begin{enumerate}
			\item By defining the map function $\mu_1(\langle i; (\subprogramNumber(x_i), i)\rangle{}) = \langle i\div{}d; (\subprogramNumber(x_i), i)\rangle{}$,
			each reducer$_q$, for $q\in\range{\ell}$,  receives $\subprogramNumber(x_{dq}), \ldots, \subprogramNumber(x_{d(q+1)-1})$ together with the correct indices. Thus we can compute in reducer$_q$ all 
			local prefix-sums $\localy_{dq}, \ldots, \localy_{d(q+1)-1}$ of these number. %
			The output of reducer$_q$ consists of the local prefix-sums in the format $\langle q; (\psum, q,j, \localy_{q,j}) \rangle$ for $j\in\range{d}$ and the last of each group of local prefix-sums in the format $\langle q; (\last, \localy_{d(q+1)-1}) \rangle{}$, where $\psum=0$ and $\last=1$ is a simple binary identifier %
			
			\item By defining the map function $\mu_2(\langle q; (\last, \localy_{d(q+1)-1})\rangle{})= \langle 0; (\last, \localy_{d(q+1)-1})\rangle{}$, all last parts of the local prefix-sums can be gathered in reducer$_0$. Thus, the prefix-sums $z_{0}, z_{1}, \ldots z_{\ell-1}$ of $\localy_{d-1}, \ldots, \localy_{d\ell-1}$ can be computed in it and the output of the reducer is $\langle 0; (\last, i+1, z_{i}) \rangle$ for every $i\in\range{\ell-1}$.
			All other key-value pairs---that is, those of the form $\langle q; (\psum, q,j, \localy_{q,j}) \rangle$---are passed on unaltered.
			
			\item The input of the third round consists of the output pairs $\langle q; (\psum, q, j, \localy_{q,j})\rangle$ for all $j\in\range{d}$ and $q\in\range{\ell}$ passed on from the first round and the pairs $\langle 0; (\last, q+1, z_{q})\rangle$ for all $q\in\range{\ell-1}$ from the second round.
			Defining the map function as $\mu_3(\langle q; (\psum, q, j, \localy_{q,j})\rangle{}) = \langle q; (\psum, q, j, \localy_{q,j})\rangle$ and $\mu_3(\langle 0; (\last, q+1, z_{q})\rangle{})=\langle q+1; (\last, q+1, z_{q})\rangle{}$, we can, for each $j\in\range{d}$ and each $q\in\{1,\ldots,\ell-1\}$, compute $y_{q,j} = \localy_{q,j} + z_j$ in reducer$_q$.
		\end{enumerate}
		
		The memory limitations of the mappers and reducers are clearly respected.
	\end{proof}

	\begin{lemma}\label{split-values-computable}
		Each of the split values $\sigma_1,\dots,\sigma_{\ell-1}$ can be computed in one reducer with the required prefix-sums being made available in one more $\DMRC$-round. 
	\end{lemma}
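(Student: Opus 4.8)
The plan is to exploit the monotonicity of the prefix sums. Since every $\subprogramNumber(x_i)$ is a nonnegative integer, the sequence $y_0\le y_1\le\dots\le y_{n-1}$ is nondecreasing, so for each $q$ the set $\{\,j\in\range{n}\mid y_j\le qd\,\}$ is an initial segment of $\range{n}$ and $\sigma_q$ is simply its largest element. As $d\ge\ell$ for $0<\epsilon\le 1/2$ and $y_0=\subprogramNumber(x_0)\le\ell$, this segment is nonempty for every $q\in\{1,\dots,\ell-1\}$, so all the $\sigma_q$ are well defined and satisfy $\sigma_1\le\dots\le\sigma_{\ell-1}$. Recall that Lemma~\ref{prefix-MRC} leaves the prefix sums distributed blockwise, the block $y_{dr},\dots,y_{d(r+1)-1}$ (with indices beyond $n-1$ simply absent) residing with reducer$_r$ for each $r\in\range{\ell}$. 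The observation that drives the proof is that the block in which $\sigma_q$ falls is already pinned down by the $\ell$ ``checkpoints'' $y_0,y_d,y_{2d},\dots,y_{(\ell-1)d}$ alone: by monotonicity $\sigma_q\ge dr$ iff $y_{dr}\le qd$, whence $\sigma_q$ lies in block $s(q):=\max\{\,r\in\range{\ell}\mid y_{dr}\le qd\,\}$, and conversely $s(q)=r<\ell-1$ forces $y_{d(r+1)}>qd$, i.e.\ $\sigma_q\le d(r+1)-1$; so $\sigma_q$ really does stay inside block $s(q)$.

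Given this, I would first let the last round of Lemma~\ref{prefix-MRC} additionally emit, for each $r$, the checkpoint $y_{dr}$ under a dedicated ``broadcast'' key. Then one more $\DMRC$-round suffices: the map step duplicates every checkpoint pair once per destination and sends all copies to all $\ell$ reducers, while every remaining prefix-sum pair of block $r$ is forwarded unchanged to reducer$_r$. In the ensuing reduce step, reducer$_r$ holds all $\ell$ checkpoints together with its own block; from the checkpoints it computes $s(q)$ for all $q\in\{1,\dots,\ell-1\}$, singles out the $q$ with $s(q)=r$, and for each of them scans its block to output $\sigma_q=\max\{\,j\textnormal{ in block }r\mid y_j\le qd\,\}$. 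Thus every split value $\sigma_q$ is produced within the single reducer $r=s(q)$, exactly as claimed, and the checkpoints are the only prefix sums that had to be made available elsewhere---this being the one extra round promised by the statement. If one prefers, the coarse step of computing all $s(q)$ can instead be phrased as a constant-time Sum-CRCW-PRAM computation with $O(\ell^2)$ processors and invoked through Corollary~\ref{CRCW-Kar}, which stays within budget for $\epsilon\le 1/2$.

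Verifying the model's constraints is routine. There are only $\ell=\lceil N_\textnormal{O}^{\epsilon}\rceil$ checkpoints, so the broadcast adds $O(\ell^2)\subseteq O(N^{2(1-\epsilon)})$ to the communication complexity and at most $O(\ell)\subseteq O(N^{1-\epsilon})$ words to each reducer's input (both admissible since $\epsilon\le 1/2$); forwarding the remaining prefix sums costs no more communication or per-reducer memory than Lemma~\ref{prefix-MRC} already did; and each reducer merely performs $O(\ell)$ linear scans, hence runs in polynomial time. The one point that genuinely needs care---and the reason a naive approach fails---is the sublinear memory bound: no single reducer can hold all $n$ prefix sums, so one cannot simply hand the whole sequence to one machine and search it. The two-tier scheme is what circumvents this, the coarse level (the $\ell$ checkpoints, which fit comfortably anywhere) singling out the owning block and the owner then finishing the search locally; monotonicity of the $y_j$ is precisely what makes the coarse information sufficient.
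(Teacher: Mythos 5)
Your proof is correct, but it distributes the work differently from the paper. The paper's proof keeps the natural assignment ``reducer$_q$ computes $\sigma_q$'' and argues that the window of $d+2$ consecutive prefix-sums $y_{qd-1},y_{qd},\dots,y_{(q+1)d}$ suffices to locate the crossing $y_{\sigma_q}\le qd<y_{\sigma_q+1}$; since reducer$_q$ already holds $y_{qd},\dots,y_{(q+1)d-1}$ after Lemma~\ref{prefix-MRC}, the one extra round is spent only on each reducer copying a single boundary prefix-sum to its neighbor. You instead decouple the index $q$ of the split value from the reducer that computes it: you broadcast the $\ell$ checkpoints $y_0,y_d,\dots,y_{(\ell-1)d}$ to everyone, let each reducer determine from them which blocks contain which crossings ($s(q)=\max\{r\mid y_{dr}\le qd\}$), and have the owning reducer finish the search locally. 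Your block-membership argument ($y_{dr}\le qd$ iff $\sigma_q\ge dr$, by monotonicity) is sound, your degenerate-case and resource accounting are fine, and the scheme fits in one extra round as required. What your route buys is robustness: it makes no assumption about \emph{where} the index $\sigma_q$ falls, whereas the paper's neighbor-copying scheme implicitly needs $\sigma_q$ to land in (or just before) block $q$ itself, i.e.\ $y_{qd-1}\le qd$ --- a condition that holds when the values $\subprogramNumber(x_i)$ are spread evenly but is not obviously guaranteed when early variables appear in many subprograms, in which case several crossings can pile up in a single block. Your two-tier checkpoint scheme handles exactly that situation (one reducer may then emit several $\sigma_q$, still within its $O(d+\ell)$ memory), at the modest price of an $O(\ell^2)$ broadcast instead of $O(\ell)$ neighbor messages, which you correctly verify stays within the $O(N^{2(1-\epsilon)})$ communication budget for $\epsilon\le 1/2$.
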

	
	\begin{figure}[htbp]\centering
		
		\begin{tikzpicture}[scale=.9]
		\node (empty) at (0,-1.2) {};
		\draw (0,0)--(10,0)--(0,10)--(0,0);
		\draw (0,.75)--(9.25,.75);
		\draw (0,2)--(8,2);
		\draw (0,4)--(6,4);
		\draw (0,6.25)--(3.75,6.25);
		\draw (0,7)--(3,7);
		\draw (0,8.5)--(2-0.5,8.5);
		\draw (0,9.25)--(1-0.25,9.25);
		
		\draw (0.75,0)--(0.75,10-0.75);
		\draw (1.5,0)--(1.5,10-1.5);
		\draw (3,0)--(3,10-3);
		\draw (3.75,0)--(3.75,10-3.75);
		\draw (9.25,0)--(9.25,10-9.25);
		
		\draw[dashed,white] (0,8.25)--(0,7.25);
		\draw[dashed,white] (0.75,8.25)--(0.75,7.25);
		\draw[dashed,white] (1.5,8.25)--(1.5,7.25);
		\draw[dashed,white] (1.75,8.25)--(2.75,7.25);
		\draw[dashed,white] (0,6)--(0,4.25);
		\draw[dashed,white] (0.75,6)--(0.75,4.25);
		\draw[dashed,white] (1.5,6)--(1.5,4.25);
		\draw[dashed,white] (3,6)--(3,4.25);
		\draw[dashed,white] (3.75,6)--(3.75,4.25);
		\draw[dashed,white] (4,6)--(5.75,4.25);
		\draw[dashed,white] (0,3.75)--(0,2.25);
		\draw[dashed,white] (0.75,3.75)--(0.75,2.25);
		\draw[dashed,white] (1.5,3.75)--(1.5,2.25);
		\draw[dashed,white] (3,3.75)--(3,2.25);
		\draw[dashed,white] (3.75,3.75)--(3.75,2.25);
		\draw[dashed,white] (6.25,3.75)--(7.75,2.25);
		\draw[dashed,white] (0,1.75)--(0,1);
		\draw[dashed,white] (0.75,1.75)--(0.75,1);
		\draw[dashed,white] (1.5,1.75)--(1.5,1);
		\draw[dashed,white] (3,1.75)--(3,1);
		\draw[dashed,white] (3.75,1.75)--(3.75,1);
		\draw[dashed,white] (8.25,1.75)--(9,1);
		
		\draw[dashed,white] (4,0)--(9,0);
		\draw[dashed,white] (4,0.75)--(9,0.75);
		\draw[dashed,white] (4,2)--(7.75,2);
		\draw[dashed,white] (4,4)--(5.75,4);
		
		\draw[dashed,white] (1.75,0)--(2.75,0);
		\draw[dashed,white] (1.75,.75)--(2.75,.75);
		\draw[dashed,white] (1.75,2)--(2.75,2);
		\draw[dashed,white] (1.75,4)--(2.75,4);
		\draw[dashed,white] (1.75,6.25)--(2.75,6.25);
		\draw[dashed,white] (1.75,7)--(2.75,7);
		
		\draw[dashed,white] (1.75,4)--(2.75,4);
		
		\node[left=.8ex] at (0,0) {$x_{n-1}=x_{\sigma_\ell}$};
		\node[left=.8ex] at (0,.75) {$x_{n-2}$};
		\node[left=.8ex] at (0,2) {$x_{\sigma_{\ell-1}}$};
		
		\node[left=.8ex] at (0,4) {$x_{\sigma_2}$};
		\node[left=.8ex] at (0,6.25) {$x_{\sigma_1+1}$};
		
		\node[left=.8ex] at (0,7) {$x_{\sigma_1}$};
		
		\node[left=.8ex] at (0,8.5) {$x_2$};
		\node[left=.8ex] at (0,9.25) {$x_1$};
		\node[left=.8ex] at (0,10) {$x_0=x_{\sigma_0+1}$};
		
		\node[below=1ex] at (0,0) {};
		\node[below=1ex] at (.75,0) {$y_0$};
		\node[below=1ex] at (1.5,0) {$y_1$};
		\node[below=1ex] at (2.25,0) {};%
		\node[below=1ex] at (3.1,0) {$y_{\sigma_1}$};
		\node[below=1ex] at (4.3,0) {$y_{\sigma_1+1}$};
		\node[below=1ex] at (6.6,0) {$\cdots$};
		\node[below=1ex] at (9.1,0) {$y_{n-2}$};
		\node[below=1ex] at (10.5,0) {$y_{n-1}$};
		
		\draw [decorate,decoration={brace}](0.2,10)--(3.3,6.9)node [midway,xshift=2.2em,yshift=.6em] {$\textnormal{reducer}_0$}; %
		\draw [decorate,decoration={brace}](3.75,6.45)--(6.2,4)node [midway,xshift=2.2em,yshift=.6em] {$\textnormal{reducer}_1$}; %
		\draw [decorate,decoration={brace}](8,2.2)--(10.2,-0)node [midway,xshift=2.9em,yshift=.6em,rotate=0] {$\textnormal{reducer}_{\ell-1}$};  %
		
		\end{tikzpicture}
		
		\caption{Separation of the input variables $x_0,\ldots,x_{n-1}$ into $\ell$ blocks for the $\ell$ reducers, in dependence of the values of $y_i$.}\label{fig-separations}
	\end{figure}
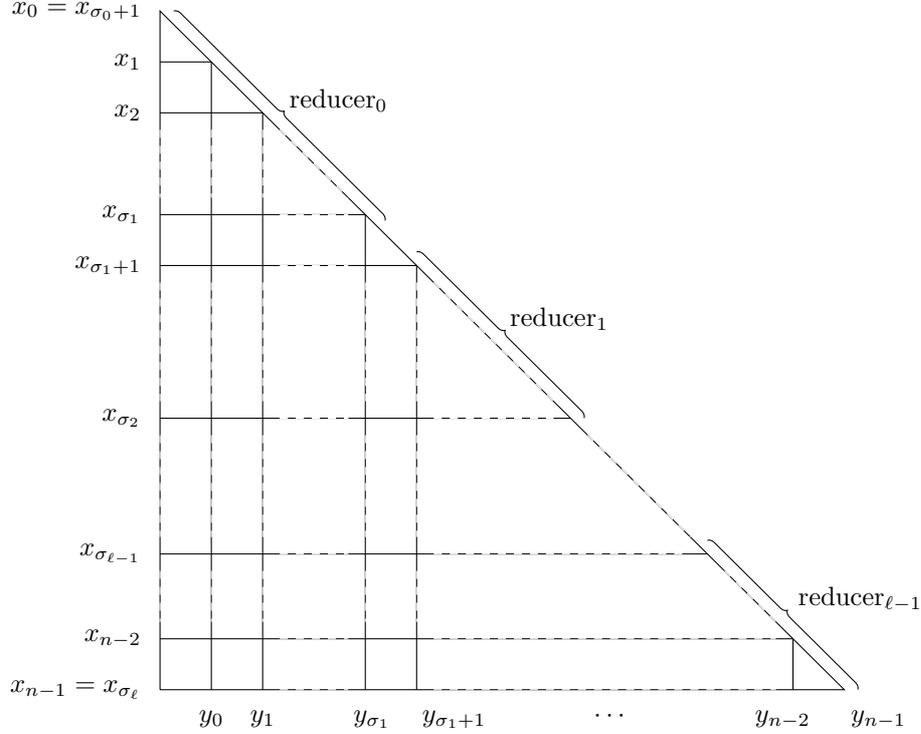
	
	\begin{proof}
		If there is a $k\in\range{\ell-1}$ such that $y_{n-1} \leq k d$, then it is clear from the definition $\sigma_q =\max\{j\in[n]\mid y_j\leq qd\}$ of the split values that $\sigma_k=\sigma_{k+1}=\ldots=\sigma_{\ell-1}$. 
		We can therefore assume that $y_{n-1} > (\ell-1)d$ and characterize, for each $q\in\{1,\dots,\ell-1\}$, the split value $\sigma_q$ as the unique integer satisfying $(q-1)d < y_{\sigma_q} \leq qd\text{ and }qd < y_{\sigma_q+1}$; see Figure~\ref{fig-separations}. 
	
		This characterization is well defined since $0<\subprogramNumber(x_i) \leq \ell < d$ for each $i\in\range{n}$ and $y_{n-1} \leq d\ell\in O(N_\textnormal{O})$. %
		For each $q\in\range{\ell}$, in order to determine the split value $\sigma_q$, it is therefore sufficient to have available in the respective reducer a sequence of consecutive prefix-sums such that the first one is at most $qd$ and the last one is greater than $qd$. 
		This condition is satisfied if reducer$_q$ has the $d+2$ consecutive prefix-sums
		$y_{qd-1}, y_{qd}, \ldots, y_{(q+1)d-1}, y_{(q+1)d}$ available. (For the first and the last reducer, the $d+1$ prefix-sums $y_{0}, \ldots, y_{d-1}, y_{d}$ and $y_{(\ell-1)d-1}, y_{(\ell-1)d}, \ldots, y_{\ell d-1}$, respectively, will suffice.) Slightly extending the sequence of available prefix-sums in each reducer by copying the overlapping prefix-sums from another reducer thus enables us to compute all split values in the $\ell$ reducers.
		Since for each $q\in\range{\ell}$, there are the $d$ prefix-sums $y_{qd}, \ldots, y_{(q+1)d-1}$ in reducer$_q$, each reducer can have the $d+2$ prefix-sums made available after one more round by having each neighboring reducer copy one more prefix-sum into it. 
		We have $\sigma_0=-1$ and $\sigma_{\ell}=n-1$; it is thus immediately verified that, for every $q\in\range{\ell}$, the total number of subprograms in which input variables between $x_{\sigma_q+1}$ and $x_{\sigma_{(q+1)}}$ appear is at most $2d$, showing that all the memory restrictions on the reducers are observed. %
	\end{proof}

	\begin{lemma}\label{PBPtoINq}
		Given $w$-\PBP{}, $\assignment$, and the split values $\sigma_0, \ldots, \sigma_{\ell}$,  %
		we can, for each $q\in\range{\ell}$, compute 
		$\partialassignmentq$ in a constant number of  $\DMRC$-rounds.
	\end{lemma}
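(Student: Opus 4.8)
The plan is to deliver to the reducer of each subprogram $w$-\PBP\!$_q$ the value of every input variable occurring in $w$-\PBP\!$_q$, together with the line number of its first occurrence there, while keeping every machine's load within $O(d)\subseteq O(N^{1-\epsilon})$. The split values are exactly the device that makes this load balance possible: by Lemma~\ref{split-values-computable}, the input variables falling into any one of the $\ell$ blocks account for at most $2d$ variable-occurrences across all subprograms.

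First I would have each reducer$_q$, for $q\in\range{\ell}$ --- which either still carries the $d$ lines of $w$-\PBP\!$_q$ from the earlier constructions or obtains them via the map rule $\langle p;(x_{i_p},f_p,g_p)\rangle\mapsto\langle p\div d;(p,i_p)\rangle$ --- scan these lines and compute, for every distinct variable $x_i$ occurring in $w$-\PBP\!$_q$, the least line number $p_{q,i}$ in which $x_i$ occurs there. In the same round I would broadcast the $\ell-1$ split values to all $\ell$ of these reducers --- producing only $O(\ell^2)\subseteq O(N^{2(1-\epsilon)})$ pairs, since $\epsilon\le 1/2$ --- and route each value $v_i$ from the representation of $\assignment$ to reducer$_{i\div d}$ by $\langle i;(x_i,v_i)\rangle\mapsto\langle i\div d;(x_i,v_i)\rangle$. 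Now reducer$_q$ holds, besides its $d$ lines, the $O(d)$ values of the variables $x_i$ with $i\div d=q$ together with all $\ell\le d$ split values, so its input is $O(d)$; knowing every split value, it can determine for any index $i$ the block $\beta(i)\in\range{\ell}$ containing $x_i$, and it re-emits the record $(q,p_{q,i})$ under key $\beta(i)$ for each variable $x_i$ it saw in $w$-\PBP\!$_q$, as well as the value record $(i,v_i)$ under key $\beta(i)$ for each $i$ with $i\div d=q$. Each of these two output families contributes at most $d$ pairs per reducer.

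In a second round, reducer$_r$ in its ``block $r$'' role receives all records $(q,p_{q,i})$ with $\beta(i)=r$ --- at most $2d$ of them, by Lemma~\ref{split-values-computable} --- together with the values $v_i$ of the $O(d)$ variables $x_i$ lying in block $r$; the grand total is $\sum_r 2d=2d\ell\in O(N)$, so the global bound is met as well. Reducer$_r$ performs the obvious join and emits, for each variable $x_i$ in block $r$ and each subprogram $w$-\PBP\!$_q$ containing $x_i$, the triple $(p_{q,i},x_i,v_i)$ under key $q$. A final round whose map function is the identity then delivers to reducer$_q$ exactly the multiset $\{\langle q;(p_{q,j},x_{q,j},v_{q,j})\rangle\mid j\in\range{|\INq|}\}$ --- the representation of $\partialassignmentq$ --- and since $|\INq|\le d$ the memory limit is respected. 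Three $\DMRC$-rounds suffice in total.

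The step I expect to be the main obstacle is reconciling the two differently organized data sources --- the program, partitioned by line number into the subprograms $w$-\PBP\!$_q$, and the assignment $\assignment$, partitioned by variable index --- so as to materialize the variable-to-subprogram incidence without any machine holding more than a sublinear portion of it and without any map function needing the global list of split values, which it cannot have and which is what forces the intermediate lookup round. Routing each value directly to the variable index that requires it would bring $\Theta(n)$ reducers into play, out of step with the preceding lemmas; it is the bound of at most $2d$ incidences per block from Lemma~\ref{split-values-computable} that lets us use only $\Theta(\ell)$ reducers while capping each one's load at $O(d)$. Everything else reduces to the routine verification that every map output fits in $O(N^{2(1-\epsilon)})$ space and every reducer input in $O(N^{1-\epsilon})$ space, which follows from $\ell\le N^{1-\epsilon}$, $d\ell\in O(N)$, and $\epsilon\le 1/2$.
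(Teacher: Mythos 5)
Your proposal is correct and follows essentially the same route as the paper's proof: both use the split-value blocks as an intermediate key space in which the subprogram-side variable requirements are joined with the assignment values and then routed back to the subprogram reducers, with the same $O(d)$ per-reducer load accounting. The only notable difference is that you pre-aggregate each subprogram's lines into one first-occurrence record per distinct variable before routing, which lets the $2d$ bound of Lemma~\ref{split-values-computable} apply directly, whereas the paper forwards one record per program line.
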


	\begin{proof}
		We can assume that, for each $\kappa\in\range{\ell}$, the reducer$_\kappa$ has the subprogram $w$-\PBP\!$_\kappa$, the $\kappa$th block of input assignments $\{ ( x_j, v_j)\mid \kappa\cdot d \leq j \leq (\kappa+1)d-1 \}$, %
		and the split values $\sigma_0, \ldots, \sigma_{\ell}$ available. %
		The output of reducer$_\kappa$ then consists of the following: 
		\begin{enumerate}
			\item $\langle \kappa; (q, p, x_{i_p},f_p, g_p)\rangle{}$ for each line $(p, x_{i_p},f_p, g_p)$ in $w$-\PBP\!$_\kappa$, where $\sigma_{q}+1 \leq i_p \leq \sigma_{q+1}$.
			\item $\langle \kappa; (q, x_j,v_j)\rangle$ for each value assignment $(x_j,v_j)$ with $\sigma_{q}+1 \leq j \leq \sigma_{q+1}$.
		\end{enumerate}
		For any $\kappa\in\range{\ell}$, we need to bound the total number of outputs with key $\kappa$ from above.
		From the definition of the split values we see that this number is in $O(d)$ since it is bounded by the number of lines, which is at most $2d$, plus the number of assignments, which is at most $d$. 
		
		Naturally, the map function $\mu$ of the next round is defined by 
		\begin{enumerate}
			\item $\mu(\langle \kappa; (q, p, x_{j_p},f_p, g_p)\rangle{})= \langle q; (p, x_{j_p},f_p, g_p)\rangle{}$ and
			\item $\mu(\langle \kappa; (q, x_j,v_j)\rangle{})=\langle q; (x_j,v_j)\rangle{}$.
		\end{enumerate}
		For any $\kappa\in\range{\ell}$, the assignment variables $\partialassignmentq$ can be computed by the subsequent reduce function using the key-value pairs produced above. 
		For each $q\in\range{\ell}$, the reducer$_q$ has now available the lines of $w$-\PBP{} and the value assignments for the input variables between $x_{\sigma_{q}+1}$ and $x_{\sigma_{q+1}}$. 
		It can therefore go through all the program lines and determine, on the one hand, which value assignments they require and, on the other hand, to which subprogram they belong. 
		To required assignment information is then sent to the respective reducers by outputting $\langle q; (p \div{}d, p, x_{i_p},v_{i_p})\rangle{}$. 
	\end{proof}

	We finally obtain the desired inclusion by applying Theorem~\ref{NCeqPBP} and Lemmas~\ref{PBPsim} through~\ref{PBPtoINq}.
	
	\begin{theorem}\label{NCMRC0}
		We have $\NC^1 \subseteq \DMRC^0$.%
	\end{theorem}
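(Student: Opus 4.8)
The plan is to assemble the theorem from Barrington's characterization together with the five lemmas established above. Let $L\in\NC^1$. By Theorem~\ref{NCeqPBP} there is a logspace-uniform family of $5$-\PBP{}s recognizing $L$, equipped with accepting sets $F_n$. Logspace-uniformity means that any individual bit of the description of $\BranchingProgram_n$ and of $F_n$ is computable in logarithmic space, hence comfortably within the time and $O(N^{1-\epsilon})$-space budget of a single primitive; moreover the $5$-\PBP{} has polynomial length, so $N=N_\textnormal{O}+N_\textnormal{I}$ is polynomial in $n$ and a constant number of rounds is indeed $O(\log^0 N)$. Therefore, in a preparatory round we generate the $5$-\PBP{} and route it so that reducer$_q$ holds the subprogram block $5$-\PBP$_q$ (the lines with numbers in $[qd,(q+1)d)$), while the input assignment $\assignment$, presented as $\{\langle i;(x_i,v_i)\rangle\mid i\in\range{n}\}$, is split into $\ell$ contiguous blocks by variable index. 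This is precisely the configuration presupposed by Lemmas~\ref{computing-SharpP} through~\ref{PBPsim}.

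From this configuration the remaining argument is a plain concatenation of the lemmas, each contributing a constant number of $\DMRC$-rounds. First, Lemma~\ref{computing-SharpP} computes $\subprogramNumber(x_i)$ for every $i\in\range{n}$; then Lemma~\ref{prefix-MRC} turns these into the prefix-sums $y_i$; then Lemma~\ref{split-values-computable} produces the split values $\sigma_1,\dots,\sigma_{\ell-1}$ (with $\sigma_0=-1$ and $\sigma_\ell=n-1$ by convention) and delivers the relevant prefix-sums to the reducers that need them. Next, Lemma~\ref{PBPtoINq}, fed with the $5$-\PBP{}, the assignment $\assignment$, and the split values, computes the representation of $\partialassignmentq$ inside reducer$_q$ for every $q\in\range{\ell}$. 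Finally, Lemma~\ref{PBPsim} takes the representations of the $5$-\PBP{} and of all the $\partialassignmentq$ and decides in two more rounds whether $\assignment\in L$. The total number of rounds is the sum of finitely many constants, hence $O(1)\subseteq O(\log^0 N)$, which is exactly the $\DMRC^0$ bound; this gives $L\in\DMRC^0$.

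No step here is a genuine obstacle -- the delicate work, in particular the split-value construction that keeps every reducer within its $O(d)$ memory budget, was already done in Lemmas~\ref{computing-SharpP}--\ref{PBPtoINq}. What remains to verify is purely bookkeeping: that the key-value output format at the end of each lemma is the input format expected by the next (immediate from the encodings fixed just before each lemma); that the per-reducer memory bound $O(N^{1-\epsilon})$ and the map-output bound $O(N^{2(1-\epsilon)})$ survive the composition (they do, since each lemma respects them individually and only constantly many rounds are chained); and that generating the $5$-\PBP{} and the sets $F_n$ within the model is legitimate, which follows from logspace-uniformity together with the fact that each block $5$-\PBP$_q$ has size $O(d)=O(N^{1-\epsilon})$ and so fits into one reducer, using $\ell=O(N^{\epsilon})\le O(N^{1-\epsilon})$ reducers for $\epsilon\le 1/2$. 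Combining these observations yields the claim.
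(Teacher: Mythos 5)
Your proposal is correct and follows exactly the paper's own route: the paper derives Theorem~\ref{NCMRC0} precisely by combining Theorem~\ref{NCeqPBP} with Lemmas~\ref{PBPsim} through~\ref{PBPtoINq} in the order you describe. Your additional remarks on generating the $5$-\PBP{} from logspace-uniformity and on the bookkeeping of memory bounds only make explicit what the paper leaves implicit.
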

	
	\subsection{Simulating \texorpdfstring{$\NC^i$ For All $i\ge2$}{NCi, i >= 2}}\label{Sec32}
	
	For the higher levels in the hierarchy of Nick's class, we show how to simulate the involved circuits directly. We begin with a short outline of the proof. 
	
	Let $C_n=(V_n,E_n)$ be a $\NC^{i+1}$ circuit with an input of size $n$, given as a set of nodes and a set of directed edges, together with an input assignment $\assignment$. 
	The total size of $C_n$ in bits is $N_\textnormal{O}$, the total size of the input assignment in bits is $N_\textnormal{I}$, and $N = N_\textnormal{O}+N_\textnormal{I}$. 
	Note that $\size(C_n)$ is polynomial in $n$ and $\depth(C_n)\in O(\log^i n)$. 
	We will take the following steps to simulate the circuit $C_n$ with deterministic MapReduce computations: 
	
	\begin{enumerate}
		\item We compute the level of each node in $C_n$. 
		\item The nodes and edges are sorted by their level. 
		\item Both the circuit $C_n$ and the input assignment $\assignment$ are divided equally among the reducers. 
		\item We split the circuit into subcircuits computable in a constant number of rounds. 
		\item A custom communication scheme collects and constructs the complete subcircuits. 
		\item The entire circuit is evaluated via evaluation of the subcircuits.
	\end{enumerate}
	
	Note that equal division of $C_n$ in the third step is very different from the split in the forth one, where the parts may differ radically in size. 
	Great care must be taken so as to no violate any of the memory and time restrictions, 
	necessitating two unlike partitions. 
	The subsequent steps then need to mediate between these dissimilar divisions. 
	We will show that the steps $(1)$ to $(6)$ can be computed in $O(\log n)$, $O(1)$, $O(1)$, $O(1)$, $O(\log n)$ and $O(\depth(C_n)/\log n)$ rounds, respectively, yielding the desired theorem. 
	
	\begin{theorem}\label{ACiMRCi}
		We have $\NC^{i+1} \subseteq \DMRC^i$, for all $i\in\N_+$ and all $0<\epsilon<1/2$. 
	\end{theorem}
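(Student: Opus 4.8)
The plan is to prove the inclusion by simulating a given logspace‑uniform $\NC^{i+1}$ circuit family $(C_n)$ together with its input assignment $\assignment$ directly, following the six steps sketched above. The reason two very different partitions are needed is that the \emph{equal} division of step~$(3)$ only balances the raw bit‑description of $C_n$ and $\assignment$ across reducers of size $\Theta(N^{1-\epsilon})$, whereas step~$(4)$ splits $C_n$ \emph{structurally} into layers of $\Theta(\log n)$ consecutive levels; the later steps have to mediate between these two divisions while never exceeding the per‑reducer memory $O(N^{1-\epsilon})$ or the global map‑output bound $O(N^{2(1-\epsilon)})$.

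\emph{Steps $(1)$--$(3)$.} For step~$(1)$ I would compute $\level(v)$ for every node by a standard parallel longest‑path computation on a Sum‑CRCW‑PRAM using a polynomial number of processors and $O(\log n)$ time, and then invoke Corollary~\ref{CRCW-Kar}; its memory hypothesis is met because $\size(C_n)\le N$ and $\epsilon<1/2$, so this costs $O(\log n)$ $\DMRC$‑rounds. Step~$(2)$ sorts nodes and edges by level; as all levels lie in the small range $[0,\depth(C_n)]$, a bucketing sort finishes in $O(1)$ rounds, and so does the equal split of step~$(3)$ into $\Theta(N^{\epsilon})$ blocks of $\Theta(N^{1-\epsilon})$ bits.

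\emph{Steps $(4)$--$(6)$.} Fix the subcircuit depth $\subcircuitDepth=\Theta((1-2\epsilon)\log N)=\Theta(\log n)$ and cut $[0,\depth(C_n)]$ into $O(\depth(C_n)/\subcircuitDepth)=O(\depth(C_n)/\log n)$ contiguous \emph{bands} of $\subcircuitDepth$ levels each. Since $C_n$ has fan‑in at most $2$, the subgraph of a band consisting of all gates that can reach a fixed gate at the band's lowest level---call it a \emph{cone}---has at most $2^{\subcircuitDepth}\le N^{1-2\epsilon}$ gates; moreover, because $\level(g)=1+\max$ of the out‑neighbour levels, every gate of a band lies on a level‑decreasing path to that band's lowest level and hence belongs to some cone, so the cones of a band cover it completely. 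Step~$(4)$ merely labels each node and edge with its band and the cones containing it, in $O(1)$ rounds. Step~$(5)$ is the custom communication scheme: over $O(\log n)$ rounds it expands each cone's frontier (level by level, or by doubling) and ships all of each cone's gates and edges into one designated reducer; a single cone has $\le N^{1-\epsilon}$ gates, and the at most $\size(C_n)\le N$ cones contribute a total of at most $\size(C_n)\cdot 2^{\subcircuitDepth}\le N^{2(1-\epsilon)}$ gates, so both memory bounds hold throughout. Step~$(6)$ evaluates the bands from the one nearest the inputs down to the one containing the outputs: in each of the $O(\depth(C_n)/\log n)$ stages, every reducer holding a cone of the current band evaluates that cone locally in one round, after which $O(1)$ further rounds route the newly determined gate values---at most $O(\size(C_n))=O(N)$ per stage---to the cones of the lower bands that read them. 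Adding up the round counts $O(\log n),O(1),O(1),O(1),O(\log n),O(\depth(C_n)/\log n)$ and using $\depth(C_n)\in O(\log^{i+1}n)$ with $i\ge 1$ yields a total of $O(\log^{i}n)$ rounds, i.e.\ a $\DMRC^{i}$‑algorithm, as required.

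\emph{Main obstacle.} The delicate point is the simultaneous feasibility of steps~$(4)$--$(6)$. Each $\Theta(\log n)$‑deep subcircuit must fit into one reducer, forcing $2^{\subcircuitDepth}\le N^{1-\epsilon}$, while the heavily overlapping copies of \emph{all} gathered cones must also fit into the map‑output space, forcing the stronger bound $\size(C_n)\cdot 2^{\subcircuitDepth}\le N^{2(1-\epsilon)}$ and hence $2^{\subcircuitDepth}\le N^{1-2\epsilon}$. This makes $\subcircuitDepth=\Theta((1-2\epsilon)\log N)$ the right choice---it is $\omega(1)$, and therefore gives the crucial $\log n$‑factor speed‑up over the naive level‑by‑level evaluation, exactly when $\epsilon<1/2$, which is why, unlike in Theorem~\ref{NCMRC0}, the boundary value $\epsilon=1/2$ is excluded here. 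A secondary difficulty is implementing the cone expansion of step~$(5)$ within $O(\log n)$ rounds while keeping every intermediate reducer state inside the $O(N^{1-\epsilon})$ budget, and routing the evaluated values in step~$(6)$ without ever exceeding $O(N^{2(1-\epsilon)})$ total communication.
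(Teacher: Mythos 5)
Your proposal matches the paper's proof essentially step for step: levels via a prefix-sum-based Sum-CRCW-PRAM computation fed into Corollary~\ref{CRCW-Kar}, constant-round sorting and equal division, bands of depth $\subcircuitDepth=\Theta((1-2\epsilon)\log n)$ whose ``cones'' are exactly the paper's $L_i$-down-circuits of size $O(N^{1-2\epsilon})$, an $O(\log n)$-round gathering phase, and band-by-band evaluation in $O(\depth(C_n)/\log n)$ rounds, with the same identification of why the total-storage bound forces $\epsilon<1/2$. The one detail you elide is that an edge may jump over a band boundary and hence lie in no cone --- the paper subdivides such edges with dummy identity nodes (Lemma~\ref{jumping-edges}) and uses up-circuits to organize the routing --- but your prescription of routing every newly determined gate value to the lower-band cones that read it covers this implicitly.
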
	
	
	\subsubsection{Computing The Levels}
	
	We begin by showing how to compute the level of each node in the circuit in $O(\log n)$ $\DMRC$-rounds by simulating a CRCW-PRAM algorithm. 
	(We mention in passing that this step requires more than a constant number of rounds, which prevents us from obtaining the result for $\NC^1\subseteq \DMRC^0$ by simulating the circuits directly; the separate approach from Subsection~\ref{Sec31} via Barrington's theorem is thus required for this case.)
	
	In \cite{TMN}, an algorithm is presented that computes the levels of all nodes in a directed acyclic graph can on a CREW-PRAM with $O(n+m)$ processors in $O(\log^2 m)$ time, where $n$ and $m$ are the numbers of nodes and edges in the graph, respectively. 
	The first stage of this algorithm relies partly on the computation of prefix-sums, which can be computed much more efficiently when switching to a CRCW-PRAM, as we will show below. A straightforward adaptation of the analysis in \cite{TMN}, taking into account the maximum in-degree and out-degree and separating out the computation of prefix-sums, yields the following result. 
	\begin{lemma}\label{CRCW-labeling}
		Let $G=(V,E)$ be a directed acyclic graph with $n$ nodes, $m$ edges, 
		maximum in-degree $d_\In$, and maximum out-degree $d_\Out$. The level of each node in $G$ can then be computed on a CRCW-PRAM with $P\in O(m+P_\textnormal{P-Sum}(O(m)))$ processors and time $T\in O((\log m)\cdot ( T_\textnormal{P-Sum}(O(m))+\log \max\{d_\In,d_\Out\}))$, where $P_\textnormal{P-Sum}(q)$ and $T_\textnormal{P-Sum}(q)$ denote, respectively, the number of processors and the computation time to compute the prefix-sums of $q$ numbers on a CRCW-PRAM. 
	\end{lemma}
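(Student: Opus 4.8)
The plan is to leave the CREW-PRAM algorithm of~\cite{TMN} essentially intact and merely re-audit its running time under the stronger CRCW model, isolating the two places where a logarithmic factor is incurred. Recall that that algorithm computes the level function of an $n$-node, $m$-edge DAG in $O(\log m)$ successive \emph{stages}, each of which it implements in $O(\log m)$ time on $O(n+m)$ processors; since neither its correctness nor the bound $O(\log m)$ on the number of stages depends on the write-conflict policy, both carry over verbatim, and only the cost of a single stage needs to be reexamined.

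Within a stage, the $O(\log m)$ time bound of~\cite{TMN} arises from exactly two kinds of operations. The first is the computation of prefix-sums over arrays of length $O(m)$---used, for instance, to compact sublists and to lay out incidence arrays---which on a CREW-PRAM costs $\Theta(\log m)$; we instead invoke a black-box CRCW prefix-sum subroutine, paying $T_\textnormal{P-Sum}(O(m))$ time on $P_\textnormal{P-Sum}(O(m))$ processors and leaving the precise trade-off unspecified. The second is the aggregation of a value over all edges incident to a single node, such as taking the maximum of the levels of a node's out-neighbours, or symmetrically gathering and broadcasting along a node's in-edges; executed naively over a node of degree up to $m$ this again costs $\Theta(\log m)$, but since no node has more than $d_\In$ incoming or $d_\Out$ outgoing edges, each such aggregation is merely the evaluation of a balanced binary tree of depth at most $\lceil\log_2 d_\In\rceil$ or $\lceil\log_2 d_\Out\rceil$. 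Allocating one processor per edge, all of these trees---over all nodes at once---are evaluated in $O(\log\max\{d_\In,d_\Out\})$ time with $O(m)$ processors. (On a Sum-CRCW the additive aggregations even collapse to $O(1)$; the weaker bound $O(\log\max\{d_\In,d_\Out\})$ already suffices and also covers the maxima.) Every remaining primitive inside a stage is a pointwise update costing $O(1)$, so it is precisely the replacement of the CREW prefix-sum by its CRCW counterpart that produces the promised speed-up over~\cite{TMN}.

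Multiplying the per-stage cost $O\bigl(T_\textnormal{P-Sum}(O(m))+\log\max\{d_\In,d_\Out\}\bigr)$ by the $O(\log m)$ stages yields the claimed time bound $T\in O\bigl((\log m)\cdot(T_\textnormal{P-Sum}(O(m))+\log\max\{d_\In,d_\Out\})\bigr)$, and the processor count is dominated by the $O(m)$ processors used for the tree aggregations and bookkeeping together with the $P_\textnormal{P-Sum}(O(m))$ processors of the prefix-sum subroutine, i.e., $P\in O(m+P_\textnormal{P-Sum}(O(m)))$.

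The one point that genuinely has to be verified, rather than merely asserted, is the claim underlying the second paragraph: that \emph{every} step inside a stage of~\cite{TMN} is of one of the two listed types---a prefix-sum on $O(m)$ items, or a degree-bounded aggregation over a single node's incidence list---and that no step hides a global $\Omega(\log m)$ bottleneck that survives even when $d_\In$ and $d_\Out$ are small. This is the expected main obstacle, and it comes down to walking through the stages of~\cite{TMN} one by one: the first stage is the one already identified there as relying on prefix-sums, while each later stage performs doubling (in the style of list ranking) on auxiliary forests extracted from $G$, where a single doubling round is once more either a prefix-sum or an aggregation over a bounded incidence list.
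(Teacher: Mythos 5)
Your proposal is correct and takes the same route the paper does: the paper gives no explicit proof of this lemma, merely asserting that it follows from ``a straightforward adaptation of the analysis in [TMN], taking into account the maximum in-degree and out-degree and separating out the computation of prefix-sums,'' which is exactly the re-audit you carry out. Your write-up in fact makes explicit the per-stage accounting (prefix-sums versus degree-bounded aggregations) and honestly flags the one remaining verification---that no other $\Omega(\log m)$ bottleneck hides in a stage---that the paper's phrasing glosses over.
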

	
	In the following lemma, we aim to lower the time and memory requirements for computing prefix-sums on a CRCW-PRAM as far as possible. 
	
	\begin{lemma}\label{prefix_sumonCRCW}
		The prefix-sums of $q$ numbers can be computed on a CRCW-PRAM with $P\in O(q\log q)$ processors and memory $M\in O(q)$ in constant time.
	\end{lemma}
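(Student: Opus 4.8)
The plan is to exploit the Sum-CRCW semantics, under which the sum of $q$ numbers stored in $q$ distinct registers can be accumulated into a single register in $O(1)$ time with $q$ processors, each merely writing its value into that register. The obstacle to a direct application is that prefix-sums seemingly call for one such summation per output index, which naively costs $\Theta(q^2)$ processor–writes; we bring this down to $O(q\log q)$ by precomputing, once and for all, the sums of all \emph{dyadic blocks} of indices, of which there are only $O(q)$.

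First I would pad $q$ to the next power of two (at most doubling it) and consider, for every level $k\in\{0,\dots,\lceil\log q\rceil\}$, the partition of $\{0,\dots,q-1\}$ into the consecutive dyadic blocks $B_{k,a}=\{a\cdot 2^k,\dots,(a+1)2^k-1\}$. Since $\sum_{k}q/2^k\in O(q)$, we can reserve one register $\register_{k,a}$ for each block within $O(q)$ memory, and a preliminary step with $O(q)$ processors zero-initializes all of them. Each index $j$ lies in exactly one block per level, hence in $O(\log q)$ blocks in total; so for every pair $(j,k)$ I dedicate one processor that writes the input value $x_j$ into $\register_{k,\lfloor j/2^k\rfloor}$. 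This uses $O(q\log q)$ processors, and by the Sum-CRCW rule all these simultaneous writes leave each $\register_{k,a}$ holding precisely the sum of the inputs indexed by $B_{k,a}$ after a single step.

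Next I would invoke the standard segment decomposition of a prefix: for every $i\in\{0,\dots,q-1\}$, the set $\{0,\dots,i\}$ is the disjoint union of at most $\lceil\log q\rceil$ of the dyadic blocks $B_{k,a}$, one for each $1$-bit of the integer $i+1$, and the identity of these blocks is computable in $O(1)$ time from the binary representation of $i+1$, well within the $O(\log q)$-bit word size. Hence for each $i$ I dedicate $\lceil\log q\rceil$ processors, the $k$th of which—if the $k$th block in the decomposition of $\{0,\dots,i\}$ exists—adds the already-computed value of that block's register to the output register for $y_i$. This is again $O(q\log q)$ processors in total, touches only the $O(q)$ output registers, and by the Sum-CRCW rule delivers the correct value of every $y_i$ in one further step. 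Summing up, the computation uses $O(q\log q)$ processors, $O(q)$ memory, and a constant number of parallel steps (initialize, write block sums, write prefix sums).

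The main obstacle I anticipate is not a genuine difficulty but a matter of careful bookkeeping: one must verify that a single processor, knowing only its index $i$ (or $j$) and its level $k$, can in constant time both identify the dyadic block it is responsible for and decide whether that block occurs in the prefix decomposition of $\{0,\dots,i\}$—this rests on the elementary fact that the maximal dyadic intervals covering an initial segment correspond exactly to the set bits of the segment's length, together with the assumption that words have $\Omega(\log q)$ bits. One should also confirm that the three write phases are correctly sequenced and that the padding to a power of two changes none of the asymptotics, but these contribute only a constant additive number of rounds.
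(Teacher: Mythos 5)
Your proposal is correct and follows essentially the same route as the paper's proof: both compute the $O(q)$ dyadic block sums in a single step by letting the Sum-CRCW write semantics accumulate concurrent writes, and then assemble each prefix sum $y_i$ from the at most $\lceil\log q\rceil$ block sums selected by the set bits of $i+1$, for a total of $O(q\log q)$ processors, $O(q)$ memory, and constant time. The only differences are presentational (explicit padding to a power of two and the per-$(j,k)$ processor assignment), not substantive.
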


	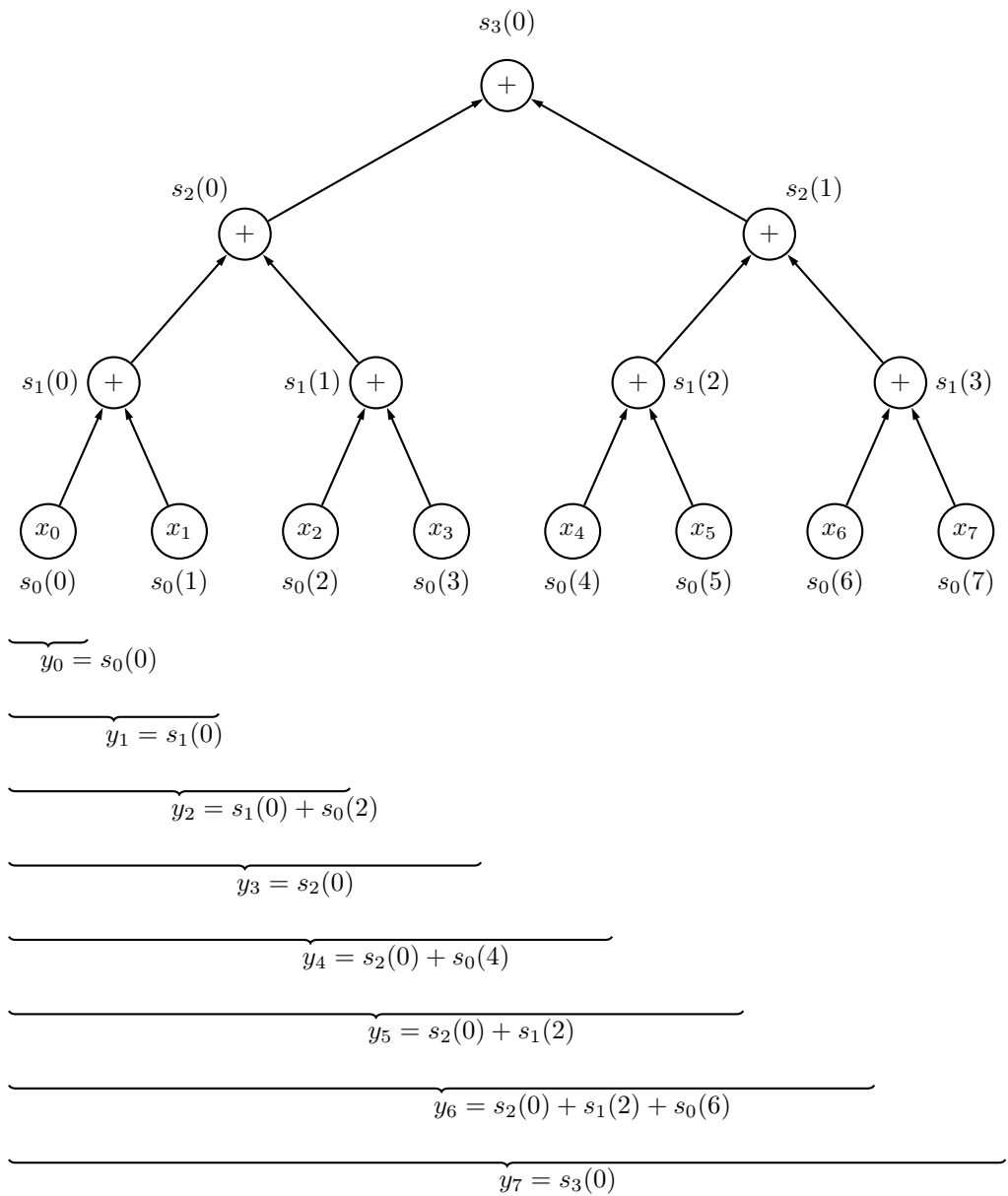
\begin{figure}[htbp]\centering
		
		\begin{tikzpicture}[xscale=1.75,yscale=1,every node/.style={draw,circle,thick},label distance=-3pt]
		\node[draw=none] (empty) at (0,-9.5) {};
		\node[label={[yshift=1.2ex]below:{$s_0(0)$}}] (00) at (0,0) {$x_0$};
		\node[label={[yshift=1.2ex]below:{$s_0(1)$}}] (01) at (1,0) {$x_1$};
		\node[label={[yshift=1.2ex]below:{$s_0(2)$}}] (02) at (2,0) {$x_2$};
		\node[label={[yshift=1.2ex]below:{$s_0(3)$}}] (03) at (3,0) {$x_3$};
		\node[label={[yshift=1.2ex]below:{$s_0(4)$}}] (04) at (4,0) {$x_4$};
		\node[label={[yshift=1.2ex]below:{$s_0(5)$}}] (05) at (5,0) {$x_5$};
		\node[label={[yshift=1.2ex]below:{$s_0(6)$}}] (06) at (6,0) {$x_6$};
		\node[label={[yshift=1.2ex]below:{$s_0(7)$}}] (07) at (7,0) {$x_7$};
		
		\node[label=left:{$s_1(0)$}] (10) at (0.5,2) {$+$};
		\node[label=left:{$s_1(1)$}] (11) at (2.5,2) {$+$};
		\node[label=right:{$s_1(2)$}] (12) at (4.5,2) {$+$};
		\node[label=right:{$s_1(3)$}] (13) at (6.5,2) {$+$};
		
		\node[label=above left:{$s_2(0)$}] (20) at (1.5,4) {$+$};
		\node[label=above right:{$s_2(1)$}] (21) at (5.5,4) {$+$};
		
		\node[label=above:{$s_3(0)$}] (30) at (3.5,6) {$+$};
		
		\draw[-Tip,thick] (00) to (10);
		\draw[-Tip,thick] (01) to (10);
		\draw[-Tip,thick] (02) to (11);
		\draw[-Tip,thick] (03) to (11);
		\draw[-Tip,thick] (04) to (12);
		\draw[-Tip,thick] (05) to (12);
		\draw[-Tip,thick] (06) to (13);
		\draw[-Tip,thick] (07) to (13);
		
		\draw[-Tip,thick] (10) to (20);
		\draw[-Tip,thick] (11) to (20);
		\draw[-Tip,thick] (12) to (21);
		\draw[-Tip,thick] (13) to (21);
		
		\draw[-Tip,thick] (20) to (30);
		\draw[-Tip,thick] (21) to (30);
		
		\clip (-.5,0) rectangle (7.5,-9.2);
		\draw[decorate,decoration={brace},thick] (0.3,-1.455)--(-0.3,-1.455) node [draw=none,midway,yshift=-1.9ex] {$\phantom{\ s_0(0)={}}y_0=s_0(0)$};
		\draw[decorate,decoration={brace},thick] (1.3,-2.455)--(-0.3,-2.455) node [draw=none,midway,yshift=-1.9ex] {$\phantom{\ s_1(0)={}}y_1=s_1(0)$};
		\draw[decorate,decoration={brace},thick] (2.3,-3.455)--(-0.3,-3.455) node [draw=none,midway,yshift=-1.9ex] {$\phantom{\ s_1(0)+s_0(2)={}}y_2=s_1(0)+s_0(2)$};
		\draw[decorate,decoration={brace},thick] (3.3,-4.455)--(-0.3,-4.455) node [draw=none,midway,yshift=-1.9ex] {$\phantom{\ s_2(0)={}}y_3=s_2(0)$};
		\draw[decorate,decoration={brace},thick] (4.3,-5.455)--(-0.3,-5.455) node [draw=none,midway,yshift=-1.9ex] {$\phantom{\ s_2(0)+s_0(4)={}}y_4=s_2(0)+s_0(4)$};
		\draw[decorate,decoration={brace},thick] (5.3,-6.455)--(-0.3,-6.455) node [draw=none,midway,yshift=-1.9ex] {$\phantom{\ s_2(0)+s_1(2)={}}y_5=s_2(0)+s_1(2)$};
		\draw[decorate,decoration={brace},thick] (6.3,-7.455)--(-0.3,-7.455) node [draw=none,midway,yshift=-1.9ex] {$\phantom{\ s_2(0)+s_1(2)+s_0(6)={}}y_6=s_2(0)+s_1(2)+s_0(6)$};
		\draw[decorate,decoration={brace},thick] (7.3,-8.455)--(-0.3,-8.455) node [draw=none,midway,yshift=-1.9ex] {$\phantom{\ s_3(0)={}}y_7=s_3(0)$};
		\end{tikzpicture}
		
		\caption{Calculation of the prefix-sums $s_i(j) = \sum_{p\in\range{(j+1)2^i}\setminus\range{j2^i}} x_p$ for every $i\in\range{1+\log q}$ and $j\in\range{q/2^i}$ for the example of $q=8$.}\label{Fig-prefixsums}

	\end{figure}
	
	\begin{proof}
		We use a Sum-CRCW-PRAM, where concurrent writes to the same memory register are resolved by adding up all simultaneously assigned numbers. \cite{GSZ}.
		Let $q$ numbers $x_0, x_1, \ldots, x_{q-1}$ be given as input. 
		Without loss of generality, we assume $q$ to be a power of $2$ and calculate $s_i(j) = \sum_{j2^i \leq p < (j+1)2^i} x_p$ for all $i\in\range{1+\log q}$ %
		and all $j\in\range{q/2^i+1}$; see Figure~\ref{Fig-prefixsums} for an illustrating example.
		
		Since each of the $q/2^i$ elements in $s_i$ is the sum of $2^i$ elements, we can---by allocating $q$ processors for each $i\in\range{1+\log q}$---compute every 
		$s_i(j)$ in a Sum-CRCW-PRAM with $O(q \log q)$ processors and $O(1)$ time. 
		
		We now describe how the prefix-sums $y(0), y(1), \ldots, y(q-1)$ are computed from the $s_i(j)$.
		Assume first that $j+1$ is a power of $2$, that is, $j+1=2^p$. Then we have $y(j)=s_{p}(0)$, so the value has already been computed. 
		If $j+1=2^p+1$ for some $p$, then we have $y(j)=s_{p}(0)+s_{0}(2^p)$, so we need to add two summands. 
		In general, $y(j)$ can be calculated as the sum of at most $\log q-1$ known summands. 
		
		Let $a_{\log q}^ja_{(\log q)-1}^j \ldots a_0^j$ be the binary representation of $j+1$.  %
		Now, we can see that
		\begin{align*}
			y(j) ={}& s_{\log q}(0) \cdot a_{\log q}^j\\
			&+ s_{(\log q) -1}((j+1-2^{(\log q)-1})\div{}2^{(\log q)-1}) \cdot a_{(\log q)-1}^j\\
			&+\ldots\\
			&+s_1((j+1-2^{1})\div{}2^{1})\cdot a_{1}^j\\
			&+s_0((j+1-2^{0})\div{}2^{0}) \cdot a_{0}^j; 
		\end{align*} 
		that is, $y(j)$ can be computed as the sum of all $s_p((j+1-2^p) \div{}2^p)$ such that $a_p^j=1$. 
		Thus, it is sufficient to supply a maximum of $(\log q)-1$ processors for the calculation of each $y(j)$ in a second time step, and the prefix-sums can be computed on a Sum-CRCW-PRAM with $O(q \log q)$ processors in constant time.
	\end{proof}
	
	We plug in the result of Lemma~\ref{prefix_sumonCRCW} into Lemma~\ref{CRCW-labeling} and then apply it to the graph $C_n$. Since its in-degrees and out-degrees are bounded by a constant $\Delta$, we have $m\le \Delta n/2\in O(n)$. Hence we can compute the levels of the nodes of $C_n$ on a CRCW-PRAM with $P\in O(N\log N)$ processors in time $T\in O(\log n)$. By Corollary~\ref{CRCW-Kar}, we obtain the following result.

	\begin{lemma}\label{computing-level}
		Computing the levels of all nodes in $C_n$ is in $\DMRC^1$.
	\end{lemma}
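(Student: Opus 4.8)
The plan is to reduce the whole statement to the two preceding lemmas together with Corollary~\ref{CRCW-Kar}, leaving only a parameter count. First I would recall that $C_n$ has fan-in at most $2$ and out-degree bounded by a constant $\Delta$; hence its number $m$ of edges is polynomial in the input size, so in particular $m\le N$, and $\max\{d_\In,d_\Out\}\le\Delta\in O(1)$.

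Next I would substitute Lemma~\ref{prefix_sumonCRCW}, which provides $P_{\textnormal{P-Sum}}(q)\in O(q\log q)$ and $T_{\textnormal{P-Sum}}(q)\in O(1)$, into Lemma~\ref{CRCW-labeling} applied to $G=C_n$. This yields a (Sum-)CRCW-PRAM computing the level of every node of $C_n$ with $P\in O(m\log m)\subseteq O(N\log N)$ processors, total memory $M\in O(m)\subseteq O(N)$, and running time
\[
T\in O\bigl((\log m)\cdot(T_{\textnormal{P-Sum}}(O(m))+\log\max\{d_\In,d_\Out\})\bigr)=O(\log m)\subseteq O(\log n).
\]

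It then remains to feed this into Corollary~\ref{CRCW-Kar}, whose only hypothesis is $(M+P)\log_{N^{1-\epsilon}}(M+P)\in O(N^{2(1-\epsilon)})$. Since $M+P\in O(N\log N)$ and, for any fixed $\epsilon$,
\[
\log_{N^{1-\epsilon}}(N\log N)=\frac{\log N+\log\log N}{(1-\epsilon)\log N}=\frac{1}{1-\epsilon}\Bigl(1+\tfrac{\log\log N}{\log N}\Bigr)\in O(1),
\]
we obtain $(M+P)\log_{N^{1-\epsilon}}(M+P)\in O(N\log N)$, and since $\epsilon<1/2$ implies $2(1-\epsilon)>1$ this is in $O(N^{2(1-\epsilon)})$. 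Corollary~\ref{CRCW-Kar} then provides a deterministic MapReduce simulation running in $O(T\log_{N^{1-\epsilon}}P)=O\bigl(\log n\cdot\log_{N^{1-\epsilon}}(N\log N)\bigr)\subseteq O(\log n)=O(\log^1 n)$ rounds, which is exactly the assertion $\DMRC^1$.

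Since the real work is carried by Lemmas~\ref{CRCW-labeling} and~\ref{prefix_sumonCRCW}, I do not expect a genuine obstacle; the one step requiring care is the base change of the logarithm in Corollary~\ref{CRCW-Kar}. This is also the precise place where the restriction $\epsilon<1/2$ inherited by Theorem~\ref{ACiMRCi} becomes indispensable: the simulation needs $\Theta(N\log N)$ processors, and only the strict inequality $2(1-\epsilon)>1$ keeps this inside the communication budget $O(N^{2(1-\epsilon)})$, which at $\epsilon=1/2$ would collapse to $O(N)$.
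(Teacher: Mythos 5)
Your proposal is correct and follows essentially the same route as the paper: plug Lemma~\ref{prefix_sumonCRCW} into Lemma~\ref{CRCW-labeling} to get a Sum-CRCW-PRAM with $P\in O(N\log N)$ processors, $M\in O(N)$ memory, and $T\in O(\log n)$ time, then verify the hypothesis of Corollary~\ref{CRCW-Kar} and conclude. Your closing observation that the bound $N\log N\in O(N^{2(1-\epsilon)})$ genuinely needs $\epsilon<1/2$ is in fact slightly more careful than the paper, which asserts the containment without comment (and in its conclusion attributes the exclusion of $\epsilon=1/2$ only to the sorting and subcircuit-construction steps).
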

	
	\begin{proof}
		From Lemmas~\ref{CRCW-labeling} and~\ref{prefix_sumonCRCW} we know that the level of each node in $C_n$ can be computed in $T\in O(\log n)$ time on a Sum-CRCW-PRAM with $P\in O(N+N\log N)$ processors. 
		Now, Corollary~\ref{CRCW-Kar} yields a MapReduce simulation of this Sum-CRCW-PRAM. We need to check that the conditions of Corollary~\ref{CRCW-Kar} are indeed all satisfied: 
		From $T\in O(\log n)$, $P\in O(N+N\log N)$, and $M\in O(N)$ follows $M+P\in O(N\log N)$ and $\log_{N^{1-\epsilon}}(M+P)\in O(1)$, hence we have $(M+P)\log_{N^{1-\epsilon}}(M+P)\in O(N^{2(1-\epsilon)})$.
		Thus, the level of each node in $C_n$ can be computed in $O(\log n)$ $\DMRC$-rounds. 
	\end{proof}

	\subsubsection{Sorting By Levels}
	Once the levels of all nodes are computed, each node in the circuit can be represented as $(\level(x_i), x_i)$. %
	Recall that the depth of $C_n$ is just the maximum level. 
	Since $\depth(C_n)\in O(\log^k n)$ for some $k\in\N_+$ and the number of nodes is bounded by the number of edges $\size(C_n)\in O(N)$, we can encode each pair $(\level(x_i), x_i)$
	by appending to a bit string of length $\log(c_1\log^kn)$ another one of  $\log(c_2N)=\log(cN\log^kn)$, for appropriate constants $c_1$ and $c_2$, which results in a bit string of length $\lg(cN\log^kn)$ for $c=c_1c_2\in\N$. 
	This enables us to identify each pair $(\level(x_i), x_i)$ with a different bit string, which can interpreted as an integer bounded by $cN\log^kn$. 
	We call this integer the \emph{sorting index} of node $x_i$.  
	Crucially, we chose the bit string to start with the encoding of the level. 
	Sorting the sorting indices thus means to sort the nodes of $C_n$ by their level. 
	The following lemma shows how prefix-sums can be used to perform such a sort so efficiently on a CRCW-PRAM that we can apply Corollary~\ref{CRCW-Kar} to simulate it in a constant number of $\DMRC$-rounds. 
	
	\begin{lemma}\label{integer-sorting}
		A CRCW-PRAM with $P\in O(D\log D)$ processors and memory $M\in O(D)$ can sort any subset $I\subseteq \{1,\dots,D\}$ of integers in constant time. 
	\end{lemma}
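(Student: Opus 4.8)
The plan is to sort $I$ by a counting-sort strategy that exploits the crucial structural fact that the elements of $I$ are pairwise distinct integers drawn from the small range $\{1,\dots,D\}$, so that each element can serve simultaneously as a key and as a memory address. Concretely, I will turn $I$ into its indicator vector, prefix-sum that vector to obtain the rank of every element, and then let each element write itself directly into its rank position of the output array.

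First I would allocate $D$ registers $\register_1,\dots,\register_D$, initialize all of them to $0$ with $D$ processors in a single parallel step, and then, writing $I=\{a_1,\dots,a_k\}$ with $k=|I|\le D$, let processor $i$ (for $i\in\{1,\dots,k\}$) write $1$ into $\register_{a_i}$. Since the $a_i$ are pairwise distinct, no two processors address the same register, so this step is unambiguous (and on a Sum-CRCW-PRAM it would in any case produce the correct $0/1$ values). This uses $O(D)$ processors, $O(D)$ memory, and constant time. Next I would invoke Lemma~\ref{prefix_sumonCRCW} on the $D$ numbers now held in $\register_1,\dots,\register_D$ to compute, in constant time with $O(D\log D)$ processors and $O(D)$ additional memory, their inclusive prefix-sums $y_1,\dots,y_D$; here $y_v$ is exactly the number of elements of $I$ lying in $\{1,\dots,v\}$. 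In particular, for each $a_i\in I$ the value $y_{a_i}$ is the rank of $a_i$ in the sorted order. Finally, with $k\le D$ processors and a fresh output array of $D$ registers, processor $i$ writes $a_i$ into output position $y_{a_i}$; distinct elements have distinct ranks, so again there are no write collisions. Overall the computation runs in constant time, uses $O(D\log D)$ processors — the bottleneck being the prefix-sum subroutine — and $O(D)$ memory, as claimed.

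The points that require a little care are purely bookkeeping. The indicator array and the prefix-sum array both have length $D$ rather than $|I|$, which is precisely why the memory bound is stated as $O(D)$; the indicator array must be zeroed before the elements of $I$ mark their positions, which is handled by the single initialization step above; and one should check that the subroutine of Lemma~\ref{prefix_sumonCRCW} really stays within $O(D)$ memory, which it does, since it stores only the $O(D)$ partial sums $s_i(j)$. I expect the main — and quite modest — obstacle to be simply stating the reduction cleanly enough that it plugs without friction into Corollary~\ref{CRCW-Kar} afterwards, when this PRAM sort will have to be simulated in a constant number of $\DMRC$-rounds; this is exactly where the hypothesis $\epsilon<1/2$ will be used, so that $D\log D$ together with the base-$N^{1-\epsilon}$ logarithmic overhead stays within the communication budget $O(N^{2(1-\epsilon)})$.
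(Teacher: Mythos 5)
Your proposal is correct and follows essentially the same route as the paper: build the $0/1$ indicator array over $\{1,\dots,D\}$, compute its prefix-sums via Lemma~\ref{prefix_sumonCRCW}, and use the prefix-sum values as ranks to place each element into the output array, all in constant time on a Sum-CRCW-PRAM with $O(D\log D)$ processors and $O(D)$ memory. The only cosmetic difference is that your final scatter step is driven by the $|I|$ input elements writing themselves to their rank positions, whereas the paper scans all $D$ candidate values $k$ and writes $k$ to $y[\hat z[k]]$ exactly when $\hat z[k]\neq\hat z[k-1]$; the two are interchangeable.
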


	\begin{proof}
		Recall that we use a Sum-CRCW-PRAM that sums up concurrent writes.  
		Assume that the input and output are stored in the arrays $x[0],\ldots,x[p-1]$ and $y[0], \ldots, y[p-1]$, respectively. We will use two auxiliary arrays $z[0], \ldots,z[D]$ and $\hat{z}[0],\ldots,\hat{z}[D]$ of size $D+1$.
		The algorithm works in for steps: 
		\begin{enumerate}
			\item Initialize $z$ by using $D+1\le P$ processors to set $z[k] \leftarrow 0$ for all $k\in\range{D+1}$.
			\item Use $p\le P$ processors in parallel to set $z[x[k]] \leftarrow 
			1$ 
			for all $k\in\range{p}$. 
			
			\item Compute the prefix-sums of the array $z$ and save them into $\hat{z}$.
			
			\item Use $D$ processors to set, for all $k\in\{1,\dots,D\}$ in parallel, 
			$y[\hat{z}[k]]\leftarrow k$ 
			if and only if  $\hat{z}[k] \neq \hat{z}[k-1]$.
		\end{enumerate}
		
		Since the prefix-sums of $D$ numbers can be computed by the Sum-CRCW PRAM with $P \in O(D\log D)$ processors and memory $M \in O(D)$ in constant time by Lemma~\ref{prefix_sumonCRCW}, the above algorithm stays within these bounds as well. 
		
		We now prove that this algorithm is correct. 
		First we observe that after step 2, for every $k\in\{1,\dots,D\}$, we have $z[k]=1$ if and only if one of the $p$ integers to be sorted is $k$. %
		Because $\hat z$ contains the prefix-sums of $z$, the value stored in $\hat z[k]$ hence tells us how many of the $p$ integers in $x$ are at most $k$. (Note that accordingly we always have $z[0]=\hat z[0]=0$.) 
		Thus $k$ is one of the integers in $x$ if and only if $\hat z[k]=\hat z[k-1]+1$; otherwise, we have $\hat z[k]=\hat z[k-1]$. %
		As a consequence, the array $\hat z$ contains exactly the indices of $x$, namely $\range{p}$, as values in non-decreasing order, that is, $0=\hat z[0]\le \hat z[1]\le\dots\le \hat z[D-1]\le \hat z[D]=p$. 
		Stepping through $\hat z$ from start to end, that is, from $k=0$ to $k=D$, we therefore observe an increment of 1 from $\hat z[k-1]$ to $\hat z[k]$ exactly if $k$ one of the integers to be sorted. This means that in step 4 the integers contained in $x$ are detected from left to right in ascending order and subsequently stored into $y$ in the same order. 
	\end{proof}

	Combining Lemma~\ref{integer-sorting} and 
	Corollary~\ref{CRCW-Kar} we obtain, by a careful analysis using $\epsilon\neq 1/2$, the promised result. 
	
	\begin{corollary}\label{Sort-Kar}
		Let $c\in\N$ and $0<\epsilon<1/2$. Any set of distinct integers from $\{1,\dots,\lceil cN\log^kn \rceil\}$ can be sorted in a constant number of $\DMRC$-rounds. 
	\end{corollary}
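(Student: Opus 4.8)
The plan is to obtain the result by plugging the integer-sorting routine of Lemma~\ref{integer-sorting} into the technical tool of Corollary~\ref{CRCW-Kar}, the only real work being a bookkeeping check that the various polylogarithmic overheads stay within the bounds that corollary demands. Concretely, I would set $D=\lceil cN\log^k n\rceil$. Since the circuit $C_n$ under consideration has polynomial size, $N$ is polynomial in $n$, so $\log n=\Theta(\log N)$ and therefore $D\in O(N\log^k N)$; in particular $D=N^{1+o(1)}$. By Lemma~\ref{integer-sorting}, any subset of $\{1,\dots,D\}$ of distinct integers can be sorted on a Sum-CRCW-PRAM with $P\in O(D\log D)\subseteq O(N\log^{k+1}N)$ processors, memory $M\in O(D)$, and time $T\in O(1)$.

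Next I would bound the number of rounds produced by Corollary~\ref{CRCW-Kar}. From $M+P=N^{1+o(1)}$ we get $\log(M+P)=(1+o(1))\log N$, hence $\log_{N^{1-\epsilon}}(M+P)=\frac{\log(M+P)}{(1-\epsilon)\log N}=\frac{1}{1-\epsilon}+o(1)\in O(1)$. Together with $T\in O(1)$ this gives $T\log_{N^{1-\epsilon}}P\in O(1)$, so the simulation, once shown to be applicable, runs in a constant number of $\DMRC$-rounds, as claimed.

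The one step that requires care — and the place where the hypothesis $0<\epsilon<1/2$ is genuinely needed rather than just $0<\epsilon\le1/2$ — is verifying the side condition $(M+P)\log_{N^{1-\epsilon}}(M+P)\in O(N^{2(1-\epsilon)})$ of Corollary~\ref{CRCW-Kar}. Using the estimate $\log_{N^{1-\epsilon}}(M+P)\in O(1)$ from above, the left-hand side is $\Theta(N\log^{k+1}N)$. Because $\epsilon<1/2$, the exponent $2(1-\epsilon)$ strictly exceeds $1$, so $N^{2(1-\epsilon)}/N=N^{1-2\epsilon}$ is a positive power of $N$ and therefore asymptotically dominates the fixed power $\log^{k+1}N$; thus $N\log^{k+1}N\in O(N^{2(1-\epsilon)})$ and the side condition holds. (If one allowed $\epsilon=1/2$, the right-hand side would shrink to $O(N)$, which the slightly superlinear communication $\Theta(N\log^{k+1}N)$ overshoots — this is precisely why Theorem~\ref{ACiMRCi} and the present corollary are restricted to $\epsilon<1/2$.) With both hypotheses of Corollary~\ref{CRCW-Kar} now in hand, its conclusion gives the desired constant-round $\DMRC$-computation, completing the proof.
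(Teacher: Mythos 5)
Your proposal is correct and follows essentially the same route as the paper: apply Lemma~\ref{integer-sorting} with $D\in O(N\log^k n)$ and feed the resulting constant-time Sum-CRCW-PRAM into Corollary~\ref{CRCW-Kar}, with the crux being that $\epsilon<1/2$ makes $N^{2(1-\epsilon)}$ strictly superlinear and hence large enough to absorb the $N\log^{k+1}N$ communication. The paper merely packages the same calculation by fixing an auxiliary exponent $\zeta<1-2\epsilon$ and bounding $D\log D\in O(N^{1+\zeta})\subseteq O(N^{2(1-\epsilon)})$, which is equivalent to your direct comparison of $\log^{k+1}N$ against $N^{1-2\epsilon}$.
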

	
	\begin{proof}
		We apply Lemma~\ref{integer-sorting} with $D\in O(N\log^k n)$. 
		We have $D\in O(N\log^k N)\subseteq O(N^{1+\zeta})$ and thus also $D\log D\in O(N^{1+\zeta})$ for any constant $\zeta>0$. 
		Choose any $\zeta<1-2\epsilon$, which is possible for $\epsilon<1/2$. The sorting is then possible on a CRCW-PRAM with $O(N^{1+\zeta})$ processors and $O(N^{1+\zeta})$ memory in constant time. 
		By Corollary~\ref{CRCW-Kar}, this CRCW-PRAM can be simulated in a constant number of $\DMRC$-rounds because $\log_{N^{1-\epsilon}}(N^{1+\zeta})=(1+\zeta)/(1-\epsilon)\in O(1)$ and $O(N^{1+\zeta})\subseteq O(N^{2(1-\epsilon)})$.
	\end{proof}
	
	Once all the nodes are sorted by their sorting index (and therefore implicitly by their level), we can enumerate them in ascending order using a \emph{sorting index} $j$; that is, we store each node as the key-value pair $\langle j; (\level(v),v)\rangle{}$. 
	Clearly, we obtain an analogous representation of the edges in the form 
	$\langle i; 
	((j, (\level(v),v),(j', (\level(v'),v'))%
	\rangle{}$%
	, which will prove useful later on. 

	\subsubsection{Division of Circuit And Assignment Among Reducers}
	
	As we have already seen when discussing the branching programs, an assignment $\assignment$ to input variables $\mathcal{X}=\{x_0,x_1, \ldots, x_{n-1}\}$ can be represented as a set 
	$\{ \langle i; (x_i, v_i)\rangle\mid i\in\range{n}\}$ of key-value pairs, where $\assignment(x_i)=v_i \in \{0,1\}$. 
	
	The circuit $C_n$ is now divided into $\ell=N_\textnormal{O}^{\epsilon}$ subsets of edges according to the sorting indices and input values that are assigned to each subset as in the case of branching programs.
	For every $q\in\range{\ell}$, %
	let $C_n^q=\{((j, \level(v),v), (j',\level(v'),v'))\mid qd \leq j \leq (q+1)d-1\}$, where $d=N_\textnormal{O}^{1-\epsilon}$, be the $q$th subset. 
	Note that $|C_n^q|\in O(d)$.
	For every $q\in\range{\ell}$, the set of variables appearing in $C_n^q$ is denoted as $\INq$ and the assignment $\partialassignmentq$ to $\INq$ is represented as
	$\{\langle j; x_{q,j}, v_{q,j}\rangle\mid j\in\range{|\partialassignmentq|} \}$, where 
	$x_{q,j}$ is the $j$th variable
	that appears as an input in $C_n^q$, and $v_{q,j}$ is its assignment value. 
	Just as seen in Lemma~\ref{PBPtoINq} for the case of a branching program, we can now compute $\partialassignmentq$ from $C_n$ and $\assignment$ for all $q\in\range{\ell}$, yielding the following lemma.
	
	\begin{lemma}
		Computing $\partialassignmentq$ from $C_n$ and $\assignment$ is in $\DMRC^0$ for every $q\in\range{\ell}$.%
	\end{lemma}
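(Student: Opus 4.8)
The plan is to transfer the four-step construction of Lemmas~\ref{computing-SharpP} through~\ref{PBPtoINq} from branching programs to circuits, with the role of the subprograms now played by the edge blocks $C_n^q$ and with heavy use of the fact that in $C_n$ every node---in particular every input node---has fan-out bounded by the constant $\Delta$ fixed for the family.

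First, for each input variable $x_i$ with $i\in\range{n}$ I would compute $\subprogramNumber(x_i)$, now defined as the number of blocks $C_n^q$ in which $x_i$ occurs as the source of an edge. Exactly as in Lemma~\ref{computing-SharpP}, each reducer$_q$ holds $C_n^q$ and can list, after deduplication, the input variables appearing in it; a Sum-CRCW-PRAM with $O(N)$ processors and registers $\register_0,\dots,\register_{n-1}$ then adds up these occurrences in constant time, so that this step is in $\DMRC^0$ by Corollary~\ref{CRCW-Kar}. The crucial quantitative point is that $\subprogramNumber(x_i)\le\Delta\in O(1)$ by the fan-out bound, while $\subprogramNumber(x_i)\ge 1$ after dropping, without loss of generality, any input node with no outgoing edge. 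Second, I would compute the prefix sums $y_i=\sum_{j=0}^i\subprogramNumber(x_j)$ by the same three-round local--global--recombine procedure as in Lemma~\ref{prefix-MRC}; that argument uses nothing specific to branching programs. Third, from the prefix sums I would read off the split values $\sigma_q=\max\{j\in[n]\mid y_j\le qd\}$, with $\sigma_0=-1$ and $\sigma_\ell=n-1$, exactly as in Lemma~\ref{split-values-computable}: reducer$_q$ holds the consecutive prefix sums $y_{qd},\dots,y_{(q+1)d-1}$ and, after one more round in which each neighbour copies one prefix sum into it, holds a window straddling the threshold $qd$, from which it determines $\sigma_q$.

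Finally, given $C_n$ (sorted by level and cut into the blocks $C_n^q$), the assignment $\assignment$ (held in the index blocks $\{(x_j,v_j)\mid qd\le j\le(q+1)d-1\}$), and the split values, I would run the interface round of Lemma~\ref{PBPtoINq} essentially verbatim. Reducer$_\kappa$ scans $C_n^\kappa$ and its index block; for every edge of $C_n^\kappa$ whose source is an input variable $x_i$ it forwards that edge under the key $q$ with $\sigma_q+1\le i\le\sigma_{q+1}$, and it likewise forwards every assignment $(x_j,v_j)$ of its index block to the key $q$ with $\sigma_q+1\le j\le\sigma_{q+1}$. After the shuffle, reducer$_q$ holds all input edges of $C_n$ with source index in $[\sigma_q+1,\sigma_{q+1}]$ together with the values of exactly those variables; matching the two and routing an edge of $C_n^{q'}$ with source $x_i$ to key $q'$ together with $v_i$ produces, after one more shuffle, the sets $\partialassignmentq=\{\langle j;x_{q,j},v_{q,j}\rangle\mid j\in\range{|\partialassignmentq|}\}$ for all $q\in\range{\ell}$.

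The step that actually needs checking---and the main obstacle---is the memory accounting in this interface round, which is the one place where the argument is not a word-for-word copy of the branching-program case. A reducer acting in its intermediate role must store every input edge of $C_n$ whose source index lies in its window $[\sigma_q+1,\sigma_{q+1}]$ and the at most one value per such variable. The number of variables in that window is $\sigma_{q+1}-\sigma_q\le\sum_{i=\sigma_q+1}^{\sigma_{q+1}}\subprogramNumber(x_i)=y_{\sigma_{q+1}}-y_{\sigma_q}$, and since $y_{\sigma_q+1}>qd$ forces $y_{\sigma_q}>qd-\Delta$ whereas $y_{\sigma_{q+1}}\le(q+1)d$, this number is at most $d+\Delta$; multiplying by the fan-out bound $\Delta$ shows that the number of edges and assignments touched by any reducer in the round is $O(d)\subseteq O(N^{1-\epsilon})$. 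Hence every reducer stays within its memory budget, the single CRCW-PRAM used (the one computing $\subprogramNumber$) has $O(N)$ processors and $O(N)$ memory and is thus simulable by Corollary~\ref{CRCW-Kar} in a constant number of $\DMRC$-rounds, and the remaining rounds are native MapReduce rounds, so that computing $\partialassignmentq$ from $C_n$ and $\assignment$ for all $q\in\range{\ell}$ takes a constant number of $\DMRC$-rounds in total.
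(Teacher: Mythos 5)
Your proof is correct and follows exactly the route the paper intends: the paper gives no proof for this lemma beyond the remark that it proceeds ``just as seen in Lemma~\ref{PBPtoINq},'' and you carry out precisely that transfer of the four-step machinery, correctly supplying the one detail that is not verbatim---namely that the memory accounting for the interface round now rests on the constant degree bound $\Delta$ of $C_n$ (giving $\subprogramNumber(x_i)\le\Delta$ and a window of size at most $d+\Delta$) rather than on the bound $\subprogramNumber(x_i)\le\ell$ used for branching programs. No gaps.
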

	
	We can therefore assume that each input node is represented by $\langle j; (\level(x_{j_i}),x_{j_i}, v_{j_i})\rangle$, a key-value pair that is computed from $C_n^q$ and $\partialassignmentq$ for $q\in\range{\ell}$ in a single $\DMRC$-round.

	\subsubsection{Division Into Subcircuits By Levels}
	We divide $C_n=(V_n,E_n)$ into as few subcircuits as possible such that the simulation of each subcircuit is in $\DMRC^0$ and we can evaluate $C_n$ by evaluating the subcircuits sequentially. 
	
	Given $v\in V_n$ and $\delta\in\N$, %
	we define the {\em $v$-down-circuit} $C_{\delta}^{\down}(v)=(V_{\delta}^{\down}(v), E_{\delta}^{\down}(v))$ of depth $\delta$ %
	to be the subcircuit of $C_n$ induced by 
	$V_{\delta}^{\down}(v)=\{ u\mid \level(v) \leq \level(u) \leq \level(v)+\delta, u \rightarrow^* v \}$, 
	where $u \rightarrow^* v $ means that there is a directed path of any length (including $0$) from $u$ to $v$ in $C_n$. 
	The {\em $v$-up-circuit} $C_{\delta}^{\up}(v)=(V_{\delta}^{\up}(v), E_{\delta}^{\up}(v))$ of depth $\delta$ is analogously the subcircuit of $C_n$ induced by 
	$V_{\delta}^{\up}(v)=\{ u\mid \level(v)-\delta \leq \level(u) \leq \level(v), v \rightarrow^* u \}$.
	
	When dividing $C_n$ into subcircuits we have two conflicting goals. On the one hand, we want as few of them as possible, which implies that they have to be of great depth. On the other hand, we need to simulate them in MapReduce without exceeding the memory bounds. A depth in $O(\log n)$ turns out to be the right choice. 
	Let $s=\formeralpha (\log n)/\log \Delta$, where $\Delta\ge 2$ is a constant bounding 
	the maximum degree 
	of $C_n$ and $\formeralpha$ is an arbitrary constant satisfying $0<\formeralpha<1-2\epsilon$. 
	(Note that such a $\formeralpha$ exists exactly if $\epsilon<1/2$.) %
	Since a tree of depth $\subcircuitDepth$ and maximum degree bounded by a constant $\Delta$ contains at most $\sum_{i=1}^s\Delta^i$ 
	edges, %
	their size is in $O(\Delta^s)=O(n^\formeralpha)\subseteq O(N^\formeralpha)$.  Hence each reducer may contain up to  $N^{1-\epsilon}/N^\formeralpha$ %
	such subcircuits 
	without exceeding the memory constraint of $O(N^{1-\epsilon})$; see Figure~\ref{Fig-down-up-circuits-in-reducer}. We denote this number of allowed subcircuits per reducer by $\beta=N^{1-\epsilon-\formeralpha}$. 
	
	\begin{figure}[htbp]\centering
		
		\begin{tikzpicture}[xscale=1,yscale=.8]
		\draw[very thick] (0,0) rectangle (10,12);
		\draw[decorate,decoration={brace},thick] (-.2,-0.2)--(-0.2,12+0.2) node [draw=none,midway,xshift=-2em,rotate=30] {$\beta$ edges};
		\node[xshift=3em,yshift=2ex] at (0,12) {$\textnormal{reducer}_q$};
		\draw[thick] (1,5.5)--+(8,0)--+(0,4.5)--+(8,4.5)--+(0,0); 
		\node[style={align=left}] at (5.0,11) {For every key-value pair $\langle q; (j_v,\textnormal{level}(v),v)\rangle$ \\such that there is an $i\in\mathbb{N}$ with $\textnormal{level}(v)=L_i%
			$:};
		
		\node[align=center] at (5,9.4) {$v$-up-circuit of\\depth $L_{i-1}-L_i$};
		\node[draw,fill=white,thick,circle,inner sep=2pt] at (5,7.6) {$v$};
		\node[align=center] at (5,6.1) {$v$-down-circuit of\\depth $L_i-L_{i+1}$};

		\node[style={align=left}] at (5.0,4) {For every key-value pair $\langle q; (j_v,\textnormal{level}(v),v)\rangle$\\such that $\textnormal{level}(v)\neq L_i$ for all $i\in\mathbb{N}$:
		};

		\node[draw,fill=white,thick,circle,inner sep=2pt] (center) at (8,2) {$v$};
		\node[draw,fill=white,thick,circle,inner sep=2pt] (northeast) at (9,2.8) {$\phantom{v}$};
		\node[draw,fill=white,thick,circle,inner sep=2pt] (northwest) at (7,2.8) {$\phantom{v}$};
		\node[draw,fill=white,thick,circle,inner sep=2pt] (southwest) at (7,1.2) {$\phantom{v}$};
		\node[draw,fill=white,thick,circle,inner sep=2pt] (southeast) at (9,1.2) {$\phantom{v}$};
		
		\draw[-Tip,thick] (center) to (northeast);
		\draw[-Tip,thick] (center) to (northwest);
		\draw[-Tip,thick] (southwest) to (center);
		\draw[-Tip,thick] (southeast) to (center);
		
		\draw[dotted,rounded corners] (.5,2.15) rectangle (9.5,3.2);
		\draw[dotted,rounded corners] (.5,.8) rectangle (9.5,1.85);
		\node at (3.5,2.9) {$v$-up-circuit of depth 1};
		\node at (3.5,1.6) {$v$-down-circuit of depth 1};
		
		\node (empty) at (11.5,0) {};
		\node (empty) at (0,-1) {};		
		
		\end{tikzpicture}
		
		\caption{The up-circuits and down-circuits constructed in reducer$_q$, comprising up to $\beta$ edges.}\label{Fig-down-up-circuits-in-reducer}
	\end{figure}
	
	For each $i\in\range{\lceil \depth(C_n)/s \rceil+1}$, we define $L_i=i\cdot s$. 
	For every node $v$ on level $L_i$---that is, with $\level(v)=L_i$---we call the $v$-down-circuit ($v$-up-circuit, resp.) of depth $\subcircuitDepth$ a \emph{$L_i$-down-circuit} (\emph{$L_i$-up-circuit}, resp.).
	We will construct in each reducer the $v$-down-circuits and $v$-up-circuits of depth $1$ of all its nodes. %
	From those we then construct all $L_i$-down-circuits and $L_i$-up-circuits for every $i$. %
	Note that we can evaluate all $L_i$-down-circuits if the values of the nodes of level $L_{i+1}$ are given. 
	The values of the nodes $v$ of level $L_{i+1}$ that are necessary to compute the $L_i$-up-circuits are then known from the $L_{i+1}$-down-circuits.
	
	When the circuit $C_n$ is divided into $L_i$-down-circuits, there may exist edges of $C_n$ that are not contained in any $L_i$-down-circuit. If an edge $((j_u, \level(u), u),(j_v, \level(v),v))$ satisfies $L_{i_u} \leq \level(u) \leq L_{i_u+1}$ and $L_{i_v} \leq \level(v) \leq L_{i_v+1}$ for $i_u \neq i_v$, then this edge is not included in any $L_{i_u}$-down-circuit nor any $L_{i_v}$-down-circuit. We call such edges {\em level-jumping edges}; see Figure~\ref{Fig-jumping-edge} for an example.
	We would like to replace every level-jumping edge $(u,v)$ by a path from $u$ to $v$ that consists only of edges that %
	will be part of the respective $L_i$-down-circuits and  $L_i$-up-circuits in the resulting, {\em augmented} circuit.
	The following lemma states that this is possible without increasing the size by too much.

	\begin{figure}[htbp]\centering
		
		\begin{tikzpicture}[xscale=1.4,yscale=1.2,every node/.style={draw,fill=black,semithick,circle, minimum size=3pt, inner sep=0pt, outer sep=0pt}]
		\coordinate (LeftUpperTriangleTop) at (0,0);
		\coordinate (LeftUpperTriangleBase) at ([yshift=-6em]LeftUpperTriangleTop);
		\coordinate (LeftUpperTriangleLeft) at ([xshift=-2.5em]LeftUpperTriangleBase);
		\coordinate (LeftUpperTriangleRight) at ([xshift=2.5em]LeftUpperTriangleBase);
		
		\coordinate (LeftMiddleTriangleTop) at (LeftUpperTriangleBase);
		\coordinate (LeftMiddleTriangleBase) at ([yshift=-4em]LeftMiddleTriangleTop);
		\coordinate (LeftMiddleTriangleLeft) at ([xshift=-3.5em]LeftMiddleTriangleBase);
		\coordinate (LeftMiddleTriangleRight) at ([xshift=3.5em]LeftMiddleTriangleBase);
		
		\coordinate (LeftLowerTriangleTop) at (LeftMiddleTriangleBase);
		\coordinate (LeftLowerTriangleBase) at ([yshift=-5em]LeftLowerTriangleTop);
		\coordinate (LeftLowerTriangleLeft) at ([xshift=-3em]LeftLowerTriangleBase);
		\coordinate (LeftLowerTriangleRight) at ([xshift=3em]LeftLowerTriangleBase);

		\coordinate (RightUpperTriangleTop) at ([xshift=12em]LeftUpperTriangleTop);
		\coordinate (RightUpperTriangleBase) at ([yshift=-6em]RightUpperTriangleTop);
		\coordinate (RightUpperTriangleLeft) at ([xshift=-2.5em]RightUpperTriangleBase);
		\coordinate (RightUpperTriangleRight) at ([xshift=2.5em]RightUpperTriangleBase);

		\coordinate (RightMiddleTriangleTop) at (RightUpperTriangleBase);
		\coordinate (RightMiddleTriangleBase) at ([yshift=-4em]RightMiddleTriangleTop);
		\coordinate (RightMiddleTriangleLeft) at ([xshift=-3.5em]RightMiddleTriangleBase);
		\coordinate (RightMiddleTriangleRight) at ([xshift=3.5em]RightMiddleTriangleBase);
		
		\coordinate (RightLowerTriangleTop) at (RightMiddleTriangleBase);
		\coordinate (RightLowerTriangleBase) at ([yshift=-5em]RightLowerTriangleTop);
		\coordinate (RightLowerTriangleLeft) at ([xshift=-3em]RightLowerTriangleBase);
		\coordinate (RightLowerTriangleRight) at ([xshift=3em]RightLowerTriangleBase);

		\coordinate (LeftOfLeftUpperTriangleTop) at ([xshift=-5em]LeftUpperTriangleTop);
		\coordinate (LeftOfLeftMiddleTriangleTop) at ([xshift=-5em]LeftMiddleTriangleTop);
		\coordinate (LeftOfLeftLowerTriangleTop) at ([xshift=-5em]LeftLowerTriangleTop);
		\coordinate (RightOfLeftUpperTriangleTop) at ([xshift=4em]LeftUpperTriangleTop);
		\coordinate (RightOfLeftMiddleTriangleTop) at ([xshift=4em]LeftMiddleTriangleTop);
		\coordinate (RightOfLeftLowerTriangleTop) at ([xshift=4em]LeftLowerTriangleTop);
		\coordinate (LeftOfRightUpperTriangleTop) at ([xshift=-4em]RightUpperTriangleTop);
		\coordinate (LeftOfRightMiddleTriangleTop) at ([xshift=-4em]RightMiddleTriangleTop);
		\coordinate (LeftOfRightLowerTriangleTop) at ([xshift=-4em]RightLowerTriangleTop);
		\coordinate (RightOfRightUpperTriangleTop) at ([xshift=6em]RightUpperTriangleTop);
		\coordinate (RightOfRightMiddleTriangleTop) at ([xshift=6em]RightMiddleTriangleTop);
		\coordinate (RightOfRightLowerTriangleTop) at ([xshift=6em]RightLowerTriangleTop);

		\coordinate (v1) at ([xshift=-.5em,yshift=1.2em]LeftLowerTriangleTop);
		\coordinate (u1) at ([xshift=-.5em,yshift=-3em]LeftLowerTriangleTop);
		\coordinate (v1right) at ([xshift=-.5em,yshift=1.2em]RightLowerTriangleTop);
		\coordinate (u1right) at ([xshift=-.5em,yshift=-3em]RightLowerTriangleTop);
		\coordinate (u1v1dum) at ([xshift=-1.2em,yshift=0em]RightLowerTriangleTop);

		\coordinate (v2) at ([xshift=.5em,yshift=1em]LeftMiddleTriangleTop);
		\coordinate (u2) at ([xshift=.5em,yshift=-4em]LeftLowerTriangleTop);
		\coordinate (v2right) at ([xshift=1em,yshift=1em]RightMiddleTriangleTop);
		\coordinate (u2right) at ([xshift=1em,yshift=-4em]RightLowerTriangleTop);
		\coordinate (u2v2dum2) at ([xshift=1.4em,yshift=0em]RightMiddleTriangleTop);
		\coordinate (u2v2dum1) at ([xshift=1.8em,yshift=0em]RightLowerTriangleTop);

		\draw[semithick] (LeftUpperTriangleLeft)--(LeftUpperTriangleRight)--(LeftUpperTriangleTop)--cycle;
		\draw[semithick] (LeftLowerTriangleLeft)--(LeftLowerTriangleRight)--(LeftLowerTriangleTop)--cycle;
		\draw[semithick] (LeftMiddleTriangleLeft)--(LeftMiddleTriangleRight)--(LeftMiddleTriangleTop)--cycle;
		\draw[semithick] (RightUpperTriangleLeft)--(RightUpperTriangleRight)--(RightUpperTriangleTop)--cycle;
		\draw[semithick] (RightMiddleTriangleLeft)--(RightMiddleTriangleRight)--(RightMiddleTriangleTop)--cycle;
		\draw[semithick] (RightLowerTriangleLeft)--(RightLowerTriangleRight)--(RightLowerTriangleTop)--cycle;

		\draw[thin] (LeftOfLeftUpperTriangleTop)--(RightOfLeftUpperTriangleTop);
		\draw[thin] (LeftOfLeftMiddleTriangleTop)--(RightOfLeftMiddleTriangleTop);
		\draw[thin] (LeftOfLeftLowerTriangleTop)--(RightOfLeftLowerTriangleTop);
		
		\draw[thin] (LeftOfRightUpperTriangleTop)--(RightOfRightUpperTriangleTop);
		\draw[thin] (LeftOfRightMiddleTriangleTop)--(RightOfRightMiddleTriangleTop);
		\draw[thin] (LeftOfRightLowerTriangleTop)--(RightOfRightLowerTriangleTop);

		\node[draw=none,fill=white,xshift=-2.5ex,yshift=0ex] at (LeftOfLeftUpperTriangleTop) {$L_i$};
		\node[draw=none,fill=white,xshift=-2.5ex,yshift=0ex] at (LeftOfLeftMiddleTriangleTop) {$L_{i+1}$};
		\node[draw=none,fill=white,xshift=-2.5ex,yshift=0ex] at (LeftOfLeftLowerTriangleTop) {$L_{i+2}$};
		
		\node[draw=none,fill=white,xshift=2.5ex,yshift=0ex] at (RightOfRightUpperTriangleTop) {$L_i$};
		\node[draw=none,fill=white,xshift=2.5ex,yshift=0ex] at (RightOfRightMiddleTriangleTop) {$L_{i+1}$};
		\node[draw=none,fill=white,xshift=2.5ex,yshift=0ex] at (RightOfRightLowerTriangleTop) {$L_{i+2}$};

		\node at (LeftUpperTriangleTop) {};
		\node at (LeftMiddleTriangleTop) {};
		\node at (LeftLowerTriangleTop) {};
		\node at (RightUpperTriangleTop) {};
		\node at (RightMiddleTriangleTop) {};
		\node at (RightLowerTriangleTop) {};
		
		\node[label=left:$v$] (v1node) at (v1) {};
		\node[label=left:$u$] (u1node) at (u1) {};
		\node[label=right:$v$] (v1rightnode) at (v1right) {};
		\node[label=left:$u$] (u1rightnode) at (u1right) {};
		
		\node[label=above:$v'$] (v2node) at (v2) {};
		\node[label=right:$u'$] (u2node) at (u2) {};
		\node[label=above:$v'$] (v2rightnode) at (v2right) {};
		\node[label=right:$u'$] (u2rightnode) at (u2right) {};
		
		\node[label={[yshift=-1.5ex,xshift=-.2em,scale=.8]left:$\dum$}] (u1v1dumnode) at (u1v1dum) {};
		
		\node[label={[yshift=-1.5ex,xshift=.2em,scale=.8]right:$\dum'_1$}] (u2v2dum1node) at (u2v2dum1) {};
		\node[label={[yshift=-1.5ex,xshift=.2em,scale=.8]right:$\dum'_2$}] (u2v2dum2node) at (u2v2dum2) {};
		
		\draw[-Tip] (u1node)--(v1node);
		\draw[-Tip] (u2node)--(v2node);
		
		\draw[-Tip] (u1rightnode)--(u1v1dumnode);
		\draw[-Tip] (u1v1dumnode)--(v1rightnode);
		\draw[-Tip] (u2rightnode)--(u2v2dum1node);
		\draw[-Tip] (u2v2dum1node)--(u2v2dum2node);
		\draw[-Tip] (u2v2dum2node)--(v2rightnode);
		
		\draw[dotted,rounded corners] (-2.7,-5.5) rectangle (1.4,.5);
		\draw[dotted,rounded corners] (2.8,-5.5) rectangle (7.2,.5);
		\draw[bend left=30,->,dashed] (1.6,-0)--(2.6,-0);
		\draw[bend left=30,->,dashed] (1.6,-1)--(2.6,-1);
		\draw[bend left=30,->,dashed] (1.6,-2)--(2.6,-2);
		\draw[bend left=30,->,dashed] (1.6,-3)--(2.6,-3);
		\draw[bend left=30,->,dashed] (1.6,-4)--(2.6,-4);
		\draw[bend left=30,->,dashed] (1.6,-5)--(2.6,-5);
		
		\node[draw=none,fill=none] (empty) at (0,-6.15) {};
		
		\end{tikzpicture}

		\caption{Two jumping edges on the left and their resolving division on the right.}\label{Fig-jumping-edge}
	\end{figure}
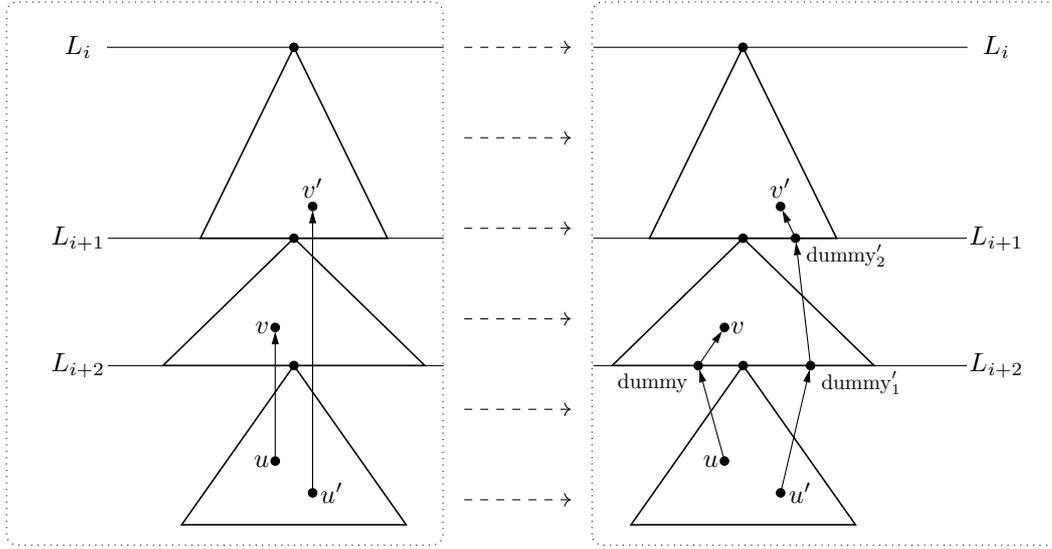

	\begin{lemma}\label{jumping-edges}
		We can subdivide the jumping edges in $C_n$ in a way that renders the subcircuit-wise evaluation possible without increasing the size beyond $O(N)$.  
	\end{lemma}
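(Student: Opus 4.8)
The plan is to build an \emph{augmented} circuit $\hat C_n$ from $C_n$ by subdividing every level-jumping edge so that $\hat C_n$ computes the same function, has the same depth, and — crucially — contains no edge that runs across a boundary level; this last property will guarantee that every edge of $\hat C_n$ sits inside some $L_i$-down-circuit or some $L_i$-up-circuit, which in turn makes the band-by-band evaluation of the subcircuits correct. Concretely, for a jumping edge $e=(u,v)$ with $b=\level(v)<\level(u)=a$ I would replace $e$ by a directed path $u\to d_1^e\to\cdots\to d_{a-b-1}^e\to v$ of fresh ``dummy'' gates of fan-in and fan-out $1$, each computing the identity, with $d_j^e$ placed on level $a-j$; carrying this out simultaneously for all jumping edges, with pairwise disjoint dummy sets, leaves the level of every original node unchanged and yields $\hat C_n$. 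Since a dummy only forwards the value on its unique in-edge and we have merely subdivided pre-existing edges, $\hat C_n$ and $C_n$ compute the same function and $\depth(\hat C_n)=\depth(C_n)$.

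Next I would establish the covering property. After the subdivision, no edge of $\hat C_n$ has a multiple of $s$ strictly between the levels of its two endpoints: inside an un-subdivided edge this held already (otherwise that edge would have been a jumping edge), and inside a subdivided path consecutive levels differ by exactly one. I would then prove the elementary fact that an edge $(x,y)$ with no boundary level strictly between $\level(y)$ and $\level(x)$ lies in the $L_i$-down-circuit of any level-$L_i$ node reachable from $x$ whenever $L_i\le\level(y)$, and in the $L_i$-up-circuit of any level-$L_i$ node that reaches $x$ whenever $L_i\ge\level(x)$; choosing $L_i$ as the largest multiple of $s$ not exceeding $\level(y)$ (respectively the smallest one not below $\level(x)$) supplies such a node, and such a node exists at all because, no edge of $\hat C_n$ crossing a boundary level, every downward path from $x$ to the output necessarily passes through level $L_i$. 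From this it follows that every $L_i$-down-circuit becomes evaluable once the values on level $L_{i+1}$ are known, and the $L_i$-up-circuits consume only level-$L_{i+1}$ values that the $L_{i+1}$-down-circuits already produced, so iterating from the top band downward evaluates all of $\hat C_n$.

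The main obstacle is the size bound. The number of new edges equals $\sum_{e=(u,v)\textnormal{ jumping}}\bigl(\level(u)-\level(v)-1\bigr)$, which I would re-express by counting, for each boundary level $L_k$, the edges of $C_n$ that strictly cross it, i.e.\ as $\sum_{k\ge1}\bigl|\{(u,v)\in E_n:\level(v)<L_k<\level(u)\}\bigr|$. To keep this in $O(N)$ I would invoke that $C_n$ has fan-in and fan-out bounded by the constant $\Delta$, that a circuit whose description occupies $N_\textnormal{O}\le N$ bits has only $O(N/\log n)$ edges (each costing $\Theta(\log n)$ bits to name its endpoints), and that $s=\Theta(\log n)$; a careful accounting of the cut sizes along these lines, with the hypothesis $\epsilon<1/2$ furnishing the slack exactly as it does in Corollary~\ref{Sort-Kar}, should bound $\size(\hat C_n)$ by $O(N)$. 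I expect this accounting — together with the subsequent check that distributing $\hat C_n$ and its subcircuits over the reducers still respects the $O(N^{1-\epsilon})$ per-reducer memory limit — to be the only genuinely technical part; the construction and its correctness are routine once the notion of a jumping edge and the down/up-circuit characterization are in hand.
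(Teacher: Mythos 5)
Your construction—replacing each jumping edge $(u,v)$ by a chain of $\level(u)-\level(v)-1$ identity gates, one per intermediate level—is precisely the subdivision that the paper's proof explicitly declines to perform, and it genuinely breaks the size bound you are supposed to establish. The number of inserted dummies is $\sum_{(u,v)\text{ jumping}}(\level(u)-\level(v)-1)$, and bounded fan-in and fan-out do not control this quantity: take a backbone path of length $D=\depth(C_n)$ ending at the output, attach $k$ nodes to its top via a binary in-tree (so each acquires level about $D$), attach $k$ further nodes just above the output via another binary tree, and join the two families by $k$ edges; all degrees are at most $3$, the circuit has $\Theta(D+k)$ edges, yet the $k$ jumping edges each span $\Theta(D)$ levels. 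With $k=\Theta(m)$ and $D=\Theta(\log^{i+1}n)$ your augmented circuit has $\Theta(m\cdot\depth(C_n))=\Theta(N\log^{i}n)$ edges, which is not $O(N)$ for $i\ge1$. Your proposed rescue does not help: the claimed identity between $\sum_{e}(\level(u)-\level(v)-1)$ and $\sum_{k\ge1}\bigl|\{(u,v)\in E_n:\level(v)<L_k<\level(u)\}\bigr|$ is false (the former counts every intermediate level, the latter only the boundary levels $L_k$), and in any case no accounting can salvage a bound that fails on concrete instances.

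The paper's proof sidesteps this by inserting \emph{at most two} dummy nodes per jumping edge, placed only at the boundary levels nearest the two endpoints ($L_{i_u+1}$ and $L_{i_v}$), so the size grows by at most $O(N)$ additively. The price is that the remaining middle edge $(\dum_1,\dum_2)$ still spans many bands and is covered by no ordinary subcircuit; this is handled by adjusting the definitions so that each such edge is by itself both a $\dum_1$-down-circuit and a $\dum_2$-up-circuit, with the evaluation in Lemma~\ref{updowncircuits} forwarding values across these long edges directly. If you want to retain your cleaner invariant that no edge of the augmented circuit crosses a boundary level, you would need a structural bound on the total level-span of the jumping edges of an $\NC$ circuit, and no such bound holds; adopting the two-dummy construction together with the adjusted notion of down- and up-circuits is what closes the gap.
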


	\begin{proof}
		Let $((j_u, \level(u), u),(j_v, \level(v),v))$ be a jumping edge, where $L_{i_u} \leq \level(u) \leq L_{i_u+1}$, $L_{i_v} \leq \level(v) \leq L_{i_v+1}$ and $i_u < i_v$. If $i_u=i_v-1$, then this edge is divided into two edges $((j_u, \level(u), u), \dum)$ and $(\dum, (j_v, \level(v),v))$, introducing a new node \dum{} of the $\id$ kind with $\level(\dum)=i_v$.
		If $i_u \leq i_v-2$, then this edge is divided into three edges $((j_u, \level(u), u), \dum_1)$, $(\dum_1,\dum_2)$ and $(\dum_2, (j_v, \level(v),v))$, introducing two new nodes with $\level(\dum_1)=i_u+1$, $\level(\dum_2)=i_v$.
		Having divided the jumping edges in this way, the newly created edges are all part of some $\dum$-down-circuit or  $\dum$-up-circuit, except for edges of the form $(\dum_1,\dum_2)$. Note that we cannot further subdivide the edges of the form $(\dum_1,\dum_2)$ because we would exceed the size limit on the circuit otherwise. The most convenient way to deal with this is to adjust our definition of down-circuits and up-circuits such that every edge of the form $(\dum_1,\dum_2)$ is considered to be both a $\dum_1$-down-circuit and a $\dum_2$-up-circuit on its own. This way, every edge in the augmented circuit is included in some down-circuit or up-circuit. 
		Note that this augmentation can be performed in a single round and that the size of the augmented circuit is in $O(N)$.
		In what follows, we consider $C_n$ to be this augmented circuit. 
	\end{proof}

	\subsubsection{Construction of Subcircuits in Reducers}
	Having described the subcircuits on which the evaluation of the entire circuits will be based we now need to show how to split and construct them in the $\ell$ different reducers. 
	In each reducer, We start with the nodes $v$ contained in it that satisfy $\level(v)=L_i$ for any $i$ and the associated $v$-down-circuits and $v$-up-circuits of depth 1. We then iteratively increase the depth one by one, until the full $L_i$-down-circuits and $L_i$-up-circuits of depth up to $s$ are constructed. Note that the nodes of any level $L_i$ and their corresponding circuits may scattered across multiple reducers since edges were split equally among them according to their the sorting index and not depending on the level. 
	We therefore need to carefully implement a communication scheme that allows each reducer to encode requests for missing edges required in the construction, which are then delivered to them in multiple rounds, without exceeding any of the memory or time bounds. 
	Taking care of all these details, we obtain the following lemma. 
	\begin{lemma}\label{computing-down-up-circuits}
		Given $C_n$, all $L_i$-down-circuits and $L_i$-up-circuits can be constructed in $O(\log n)$ $\DMRC$-rounds whenever $0<\epsilon<1/2$. 
	\end{lemma}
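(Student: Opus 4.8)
The plan is to construct the depth-$s$ down- and up-circuits by \emph{iterative deepening}: starting from depth $1$, we grow every relevant subcircuit one level at a time for $s=\formeralpha(\log n)/\log\Delta\in O(\log n)$ rounds, each deepening round being realized by a short routing sub-computation among the $\ell$ reducers. For the base case I would first build, in a constant number of $\DMRC$-rounds, the depth-$1$ down-circuit $C_1^{\down}(v)$ and up-circuit $C_1^{\up}(v)$ of every node $v$. After the preceding step each reducer$_q$ holds the block $C_n^q$ of $O(d)$ sorted edges; a single map step then routes each stored edge $((j_u,\level(u),u),(j_v,\level(v),v))$ both to the reducer owning $v$, keyed by $j_v\div d$, where it is appended to $C_1^{\down}(v)$, and, as a second copy, to the reducer owning $u$, keyed by $j_u\div d$, where it is appended to $C_1^{\up}(u)$. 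Since every node of an $\NC$-circuit has in- and out-degree bounded by the constant $\Delta$, each reducer receives only $O(\Delta d)\subseteq O(N^{1-\epsilon})$ edges, so the memory bound is respected. A further constant-round sort-and-pack step---using Corollary~\ref{Sort-Kar} exactly as in Lemma~\ref{PBPtoINq}---redistributes the nodes $v$ with $\level(v)$ a multiple of $s$, i.e., the roots of the $L_i$-circuits we must grow, so that each reducer becomes responsible for at most $\beta=N^{1-\epsilon-\formeralpha}$ of them, while the depth-$1$ circuits of all nodes stay available, each keyed by its owner.

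The deepening step itself proceeds as follows. Suppose a reducer holds a partial down-circuit $C_k^{\down}(v)$ of depth $k<s$. Its \emph{frontier} consists of the nodes $u$ in it with $\level(u)=\level(v)+k$, and to obtain $C_{k+1}^{\down}(v)$ it suffices to glue onto each such $u$ the depth-$1$ down-circuit $C_1^{\down}(u)$. Since $C_1^{\down}(u)$ lives in the reducer owning $u$, the reducer holding $v$ emits, for each frontier node $u$, a request keyed by $j_u\div d$ carrying the identities of $u$, of the root $v$, and of the requester; the owner answers with $C_1^{\down}(u)$, and after the shuffle the requester assembles $C_{k+1}^{\down}(v)$. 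Up-circuits are deepened by the mirror-image procedure. Every partial circuit has size $O(\Delta^k)\subseteq O(\Delta^s)=O(N^{\formeralpha})$ throughout, so a reducer responsible for $\beta$ of them stays within its $O(N^{1-\epsilon})$ memory budget; after $s\in O(\log n)$ rounds all $L_i$-down-circuits and $L_i$-up-circuits are in place and can be handed to the final evaluation step. Note that $\formeralpha$, and hence $\beta\ge 1$, exist precisely because $\epsilon<1/2$.

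The main obstacle is \textbf{load-balancing the request--response traffic} so that all four $\MRC$ constraints hold in every one of the $\Theta(\log n)$ rounds, not merely in aggregate. The danger is a \emph{hot} frontier node $u$: by the fan-out bound, $u$ can lie on the frontier of up to $\Theta(\Delta^k)$ distinct down-circuits scattered across many reducers, so the naive scheme would force the single reducer owning $u$ to ingest $\Theta(\Delta^k)$ requests and to emit $\Theta(\Delta^k)$ copies of $C_1^{\down}(u)$, overrunning its $O(N^{1-\epsilon})$ input/output bound even though the total shuffled volume still fits the $O(N^{2(1-\epsilon)})$ pair budget---here is where $\formeralpha<1-2\epsilon$ is used, to keep the overall traffic $N^{1+\formeralpha}$ below $N^{2(1-\epsilon)}$. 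The remedy is never to funnel a hot node's data through one reducer in one round: identical requests are coalesced, and each response is fanned out to its requesters through a balanced replication tree spread over $O(\log n)$ auxiliary rounds, with all nodes' replications running simultaneously and each reducer carrying only $O(N^{1-\epsilon})$ of replicated data per round---which forces the requesters of a given node to be addressed in a fixed, index-sorted order so that the relaying work can be partitioned evenly. Scheduling this replication against the iterative deepening so that the overall round count remains $O(\log n)$ rather than $O(\log^2 n)$, and then verifying round by round that reducer memory stays in $O(N^{1-\epsilon})$, the shuffle stays in $O(N^{2(1-\epsilon)})$, and every primitive is computable in polynomial time with $O(\log N)$-bit words, is where the bulk of the technical effort lies.
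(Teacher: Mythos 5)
Your overall architecture is the paper's: build depth-$1$ up- and down-circuits by routing each edge to the reducers owning its two endpoints, then deepen all $L_i$-circuits one level per constant number of rounds via a request--response protocol, for $s=\formeralpha(\log n)/\log\Delta\in O(\log n)$ iterations, with the memory accounting $\beta\cdot O(\Delta^s)=O(N^{1-\epsilon})$ per reducer and $O(N^{1+\formeralpha})\subseteq O(N^{2(1-\epsilon)})$ total. The gap is in your third paragraph. First, the congestion you describe for a \emph{single} hot node is a miscalculation: a node $u$ is requested by at most $O(\Delta^k)\subseteq O(\Delta^s)=O(N^{\formeralpha})$ roots, and since $\formeralpha<1-2\epsilon<1-\epsilon$ this is far below the $O(N^{1-\epsilon})$ per-reducer bound. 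What can actually overflow is the \emph{aggregate} over all nodes owned by one answering reducer --- and that happens only because you chose ownership blocks of size $d=N^{1-\epsilon}$, giving a worst case of $d\cdot\Delta^s=N^{1-\epsilon+\formeralpha}$ incoming requests. The paper avoids the problem entirely by making the ownership blocks the same size $\beta=N^{1-\epsilon-\formeralpha}$ as the root blocks, so that the answering side receives at most $\beta\cdot O(\Delta^s)=O(N^{1-\epsilon})$ requests and emits $O(\Delta)$ edges per request, i.e., the naive protocol already meets every constraint in every round. No coalescing, replication trees, or load balancing is needed.

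Second, and more seriously, the remedy you do propose is left unexecuted at exactly the point on which the lemma stands or falls: you concede that fanning responses out through $O(\log n)$-round replication trees threatens an $O(\log^2 n)$ total round count and defer the interleaving argument as ``the bulk of the technical effort.'' An $O(\log^2 n)$ bound here would only yield $\NC^{i+1}\subseteq\DMRC^{i+1}$, which is vacuous, so as written the proof does not establish the statement. The fix is not to schedule the replication more cleverly but to recognize that, with the block size set to $\beta$, the obstacle does not exist. (A further small inaccuracy: the nodes needing extension at step $k$ are not only those at level $\level(v)+k$, since edges may skip levels inside a band between $L_i$ and $L_{i+1}$; one must query every node of the partial circuit whose in- respectively out-neighborhood is not yet confirmed complete, which does not change the asymptotics.)
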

	
	\begin{proof}
		In the first round, the map function $\mu_1$ is defined such that each reducer$_q$ is assigned (via the choice of the key) $\beta$ nodes of the form $\langle j; (\level(v),v)\rangle{}$
		and directed edges adjacent to these nodes. %
		Note that one edge can thus be assigned to two different reducers, once as an outgoing, once as in in-going edge. %
		Specifically, we define 
		\begin{align*}\mu_1(\langle\,j\,;\,(\level(v),v)\,\rangle) =
			\{\langle\,j \div{}\beta\,;\,(j,\level(v),v)\,\rangle\}
		\end{align*}
		for the key-value pairs representing nodes and
		\begin{align*}
			\mu_1(\langle i;(\,(j,\level(v),v),(j',\level(v'),v')\,)\rangle)=\{\ &\langle j\div{}\beta;\  ((j,\level(v),v),(j',\level(v'),v'))\rangle,\\
			&\langle j'\!\!\div{}\beta;\, ((j,\level(v),v),(j',\level(v'),v'))\rangle\ \}
		\end{align*}
		for the key-value pairs representing edges.
		
		In the subsequent execution of $\rho_1$, each reducer can therefore directly construct the $v$-up-circuits and $v$-down-circuits of depth $1$ for its  $\beta$ assigned nodes. 
		We will now describe how some of these initial circuits, namely those on levels $L_i$ for any $i\in\range{r}$, can be used to extended to full $L_i$-up-circuits and $L_i$-down-circuits by iteratively increasing the circuit depth one by one in the following way:

		Let $v$ be a node with $\level(v)=L_i$ in reducer$_q$ for any $i\in\range{r}$ and $q\in\range{\ell}$. 
		We want to extend $C_{1}^{\down}(v)$ and $C_{1}^{\up}(v)$ to $C_{2}^{\down}(v)$ and $C_{2}^{\up}(v)$, respectively. 
		Let $u_{\In}$ ($u_{\Out}$, resp.) be any node of in-degree (out-degree, resp.) $0$ in it, %
		that is, any node that potentially needs to be extended by one or multiple edges. 
		These extending edges are not necessarily available in reducer$_q$, however. We need to find out which reducer stores them---if there are any---and then request these edges from it in some way. 
		To determine the right reducer, we make use of the sorting index stored alongside each node, even when part of an edge. 
		Any edge $(u_\In,v)$ that we need to check for possible extensions is in fact represented as $\langle q\;,\;(\,(j_{u_\In},\level(u_\In),u_\In)\,,\,(j_v,\level(v),v)\,)\;\rangle$ in reducer$_q$. 
		The number of the reducer containing the downward extending edges is now retrieved as $\To(u_\In)=j_{u_\In}\div\beta$. 
		Analogously, the upward extending edges for an edge $(v,u_\Out)$ are to be found in reducer$_\To(u_\Out)$, where $To(u_\Out)=j_{u_\Out}\div\beta$. 
		We now know whom to ask for edges extending  the subcircuit beyond node $u$, namely reducer number $\To(u)$. Let $\From(v)=q$ denote the number of the reducer sending the request, which we encode in form of the key-value pair 
		$\langle q; (u, \To(u), \From(v))\rangle$. 
		
		Each reducer$_q$ does the above for every node with possible extending edges and also passes along to the mapper all $v$-up-circuits and $v$-down-circuits constructed so far unaltered. 
		This concludes the first round. 

		In the second round, the map function $\mu_2$ naturally re-assigns 
		$\langle q; (u, \To(u),\From(v))\rangle$ to reducer$_{\To(u)}$, and returns the $v$-up-circuits and $v$-down-circuits to the reducers that sent them. %
		Having received the edge request of the form of $\langle \To(u); (u, \To(u),\From(v))\rangle$ while executing $\rho_2$ reducer$_{\To(u)}$ now sends all edges potentially useful to reducer$_{\From(v)}$---that is, the entire 
		$u$-up-circuit and the entire $u$-down-circuit of depth $1$---to the next mapper in the form of a pair $(\From(v),e)$ for every edge containing node $u$. 
		As before, all other circuits constructed so far get passed along without modification as well.

		In the third round, the map function $\mu_3$ routes the requested edges to the requesting reducer by generating the key-value pairs $\langle \From(v); (\From(v), e\rangle{}$. In the reducing step, which implements the same reduce function $\rho_1$ as in the first round, reducer  reducer$_{\From(v)}$ now finally has all fully extend the $v$-up-circuits and $v$-down-circuits to depth 2. 

		Since performing the two rounds $\mu_2, \rho_2, \mu_3, \rho_1$ deepens the $L_i$-up-circuits and $L_i$-down-circuits by one level in the way just seen,
		the complete $L_i$-up-circuits and $L_i$-down-circuits can be constructed by repeating these two rounds $\subcircuitDepth$ times. 

		It is again clear that the memory and I/O requirements of the reducers are all met in every round since the input size and output size are in $O(d)$ for each reducer. 
		Moreover,the total memory for storing the $v$-up-circuits and $v$-down-circuits is 
		$\beta \cdot N\in O(N^{1+\formeralpha})$ because $C_n$ has at most $N_\textnormal{O}\in O(N)$ nodes. 
		Since the constant $\formeralpha$ was chosen such that $0<\formeralpha\leq 1-2\epsilon$, we have $N^{1+\formeralpha}\in O(N^{2(1-\epsilon)})$ and thus all up-circuits and down-circuits can be stored in the respective reducers. 
	\end{proof}

	\subsubsection{Evaluation Via Subcircuits}
	The main idea in the proof of the following lemma is to compute the evaluation values subcircuit-wise, starting with the deepest ones, and then iteratively moving up the circuit in $\depth(C_n)/s$ rounds, passing on the newly computed values to the right reducers, until the value of the unique output node is known. 
	
	\begin{lemma}\label{updowncircuits}
		If all up-circuits and down-circuits are constructed in the proper reducers, $C_n$ can be evaluated in $O(\depth(C_n)/\log n)$ $\DMRC$-rounds.
	\end{lemma}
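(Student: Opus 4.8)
The plan is to evaluate $C_n$ band by band, sweeping from the input side toward the output node. Recall that the augmented circuit is partitioned level-wise into the bands $[L_i,L_{i+1}]$, that every edge of a band either lies inside some $L_i$-down-circuit (equivalently inside some $L_i$-up-circuit) or is one of the long edges $(\dum_1,\dum_2)$ introduced by the jumping-edge augmentation, and that, since $s=\formeralpha(\log n)/\log\Delta\in\Theta(\log n)$, there are only $R=\lceil\depth(C_n)/s\rceil\in O(\depth(C_n)/\log n)$ bands. Throughout the computation I maintain a \emph{value pool}: for every node $v$ that has already been evaluated, a key-value pair $\langle j_v;(\level(v),v,\val(v))\rangle$ keyed by the sorting index $j_v$, which the map function of every round passes along unchanged. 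Initially the pool contains precisely the input nodes, whose values are delivered together with the $\partialassignmentq$. The invariant I aim to preserve is that, just before band $i$ is processed, the pool contains the value of every node of level $L_{i+1}$, of every input node, and of the target of every long edge whose source band has already been handled.

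Processing one band $i$ takes a constant number of $\DMRC$-rounds. First, every reducer that stores $L_i$-down-circuits reads off, from the stored subcircuit structure, the sorting indices of the leaves of each of those (at most $\beta=N^{1-\epsilon-\formeralpha}$) down-circuits of size $O(n^\formeralpha)$, and emits for each such leaf $u$ a request tagged with the requesting reducer. In the shuffle, each request for a node $u$ is grouped with the pool pair $\langle j_u;\dots\rangle$, and the reducer responsible for key $j_u$ returns $\val(u)$ to each requester. Here I lean on bounded degree: a node $u$ of level $L_{i+1}$ is a leaf of the $L_i$-down-circuit of $v$ only when $v$ is one of its at most $\Delta^s\in O(n^\formeralpha)$ descendants of level $L_i$, so the reducer keyed by $j_u$ handles only $O(n^\formeralpha)\subseteq O(N^{1-\epsilon})$ requests, a reducer holding $\beta$ down-circuits sends and receives only $O(\beta\,n^\formeralpha)=O(N^{1-\epsilon})$ values, and the map outputs total $O(\beta\cdot N)=O(N^{1+\formeralpha})\subseteq O(N^{2(1-\epsilon)})$ because $\formeralpha<1-2\epsilon$. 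Once a reducer holds the values of all leaves of an $L_i$-down-circuit, it evaluates this circuit of size $O(n^\formeralpha)$ in polynomial time and $O(N^{1-\epsilon})$ space and obtains the value of its root, a node of level $L_i$; emitting these root values (keyed by their sorting indices) extends the pool to all of level $L_i$ and re-establishes the invariant for band $i-1$. The up-circuits, also available by hypothesis, are not needed for this pull-based scheme but could instead drive a push-based variant in which each newly computed level-$L_i$ value is sent directly to the reducers identified by the corresponding $L_i$-up-circuit.

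The long edges $(\dum_1,\dum_2)$ fit the same mechanism. Since $\dum_1$ lies on some level $L_{i_u}$ and is the root of its own (possibly trivial) $L_{i_u}$-down-circuit, $\val(\dum_1)$ is computed when band $i_u$ is processed; as the edge $(\dum_1,\dum_2)$ is stored next to $\dum_1$, the reducer computing $\val(\dum_1)$ also emits the pool pair $\langle j_{\dum_2};(\level(\dum_2),\dum_2,\val(\dum_1))\rangle$, which then simply waits in the pool until the band containing $\dum_2$ — necessarily a later band, as $\dum_2$ sits strictly below $\dum_1$ — pulls it like any other leaf value. After band $0$ has been processed, the value of the unique output node (the root of an $L_0$-down-circuit) lies in the pool, and a single final round extracts the accept/reject verdict. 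Altogether the evaluation uses $O(1)$ rounds per band plus $O(1)$ rounds of setup and teardown, i.e. $O(R)=O(\depth(C_n)/\log n)$ rounds, and we will have verified en route that every per-reducer memory bound $O(N^{1-\epsilon})$ and the global map-output bound $O(N^{2(1-\epsilon)})$ are respected.

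I expect the main obstacle to be exactly the bookkeeping of the second paragraph: reconciling two incompatible partitions — the level bands that dictate the evaluation order against the sorting-index blocks that dictate which reducer physically stores which down-circuit — by a request–response exchange that never lets any single reducer's input or output exceed $O(N^{1-\epsilon})$ nor the total map output exceed $O(N^{2(1-\epsilon)})$. The structural fact that makes this feasible, and hence the crux of the argument, is that bounded in- and out-degree limits the number of down-circuits containing any fixed boundary node to $\Delta^s=O(n^\formeralpha)$; without it the pull step could overflow a reducer. The long jumping edges are a secondary complication that, as sketched, dissolves once one notes that their source band is always processed strictly before their target band.
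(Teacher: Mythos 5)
Your proposal is correct and follows the same overall strategy as the paper's proof: evaluate the down-circuits band by band from the input side toward the output, spend $O(1)$ rounds per band so that the $O(\depth(C_n)/s)=O(\depth(C_n)/\log n)$ bands give the claimed round count, and let the values carried by the long dummy edges $(\dum_1,\dum_2)$ simply wait until their target band is reached. The one genuine difference is the routing mechanism. The paper \emph{pushes}: after a reducer evaluates a node $v$ at level $L_{i+1}$, it consults the $v$-up-circuit to learn which level-$L_i$ nodes $u$ (and hence which reducers $\To(u)$) need $\val(v)$, and emits $(\To(u),v,\val(v))$ accordingly; this is exactly why the up-circuits appear in the hypothesis of the lemma. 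You \emph{pull} instead: a value pool keyed by sorting index answers requests issued by the reducers holding the $L_i$-down-circuits, and the up-circuits become optional. Your variant costs one extra round per band (still $O(1)$) but is self-contained, and it makes explicit a congestion bound that the paper leaves implicit, namely that bounded out-degree limits to $\Delta^s\in O(n^\formeralpha)$ the number of down-circuits in which any fixed level-$L_{i+1}$ node occurs as a leaf, so no pool reducer is overloaded. (One cosmetic slip: $\beta\cdot N$ is $N^{2-\epsilon-\formeralpha}$ rather than $N^{1+\formeralpha}$; the bound you actually need, that the total communication is $O(N^{1+\formeralpha})\subseteq O(N^{2(1-\epsilon)})$, follows from there being $O(N)$ subcircuits of size $O(N^{\formeralpha})$ each, so the conclusion stands.) Either mechanism yields the lemma.
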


	\begin{proof}
		Without loss of generality, let $\depth(C_n)$ be divisible by $\subcircuitDepth$ and let $r=\depth(C_n)/s$. Once all $L_i$-down-circuits and $L_i$-up-circuits for all $i\in\{1,\ldots,r\}$ have been constructed, we can evaluate $C_n$ on the given input assignment. We begin by evaluating the $L_{r-1}$-down-circuits. Since every input node has its value assigned in a $v$-down-circuit, the $L_{r-1}$-down-circuits can be computed in the reducers containing these $v$-down-circuits. With the values of all nodes at level $L_{r-1}$ determined, we can send the necessary values to the $L_{r-2}$-down-circuits and, in the case of edges that were divided using two dummy nodes, to lower-level down-circuits. 
		Nodes at level $L_{r-1}$ that are necessary to compute $L_{r-2}$-down-circuits are described in the $L_{r-1}$-up-circuits.
		Any node $v$ at level $L_{r-1}$ that is necessary to compute $L_{r-2}$-down-circuits is described in the $v$-up-circuit. 
		Therefore, the output of the reducer$_q$ is as follows: 
		Let $v$ be at level $L_{r-1}$ and let $u_i$, for $i\in\{1,\ldots,k_v\}$, be the 
		nodes at level $L_{r-2}$ in the $v$-up-circuit.
		For each $v$ in reducer$_q$, it outputs $(\To(u_i),v,\val(v))$, where $\To(u_i)$ is the index of the reducer containing the $u_i$-down-circuit and $\val(v)$ is the value of $v$ determined in the computation of $v$-down-circuit.
		The reducer$_q$ also passes on all $v$-down-circuits and $v$-up-circuits contained in it. %
		
		In the next round, the map function 
		sends each 
		$(\To(u_{\In}),v,\val(v))$ 
		to the reducer containing the $u_{\In}$-down-circuit;
		that is, 
		it generates the key-value pair $\langle \To(u_{\In}); (v, \val(v))\rangle$. 
		Of course, the map function also passes along all $v$-down-circuits and $v$-up-circuits to the proper reducers.
		
		Since now each $L_{r-2}$-down-circuit is contained completely in a reducer that has gathered all values of nodes at level $L_{r-1}$ necessary to compute this subcircuit, all $L_{r-2}$-down-circuits can be computed in their reducers. 
		Now we can compute the values of nodes higher and higher up in the circuit, by iterating the last mapping-reducing function pair, until the value is finally known for the unique output node. 
		
		As before, we clearly stay within the memory and I/O buffer limits of each reducer. 
	\end{proof}

	\section{Conclusion and Research Opportunities}\label{Sec4}
	In a substantial improvement over all previously known results, we have shown that $\NC^{i+1} \subseteq \DMRC^{i}$ for all $i\in\N$. In the case of $\NC^1 \subseteq \DMRC^0$, we have proved this result for every feasible choice of $\epsilon$ in the model, that is, for $0<\epsilon\le1/2$. For $i>0$, we have shown the result to hold for all but one value, namely $\epsilon=1/2$.
	
	Achieving these two results required a detailed description of two different, delicate simulations within the MapReduce framework. For the case of $\NC^1$, which is particularly relevant in practice, we applied Barrington's theorem and simulated width-bounded branching programs~\cite{B}, whereas we directly simulated the circuits for the higher levels of the hierarchy. We emphasize that none of the two approaches can replace the other: Barrington's theorem only gives a characterization for the first level of the $\NC$ hierarchy and the second approach does not even yield $\NC^1\subseteq \MRC^0$. 
	(Recall that $\DMRC$ is just the deterministic variant of $\MRC$, so we have $\DMRC^i\subseteq\MRC^i$ for all $i\in\N$.)
	
	We would like to briefly address the small question that immediately arises from our result, namely whether it is possible to extend the inclusion $\NC^{i+1} \subseteq \DMRC^i$ of Theorem~\ref{ACiMRCi} to the case $\epsilon=1/2$. Going through all involved lemmas, we see that the two reasons that our proof does not work in this corner case are the sorting of the nodes using  Lemma~\ref{Sort-Kar} and the construction of the up-circuits and down-circuits in Lemma~\ref{computing-down-up-circuits}. Regarding the former, we can avoid the restriction by allowing randomization. For the latter, it is not clear that this can be achieved, however. If there was any way to construct the levels for $\epsilon=1/2$ as well, then Theorem~\ref{ACiMRCi} would immediately extend to the full range $0<\epsilon\le 1/2$ of feasible choices for $\epsilon$. 
	
	Besides dealing with the small issue mentioned above, the natural next step for future research is to take the complementary approach and address the reverse relationship: Having shown in this paper how to obtain efficient deterministic MapReduce algorithms for parallelizable problems, we now aim to include the largest possible subset of $\DMRC^i$ into $\NC^{i+1}$ for all $i\in\N$. 
	By simply padding a $\mathcal{P}$-complete language so as to include it in $\DMRC^0$, Karloff et al.~\cite[Thm. 4.1]{KSV} proved that $\DMRC^0\subseteq \NC$ would imply $\mathcal{P}=\NC$, an equality generally deemed unlikely to hold. 
	Thus we cannot expect to prove $\DMRC^i=\NC^{i+1}$, but try to determine $\DMRC^i\cap\NC^{i+1}$ in order to finally settle the long-standing open question of how exactly the MapReduce classes correspond to the classical classes of parallel computation. 

	\bibliographystyle{plainurl}
	
\end{document}